\title{On the Limits of Decision:\\ the Adjacent Fragment of First-Order Logic}
\titlerunning{On the Limits of Decision: the Adjacent Fragment of First-Order Logic}
\author{Bartosz Bednarczyk}
{Computational Logic Group, Technische Universit{\"a}t  Dresden, Germany \and 
Institute of Computer Science, University of Wroc\l aw, Poland
\and \url{https://bartoszjanbednarczyk.github.io/}}
{bartosz.bednarczyk@cs.uni.wroc.pl}
{https://orcid.org/0000-0002-8267-7554}
{supported by the ERC Consolidator Grant No.~771779 (\href{https://iccl.inf.tu-dresden.de/web/DeciGUT/en}{DeciGUT}).}
\author{Daumantas Kojelis}
{Department of Computer Science, University of Manchester, UK
	\and \url{https://daumantaskojelis.github.io/}
}
{daumantas.kojelis@manchester.ac.uk}
{https://orcid.org/0000-0002-1632-9498}
{}
\author{Ian Pratt-Hartmann}
{Department of Computer Science, University of Manchester, UK \and
Institute of Computer Science, University of Opole, Poland
\and \url{https://www.cs.man.ac.uk/~ipratt/}
}
{ian.pratt@manchester.ac.uk}
{https://orcid.org/0000-0003-0062-043X}
{supported by the NCN grant 2018/31/B/ST6/03662.}
\authorrunning{B. Bednarczyk, D. Kojelis and I. Pratt-Hartmann}
\keywords{
decidability,
satisfiability,
variable-ordered logics,
complexity
}
\definecolor{darkmidnightblue}{rgb}{0.0, 0.2, 0.4}
\definecolor{persianplum}{rgb}{0.44, 0.11, 0.11}
\def\desclabel#1#2{\begingroup
\def\@currentlabel{#1}%
#1\label{#2}\endgroup
}
\tikzset{
diagonal fill/.style 2 args={fill=#2, path picture={
\fill[#1, sharp corners] (path picture bounding box.south west) -|
                         (path picture bounding box.north east) -- cycle;}},
reversed diagonal fill/.style 2 args={fill=#2, path picture={
\fill[#1, sharp corners] (path picture bounding box.north west) |- 
                         (path picture bounding box.south east) -- cycle;}}
}
\newtheorem*{theorem*}{Theorem}
\newtheorem*{lemma*}{Lemma}
\tikzstyle{element}=[circle,draw=blue,fill=blue,inner sep=0pt,minimum size=3]
\tikzset{->-/.style={decoration={
			markings,
			mark=at position #1 with {\arrow{>}}},postaction={decorate}}}
\tikzset{-<-/.style={decoration={
			markings,
			mark=at position #1 with {\arrow{<}}},postaction={decorate}}}
\tikzset{-|>-/.style={decoration={
			markings,
			mark=at position #1 with {\arrow{open triangle 60}}},postaction={decorate}}}
\definecolor{ao(english)}{rgb}{0.0, 0.5, 0.0}
\renewcommand{\phi}{\varphi}
\newcommand{\logic}[1]{\mathsf{#1}}
\newcommand{\FO}{\ensuremath{\logic{FO}}}
\newcommand{\FOt}{\FO^{2}}
\newcommand{\FL}{\ensuremath{\mathcal{FL}}}          % Fluted fragment
\newcommand{\FLv}[1]{\ensuremath{\mathcal{FL}^{#1}}} % ... superscripted
\newcommand{\AF}{\ensuremath{\mathcal{AF}}}          % Fluted fragment
\newcommand{\AFv}[1]{\ensuremath{\mathcal{AF}^{#1}}} % ... superscripted
\newcommand{\GA}{\ensuremath{\mathcal{GA}}} % Guarded adjacent fragment
\newcommand{\GF}{\ensuremath{\mathcal{GF}}} % Guarded fragment
\newcommand{\complexityclass}[1]{\textsc{#1}} % any complexity class
\newcommand{\NP}{\complexityclass{NPTime}}
\newcommand{\PSpace}{\complexityclass{PSpace}}
\newcommand{\AExpSpace}{\complexityclass{AExpSpace}}
\newcommand{\NExpTime}{\complexityclass{NExpTime}} % nondeterministic exponential time
\newcommand{\TwoExpTime}{\complexityclass{2ExpTime}} % doubly-exponential time
\newcommand{\Tower}{\complexityclass{Tower}} %
\newcommand{\set}[1]{\ensuremath{\{ #1 \}}} % Notation for sets
\newcommand{\sizeof}[1]{|\!|#1|\!|}         % Size
\newcommand{\bA}{\ensuremath{\mathbf{A}}} % Sets of adjacent functions
\newcommand{\bx}{\ensuremath{\mathbf{x}}} % The variables x_1, ..., x_k
\newcommand{\bu}{\ensuremath{\mathbf{z}}} % The variables z_1, ..., z_k
\newcommand{\bv}{\ensuremath{\mathbf{z}'}} % The variables z'_1, ..., z'_k
\newcommand{\N}{\ensuremath{\mathbb{N}}} % Natural numbers
\newcommand{\fA}{\ensuremath{\mathfrak{A}}} % Structure
\newcommand{\fB}{\ensuremath{\mathfrak{B}}} % Structure
\newcommand{\fC}{\ensuremath{\mathfrak{C}}} % Structure
\newcommand{\ft}{\ensuremath{\mathfrak{t}}} % tower function
\newcommand{\Atp}{\ensuremath{\mbox{\rm Atp}}}  % Set of adjacent types
\newcommand{\atp}{\ensuremath{\mbox{\rm atp}}}  % Adjacent types
\newcommand{\tp}{\ensuremath{\mbox{\rm tp}}}  % Adjacent types
\newcommand{\con}{\ensuremath{\mbox{\rm con}}}  % Connector types
\newcommand{\blank}{\ensuremath{\llcorner \hspace{-1mm} \lrcorner}} % blank symbol 
\newcommand{\val}[1]{\ensuremath{\textsc{val}^{#1}}} % value function (bit strings -> naturals)
\newcommand{\bin}[2]{\ensuremath{\textsc{bin}^{#1}_{#2}}} % value function (naturals -> bin in x, y)
\newcommand{\tm}{\ensuremath{\mathcal{M}}}  % Turing Machine
\newcommand{\tmst}{\ensuremath{\mathcal{Q}}}  % Turing Machine States
\newcommand{\tmal}{\ensuremath{\mathcal{S}}}  % Turing Machine Alphabet
\newcommand{\tmtrl}{\ensuremath{\mathcal{T}_l}}  % Turing Machine Transition Left
\newcommand{\tmtrr}{\ensuremath{\mathcal{T}_r}}  % Turing Machine Transition Right
\newcommand{\tmtra}{\ensuremath{\mathcal{T}_\eta}}  % Turing Machine Transition (arbitrary)
\newcommand{\tmuni}{\ensuremath{\pmb{\forall}}}  % Turing Machine Universal State
\newcommand{\tmexi}{\ensuremath{\pmb{\exists}}}  % Turing Machine Existential State
\newcommand{\tmrej}{\ensuremath{\pmb{\times}}}  % Turing Machine Rejecting State
\newcommand{\tmct}{\ensuremath{\mathcal{G}}}  % Turing Machine Configuration Tree
\newcommand{\tmctver}{\ensuremath{\mathcal{V}}}  % Turing Machine Configuration Tree Vertices
\newcommand{\tmctverl}{\ensuremath{\mathcal{E}_l}}  % Turing Machine Configuration Tree Left Edge
\newcommand{\tmctverr}{\ensuremath{\mathcal{E}_r}}  % Turing Machine Configuration Tree Right Edge
\newcommand{\tmctvera}{\ensuremath{\mathcal{E}_\eta}}  % Turing Machine Configuration Tree Left (arbitrary)
\newcommand{\sheq}[2]{\ensuremath{\textsc{eq}(#1, #2)}} % equals short-hand
\newcommand{\shless}[2]{\ensuremath{\textsc{less}(#1, #2)}} % less short-hand
\newcommand{\str}[1]{{\mathfrak{#1}}}
\newcommand{\sig}{\mathsf{sig}} % signatures
\newcommand{\bfz}{\ensuremath{\mathbf{0}}} % bold 0
\newcommand{\bfo}{\ensuremath{\mathbf{1}}} % bold 1
\begin{document}

\maketitle
%!TEX root = ../main.tex

\begin{abstract}
We define the \emph{adjacent fragment} $\AF$ of first-order logic, obtained by restricting the sequences of variables occurring as arguments in atomic formulas. 
The adjacent fragment generalizes (after a routine renaming) two-variable logic as well as the fluted fragment.
We show that the adjacent fragment has the finite model property, and that its satisfiability problem is no harder than for the fluted fragment (and hence is $\Tower$-complete). 
We further show that any relaxation of the adjacency condition on the allowed order of variables in
argument sequences yields a logic whose satisfiability and finite satisfiability problems are undecidable.
Finally, we study the effect of the adjacency requirement on the well-known guarded fragment ($\GF$) of first-order logic.
We show that the satisfiability problem for the guarded adjacent fragment ($\GA$) remains $\TwoExpTime$-hard, 
thus strengthening the known lower bound for $\GF$.
\end{abstract}

%!TEX root = ../main.tex
\section{Introduction}
\label{sec:intro}

The quest to find fragments of first-order logic for which satisfiability is algorithmically decidable has been a central undertaking of mathematical logic since  the 
appearance of Hilbert and Ackermann's {\em Grundz\"{u}ge der theoretischen Logik}~\cite{hilbert, book:ha50} almost a century ago. The great
majority of such fragments so far discovered, however, belong to just three families:
(i) quantifier prefix fragments~\cite{BorgerGG1997}, where we are restricted to prenex formulas with a specified quantifier sequence; 
(ii) two-variable logics~\cite{Henkin1967}, where the only logical variables occurring as arguments of predicates are $x_1$ and $x_2$; and (iii)
guarded logics~\cite{ABN98}, where quantifiers are relativized by atomic formulas featuring all the free variables in their scope.

There is, however, a fourth family of decidable logics, originating in the work of\linebreak
 W.V.O.~Quine~\cite{quine69}, and based on restricting the allowed sequences of variables occurring as arguments in atomic formulas. 
This family of logics, which includes the {\em fluted fragment}, the {\em ordered fragment} and the 
{\em forward fragment}, has languished in relative obscurity. In this paper, we investigate the potential for obtaining decidable fragments in this way, identifying a new fragment, which we call the {\em adjacent fragment}. This fragment not only includes the fluted, ordered and forward fragments, but also subsumes, in a sense we make precise, the two-variable fragment. We show that the satisfiability problem for the adjacent fragment is decidable, and determine bounds on its complexity.
 
To explain how restrictions on argument orderings work, we consider presentations of first-order logic without equality over purely relational signatures, employing individual variables from the alphabet $\set{x_1, x_2, x_3, \dots}$. Any atomic formula in this logic has the form~$p(\bar{x})$, where~$p$ is a predicate of arity $m \geq 0$ and $\bar{x}$ is a word over the alphabet of variables of length $m$. Call a first-order formula $\phi$ {\em index-normal} if, for any quantified sub-formula $Q x_{k}\, \psi$ of $\phi$, $\psi$ is 
a Boolean combination of formulas that are either atomic with free variables among $x_1, \dots, x_k$,
or have as their major connective a quantifier binding $x_{k+1}$. By re-indexing variables, any first-order formula can easily be written as a
logically equivalent index-normal formula. In the {\em fluted fragment}, denoted \FL, as defined by W.~Purdy~\cite{purdy96}, we confine attention to index-normal formulas, but additionally insist that any atom occurring in a context in which~$x_k$ is quantified have the form $p(x_{k - m +1} \cdots x_k)$, i.e.~$p(\bar{x})$ with $\bar{x}$ a {\em suffix} of
$x_1 \cdots x_k$. 
In the {\em ordered fragment}, due to A.~Herzig~\cite{herzig90}, by contrast,
we insist that $\bar{x}$ be a {\em prefix} of
$x_1 \cdots x_k$.  In the \emph{forward fragment}~\cite{Bednarczyk21}, 
we insist only that $\bar{x}$ be an {\em infix} of
$x_1 \cdots x_k$. 

All these logics have the finite model property, and hence are decidable for  satisfiability.
Denoting by $\FLv{k}$ the sub-fragment of \FL{} involving at most $k$ variables (free or bound), 
the satisfiability problem for $\FLv{k}$ is known to be in $(k{-}2)$-\NExpTime{} for
all $k \geq 3$, and $\lfloor k/2\rfloor$-\NExpTime-hard for all $k \geq 2$~\cite{phst19}.
Thus, satisfiability for the whole fluted fragment is \Tower-complete, in the system of trans-elementary complexity classes due to~\cite{schmitz16}.
By contrast, the satisfiability problem for the ordered fragment is known to be $\PSpace$-complete~\cite{herzig90}\cite{Jaakkola21}.
On the other hand, the apparent liberalization afforded by the forward fragment yields no difference in expressive power~\cite{BednarczykJ22}, and moreover
there is a polynomial time satisfiability-preserving reduction of the forward fragment to the fluted fragment~\cite{Bednarczyk21}.
The term ``fluted'' originates with Quine~\cite{quine76a}, and presumably invites us to imagine the atoms in formulas aligned in such a way that the variables form columns. 
%Analysis of Quine's original text suggests that he may in fact have had in mind what we are calling the \textit{ordered} fragment. However, following the work of Purdy, the term ``fluted fragment'' has now become established in the sense indicated above, and we shall not attempt to right any historical wrongs. 
Note that none of these fragments can state that a relation is reflexive or symmetric (see~\cite{BednarczykJ22} for a discussion~of~their~expressivity).

Say that a word $\bar{x}$ over the alphabet $\set{x_1, \dots, x_k}$ ($k \geq 0$) is \emph{adjacent} if the indices of neighbouring letters differ by at most 1. For example,
$x_3x_2x_1x_2x_2x_2x_3x_4x_3$ is adjacent, but $x_1x_3x_2$ is not.  The \textit{adjacent fragment}, denoted~\AF, 
is analogous to the fluted, ordered and forward fragments, but we allow any atom $p(\bar{x})$
to occur in a context where $x_k$ is available for quantification as long as $\bar{x}$ is an adjacent word over $\set{x_1, \dots, x_k}$. (A formal definition is given in Sec.~\ref{sec:preliminaries}.) 
As a simple example, the formula 
\begin{equation}
	\forall{x_1}\forall{x_2}\forall{x_3}\exists{x_4}\forall{x_5} \ \big(p(x_1 x_2 x_3 x_2 x_3 x_4 x_5) \to p(x_1 x_2 x_3 x_4 x_3x_4x_5)\big) 
\label{eq:simpleExample}
\end{equation}
is a validity of \AF, as can be seen by assigning $x_4$ the same value as $x_2$. 
Evidently, \AF{} includes the fluted, ordered and 
forward fragments; the inclusion is strict, since the
formulas $\forall x_1\, r(x_1x_1)$ and 
$\forall x_1 x_2 (r(x_1x_2)\rightarrow r(x_2x_1))$, stating that $r$ is reflexive and symmetric,
respectively, are in \AF. 
As every word over~$\{ x_1, x_2 \}$ is adjacent, we may transform any formula of the two-variable fragment without equality,  $\FOt$,
in polynomial time, to a logically equivalent formula of $\AF$.
The converse is true over signatures with predicates of arity at most two.
Since the system of basic multimodal propositional logic is, under the standard translation to first-order logic, included within
$\FOt$, this logic is similarly subsumed by \AF, as indeed
is its notational variant, the description logic $\mathcal{ALC}$ (see, e.g.~\cite{hsg04}). 

We show that the satisfiability problem
for the restriction of the adjacent fragment to formulas involving at most $k$ 
variables (free or bound) is in $(k{-}2)$-\NExpTime{} for all $k \geq 3$---and hence no more difficult than the $k$-variable fluted fragment, which it properly contains.
The critical step in our analysis is a lemma on the combinatorics of strings (Theorem~\ref{theo:main}), which may be of independent interest.
We also consider minimal relaxations of
adjacency involving the fragment with just three variables, and show that, in all cases of interest,
the satisfiability and finite satisfiability problems for the resulting logics are undecidable. Thus, adjacency is
as far as we can go in seeking decidable fragments based on straightforward argument ordering restrictions of the type envisaged by Quine. 

The
adjacent fragment is incomparable in expressive power to the guarded fragment.
Moreover, the satisfiability problem for the union of $\GF$ and $\AF$ is undecidable, 
as one can use adjacent formulas to introduce any $k$-ary universal relations, 
which makes $\GF$ as expressive as first-order logic.
Therefore, we study the effect of the adjacency restriction~on~$\GF$.
We investigate the complexity of satisfiability for the resulting logic, $\GA$, showing that the problem 
is \TwoExpTime-complete, thus sharpening the existing $\TwoExpTime$-hardness 
proof for~$\GF$~\cite{Gradel99}.

%!TEX root = ../main.tex

\section{Preliminaries}
\label{sec:preliminaries}
%We take the {\em natural numbers}, $\N$, to be the set $\set{0, 1, 2, \dots}$. 
Let $m$ and $k$ be non-negative integers. 
For any integers $i$ and $j$, we write $[i,j]$ to denote the set of integers $h$ such that $i \leq h \leq j$.  
A function $f \colon [1,m] \rightarrow [1,k]$ is {\em adjacent}
if $|f(i+1) {-} f(i)| \leq 1$ for all $i$ ($ 1 \leq i < m$). 
% Equivalently, 
% $f$ is adjacent if
% $|f(i) {-} f(j)| \leq |i {-} j|$ for all $i, j \in [1,m]$, that is to say, if $f$ is a metric map under the standard metrics on its domain and co-domain. 
We write $\bA^m_k$ to denote the set of adjacent functions $f \colon [1,m] \rightarrow [1,k]$. Since $[1,0] = \emptyset$, we have $\bA^0_k = \set{\emptyset}$, and $\bA^m_0 = \emptyset$ if $m >0$.
Let $A$ be a non-empty set. A word $\bar{a}$ over the alphabet $A$ is 
simply a tuple of elements from $A$; we alternate freely in the sequel between these two ways of
speaking as the context requires. Accordingly, $A^k$ denotes the set of words 
over $A$ having length exactly $k$, and $A^*$ is the set of all finite words over $A$. % having any finite length (possibly zero).
If $\bar{a} \in A^*$, denote the length of $\bar{a}$ by $|\bar{a}|$, and the reversal of $\bar{a}$  by $\bar{a}^{-1}$. 
Any function
$f\colon [1,m] \rightarrow [1,k]$ (adjacent or not) induces a natural map from $A^k$ to $A^m$ defined by
$\bar{a}^f = a_{f(1)} \cdots a_{f(m)}$, where $\bar{a} = a_1 \cdots a_k$.
If $f \in \bA^m_k$ (i.e.~if $f$ is adjacent), we may think of $\bar{a}^f$ as the result of a `walk' on the tuple $\bar{a}$, starting at the element $a_{f(1)}$, and moving left, right,
or remaining stationary according to the sequence of values $f(i+1) {-} f(i)$ ($1 \leq i <m$).

For any $k \geq 0$, denote by $\bx_k$ the fixed word $x_1 \cdots x_k$ (if $k=0$, this is the empty word). 
A {\em $k$-atom} is an expression $p(\bx^f_k)$, where $p$ is a predicate of 
some arity $m \geq 0$, and $f\colon [1,m] \rightarrow [1,k]$. Thus,
in a $k$-atom, each argument is a variable chosen from $\bx_k$. If $f$ is adjacent, we speak of an
{\em adjacent $k$-atom}. Thus,
in an adjacent $k$-atom, the indices of neighbouring
arguments differ by at most one. When $k \leq 2$, the adjacency requirement is vacuous, and in this case we prefer to speak 
simply of \textit{$k$-atoms}. 
Proposition letters (predicates of arity $m=0$) count as (adjacent) $k$-atoms 
for all $k \geq 0$, taking $f$ to be the empty function. 
When $k= 0$, we perforce have $m = 0$, since otherwise, there are no functions from $[1,m]$ to~$[1,k]$; thus the 0-atoms are 
precisely the proposition letters.

We define the sets of first-order formulas $\AFv{[k]}$ by simultaneous structural induction:
\begin{enumerate}
  \item every adjacent $k$-atom is in $\AF^{[k]}$; 
  \item $\AF^{[k]}$ is closed under Boolean combinations;
  \item if $\phi$ is in $\AFv{[k+1]}$,  $\exists x_{k+1}\, \phi$ and  $\forall x_{k+1}\, \phi$ are in $\AFv{[k]}$. 
\end{enumerate}
Now let $\AF=\bigcup_{k \geq 0} \AF^{[k]}$ and define $\AFv{k}$ to be the set of formulas of $\AF$ featuring no variables other than $\bx_k$, free or bound. 
We call
$\AF$ the {\em adjacent fragment} and $\AFv{k}$ the {\em $k$-variable adjacent fragment}. 
Note that formulas of $\AF$ contain no individual constants, function symbols or equality.
The primary objects of interest here are the languages  $\AF$ and~$\AFv{k}$; however, the sets of formulas $\AF^{[k]}$ play an important
auxiliary role in their analysis. 
Thus, for example, formula~\eqref{eq:simpleExample} is in $\AFv{k}$ for all $k \geq 5$, but in $\AFv{[k]}$ only for $k =0$.
On the other hand, the \textit{quantifier-free} formulas of 
$\AFv{[k]}$ and $\AFv{k}$ are the same.

We silently assume the variables $\bx_k = x_1 \cdots x_k$ to be ordered in the standard way. That is: if 
$\phi$ is a formula of $\AF^{[k]}$, $\fA$ a structure interpreting its signature, and $\bar{a} = a_1 \cdots a_k \in A^k$, 
we say simply that $\bar{a}$ \textit{satisfies} $\phi$ in $\fA$, and write $\str{A} \models \phi[\bar{a}]$ to mean that 
$\bar{a}$ satisfies~$\phi$ in $\fA$ under the assignment $x_i \leftarrow a_i$ ($1 \leq i \leq k)$. (This does not 
necessarily mean that 
each of the variables of $\bx_k$ actually appears in $\phi$.) 
If $\phi$ is true under  all assignments in all structures,
we write $\models \phi$; the notation
$\phi \models \psi$ means the same as $\models \phi \rightarrow \psi$ (i.e.~variables are consistently instantiated in $\phi$ and $\psi$).
%We use $\sig(\varphi)$ to denote the set of predicates used in $\varphi$.\dauside{Migth be a good idea to say that $\sig$ is the signature explicitly, as we sometimes say "Let $\sigma$ be the signature..."}
The notation $\phi(\bar{v})$, where $\bar{v}=v_1 \cdots v_k$ are variables, will always be used to denote
the formula that results from substituting $v_i$ for $x_i$ ($1 \leq i \leq k)$~in~$\phi$. 
We write $\forall \bx_{k;\ell}$ in place of $\forall x_k \forall x_{k{+}1} \cdots \forall x_\ell$ (and just $\forall \bx_{\ell}$~if~$k=1$).
A~{\em sentence} is a formula with no free variables. Necessarily, all formulas 
of $\AFv{[0]}$ are sentences.
For a sentence $\varphi$ we write simply $\fA \models \phi$ to mean that $\phi$ is true in $\fA$.
% If $\phi$ is a \textit{sentence} of $\AFv{[k]}$ for some $k >0$---for example, $\exists x_{2}\, p(x_2)$ is in $\AFv{[1]}$---we continue to write simply $\fA \models \phi$ to mean that $\phi$ is true in $\fA$.
%
We call the set of predicates used in $\varphi$ \emph{the signature of $\varphi$} (denoted~$\sig(\varphi)$).
%
% In the sequel all signatures will be silently assumed to be purely relational. 
By routine renaming of variables we establish:
\begin{lemma}\label{lemma:fo2-and-af-over-binary-sig-are-the-same}
Every $\FO^2$-formula is logically equivalent to an~$\AF$-formula. 
The converse holds for $\AF$-formulas featuring predicates of arity at most two.
\end{lemma}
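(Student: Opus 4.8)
The plan is to prove the two directions separately, each by a straightforward structural induction accompanied by a systematic renaming of variables.

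For the first direction, let $\psi$ be an $\FO^2$-formula over variables $x_1,x_2$. First I would bring $\psi$ into a form where quantifiers alternate the bound variable: observe that any $\FO^2$-formula can be rewritten, pushing negations and re-using the two variables, so that whenever we quantify we alternately bind $x_1$ and $x_2$, and every atom occurring in the scope of a quantifier binding (say) $x_2$ contains only $x_1,x_2$. More precisely, I would prove by induction on the structure of $\psi$ the following claim: for each subformula, there is a logically equivalent $\AF^{[k]}$-formula obtained by the renaming that sends the "currently innermost" $\FO^2$-variable to $x_k$ and the "other" one to $x_{k-1}$ (or to nothing, if only one is live). Since every word over a two-letter alphabet $\{x_{k-1},x_k\}$ is trivially adjacent (neighbouring indices differ by exactly $0$ or $1$), every atom of $\psi$ maps to an adjacent $k$-atom under this renaming, so clauses~1--3 in the definition of $\AF^{[k]}$ are respected. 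Boolean combinations are handled by clause~2, and a quantifier $\exists x_i\,\chi$ or $\forall x_i\,\chi$ in $\FO^2$ becomes a quantifier binding $x_{k+1}$ after re-indexing, matching clause~3. The polynomial-time bound is immediate since the transformation only relabels variables and possibly duplicates bounded subformulas a constant number of times when normalising the quantifier pattern.

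For the converse, let $\phi\in\AF$ use only predicates of arity at most two, so every atom is of the form $p$ (nullary), $q(x_i)$ (unary), or $r(x_i x_j)$ with $|i-j|\le 1$ (binary). By induction on the $\AF^{[k]}$-structure I would show each subformula is equivalent to an $\FO^2$-formula under the renaming that collapses $x_{k-1}\mapsto$ one of the two $\FO^2$-variables and $x_k\mapsto$ the other, alternating as the quantifier depth changes; atoms $r(x_{k-1}x_k)$, $r(x_kx_{k-1})$, $r(x_kx_k)$, $r(x_{k-1}x_{k-1})$ all survive the collapse because they mention at most two distinct variables. The only point needing care is that an adjacent atom could in principle mention an index $x_{k-2}$ if the surrounding context made it available, but in $\AF^{[k]}$ an atom is introduced by clause~1 with all arguments among $x_1,\dots,x_k$ and then quantifiers are stripped from the outside; one checks that whenever an atom of arity $\le 2$ sits in the scope of the quantifier binding $x_k$, adjacency plus the arity bound force its arguments into $\{x_{k-1},x_k\}$ — indeed a binary adjacent atom has arguments $x_ix_j$ with $|i-j|\le 1$, and to be "live" under the $x_k$-quantifier it must involve $x_k$, hence its other argument is $x_{k-1}$ or $x_k$.

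The main obstacle — really the only non-bookkeeping part — is setting up the alternating renaming so that it commutes with the inductive steps: one must fix, once and for all, a bijection-like correspondence between "the last two live $\AF$-variables at depth $k$" and "the two $\FO^2$-variables", and verify that passing through a quantifier flips the roles consistently in both translations. Once that invariant is stated cleanly, both inductions are routine, and the arity-$\le 2$ hypothesis is exactly what guarantees no atom escapes the two-variable window. I expect this lemma to be stated without a detailed proof in the paper, precisely because, modulo the renaming discipline, it is "routine".
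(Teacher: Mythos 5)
Your first direction rests on the assertion that every $\FOt$-formula can be rewritten so that quantifications of the two variables strictly alternate (no re-binding of a variable without an intervening quantification of the other). That assertion is precisely the crux, and it is not obtained by relabelling: for $\exists x\,\bigl(P(x,y)\wedge \exists x\, Q(x,y)\bigr)$, your own renaming discipline sends the inner atom to $Q(x_3x_1)$, which is not adjacent, so one must first extract the inner $\exists x\,Q(x,y)$ (which has only $y$ free) from the scope of the outer $\exists x$; doing this through arbitrary Boolean structure requires a CNF/DNF conversion and pushing quantifiers inwards, which is exactly what the paper's ``properly sequenced'' normalisation (with its auxiliary ``unit'' bookkeeping) accomplishes. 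So this half of your plan is repairable, but the step you label an observation is where essentially all the work lies, and your cost estimate (``only relabels variables'') does not describe it.

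The converse direction contains a false step. You claim that adjacency plus the arity bound force every atom lying in the scope of the quantifier binding $x_k$ to have its arguments among $\{x_{k-1},x_k\}$ because it ``must involve $x_k$''. It need not: $\AF^{[k]}$ admits \emph{every} adjacent $k$-atom, so e.g.\ $t(x_1x_2)$ or $q(x_1)$ may occur under $\forall x_3$, as in $\forall x_1\forall x_2\bigl(r(x_1x_2)\rightarrow\forall x_3\,(s(x_2x_3)\rightarrow t(x_1x_2))\bigr)$, whose quantifier-free part has three free variables. Hence the two-variable-window invariant you want to carry through the induction is not supplied by the syntax, and your induction has no way to handle such atoms. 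The missing content is an argument that atoms (and, after processing, unit subformulas) not featuring $x_k$ can be moved out of the scope of that quantifier---again via CNF/DNF conversion and pushing the quantifier inwards---so that the resulting equivalent formula has no subformula with more than two free variables; only then does renaming yield an $\FOt$-formula (this is the paper's route, via Lemma~\ref{lemma:mini-lemma-for-FO-with-two-free-vars}). Note also that the paper does prove this lemma in full in the appendix; it is not left as routine.
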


We adapt the standard notion of (atomic) $k$-types for the fragments studied here. Fix some signature $\sigma$. 
An {\em adjacent $k$-literal} \textit{over} $\sigma$ is an {\em adjacent $k$-atom} or its negation, featuring a predicate in $\sigma$.
An \emph{adjacent $k$-type} \textit{over} $\sigma$ is a maximal consistent set of adjacent $k$-literals over $\sigma$. 
Reference to $\sigma$ is suppressed where clear from context. 
We use the letters $\zeta$ and~$\eta$ to range over 
adjacent $k$-types for various $k$. We denote by
$\Atp^\sigma_k$ the set of all adjacent $k$-types over $\sigma$. For finite $\sigma$, we identify members of
$\Atp^\sigma_k$ with their conjunctions, and treat them as (quantifier-free) $\AFv{k}$-formulas, 
writing $\zeta$ instead of $\bigwedge \zeta$. When $k \leq 2$, the adjacency requirement is vacuous, and in this case we shall simply speak simply of \textit{$k$-types}. It is obvious that every quantifier-free $\AFv{k}$-formula $\chi$
is logically equivalent to a disjunction of adjacent $k$-types, in essence the adjacent disjunctive normal form of $\chi$. In particular,
if $\chi$ is satisfiable, then there is an adjacent $k$-type which entails it.
If $\fA$ is a $\sigma$-structure and $\bar{a}$ a $k$-tuple of elements from $A$, there 
is a unique adjacent $k$-type $\zeta$ such that $\fA \models \zeta[\bar{a}]$; we denote
this adjacent $k$-type by $\atp^\fA[\bar{a}]$, and call it the {\em adjacent type of $\bar{a}$ in $\fA$}.
If $\tau \subseteq \sigma$, we use $\atp^\fA_{\tau}[\bar{a}]$ to denote the adjacent type of $\bar{a}$ in $\fA$ restricted to predicates from $\tau$.
It is not required that the elements $\bar{a}$ be distinct. Again, if $k \leq 2$, adjacency is vacuous, and we 
write $\tp^\fA[\bar{a}]$ rather than $\atp^\fA[\bar{a}]$, and refer to $\tp^\fA[\bar{a}]$ as the {\em type of $\bar{a}$ in $\fA$}.

Since adjacent formulas do not contain equality, we may freely duplicate elements in their models. 
Let $\fB$ be a $\sigma$-structure, and $I$ a non-empty set of indices. We define 
the structure $\fB \times I$ over the Cartesian product $B \times I$ by setting, for any
$p \in \sigma$ of arity $m$, and any $m$-tuples $b_1 \cdots b_m$ from $B$ and $i_1 \cdots i_m$ from $I^m$,
$\fB \times I \models p[\langle b_1,i_1\rangle  \cdots \langle b_m, i_m\rangle]$ if and only if 
$\fB \models p[b_1 \cdots b_m]$.  By routine structural induction:
\begin{lemma}
  Let $\psi$ be an equality-free first-order formula all of whose free variables occur in $\bx_k$, $\fB$
  a structure interpreting the signature of $\psi$, and $I$ a non-empty set. Then, for any tuples 
  $\bar{b} = b_1 \cdots b_k$ from $B$ and $i_1 \cdots i_k$ from $I$,
  $\fB \models \psi[\bar{b}]$ if and only if $\fB \times I \models \psi[\langle b_1, i_1\rangle \cdots \langle b_k, i_k\rangle]$.\label{lma:cartesian}
\end{lemma}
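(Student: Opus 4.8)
The plan is a straightforward structural induction on $\psi$, carried out uniformly for all $k$ at once, so that the quantifier step — which increases by one the number of variables that may occur freely — stays within the statement being proved. Throughout, abbreviate $\bar{b} = b_1 \cdots b_k$ and $\bar{c} = \langle b_1, i_1\rangle \cdots \langle b_k, i_k\rangle$. The hypothesis that all free variables of $\psi$ lie among $\bx_k$ is what makes the right-hand side well-formed, since we only assign values to $x_1, \dots, x_k$.

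For the base case, let $\psi$ be an atom $p(x_{j_1} \cdots x_{j_m})$ with $j_1, \dots, j_m \in [1,k]$ (there is no equality atom, by hypothesis). Then $\fB \models \psi[\bar{b}]$ iff $\fB \models p[b_{j_1} \cdots b_{j_m}]$, and by the very definition of the interpretation of $p$ in $\fB \times I$ this holds iff $\fB \times I \models p[\langle b_{j_1}, i_{j_1}\rangle \cdots \langle b_{j_m}, i_{j_m}\rangle]$, i.e.\ iff $\fB \times I \models \psi[\bar{c}]$. The Boolean cases $\psi = \neg\chi$ and $\psi = \chi_1 \wedge \chi_2$ follow immediately from the induction hypothesis applied to the proper subformulas, whose free variables still lie among $\bx_k$.

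For the quantifier case we may, after renaming bound variables, assume $\psi = \exists x_{k+1}\, \chi$ with the free variables of $\chi$ among $\bx_{k+1}$; the universal case then reduces to this one via $\forall x_{k+1}\chi \equiv \neg\exists x_{k+1}\neg\chi$ together with the Boolean step. For the left-to-right direction, a witness $b_{k+1} \in B$ with $\fB \models \chi[b_1 \cdots b_{k+1}]$ yields, after choosing any $i_{k+1} \in I$ (possible since $I$ is non-empty) and applying the induction hypothesis to $\chi$ with $k+1$ in place of $k$, that $\fB \times I \models \chi[\langle b_1,i_1\rangle \cdots \langle b_{k+1},i_{k+1}\rangle]$, hence $\fB \times I \models \psi[\bar{c}]$. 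Conversely, a witness $\langle b_{k+1}, i_{k+1}\rangle \in B \times I$ projects to $b_{k+1} \in B$, and the induction hypothesis converts $\fB \times I \models \chi[\langle b_1,i_1\rangle \cdots \langle b_{k+1},i_{k+1}\rangle]$ into $\fB \models \chi[b_1 \cdots b_{k+1}]$, so $\fB \models \psi[\bar{b}]$.

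There is no genuine obstacle here; the only two points deserving a moment's attention are (i) stating the inductive claim so that it is closed under passing beneath a quantifier — hence asserting it for every $k$ simultaneously rather than for one fixed $k$ — and (ii) the (dual) uses of the non-emptiness of $I$, once to supply an index component for an existential witness and once in the universal case.
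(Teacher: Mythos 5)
Your proof is correct and matches the paper's approach: the paper dispatches this lemma with exactly the routine structural induction you spell out (atoms handled by the definition of $\fB \times I$, Boolean cases immediate, quantifier case via a claim stated for all $k$ simultaneously with non-emptiness of $I$ supplying the index of a witness). No gaps worth noting.
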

The following combinatorial lemma allows us to extend the technique of `circular witnessing'~\cite{GradelKV97}, 
frequently used in the analysis of two-variable logics, to the languages $\AFv{k}$.
\begin{lemma}\label{lma:simpleComb} 
For any integer $k>0$ there is a set $J$ with $|J| = (k^2 + k+1)^{k+1}$ 
and a function $g\colon J^k \rightarrow J$ such that,
for any tuple $\bar{t} \in J^k$ consisting of the elements $t_1, \dots, t_k$ in some order:
{\em (i)} $g(\bar{t})$ is not in $\bar{t}$;
{\em (ii)} if $\bar{t}' \in J^k$ consists of the elements $\set{t_2, \dots, t_k, g(\bar{t})}$ in some order, then
$g(\bar{t}')$  is not in $\bar{t}$ either.
\end{lemma}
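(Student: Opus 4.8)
The plan is to find a large enough "colour set" $J$ so that a simple deterministic rule for choosing $g(\bar t)$ avoids, not just the $k$ coordinates of $\bar t$, but also the $k{-}1$ of them that survive one shift. Since $g(\bar t)$ depends only on the set $\{t_1,\dots,t_k\}$ (condition (i) and (ii) are stated purely in terms of which elements occur, in some order), I would first reformulate: we need a function $h$ from the set of nonempty subsets of $J$ of size at most $k$ to $J$ such that $h(S)\notin S$, and whenever $S'$ is obtained from $S$ by deleting one element and adding $h(S)$, we also have $h(S')\notin S$. Then $g(\bar t):=h(\{t_1,\dots,t_k\})$ works.

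The key idea is to realise $J$ as the set of functions $[0,k]\to[0,k^2{+}k]$, i.e.\ $J = [0,k^2+k]^{[0,k]}$, which has exactly $(k^2+k+1)^{k+1}$ elements, matching the stated bound. Think of each colour $c\in J$ as a tuple $(c(0),c(1),\dots,c(k))$. Given a set $S\subseteq J$ with $|S|\le k$, I would choose $h(S)$ as follows: for coordinate $0$, pick a value $v_0\in[0,k^2+k]$ not equal to any $c(0)$ for $c\in S$ — possible since $|S|\le k < k^2+k+1$; this already guarantees $h(S)\notin S$ (they differ in coordinate $0$). For the remaining coordinates I want to "remember enough of $S$" to rule out membership of $h(S)$ in $S$ after a shift. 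A clean way: set $h(S)(j)$, for $j\ge 1$, to encode the multiset of values $\{c(j-1):c\in S\}$ — or, more simply, to encode for each old colour $c\in S$ the value $c(j-1)$, using the fact that a size-$\le k$ set lives inside $[0,k^2+k]$ in $\binom{k^2+k+1}{k}$ ways, which I would arrange to be representable. The precise encoding needs care; the point is that from $h(S)$, restricted to coordinates $1,\dots,k$, one can reconstruct the entire set $\{c(0):c\in S\}$ (the "signature" of $S$).

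With such an encoding in place, the verification of (ii) is the crux. Suppose $S' = (S\setminus\{a\})\cup\{h(S)\}$ for some $a\in S$, and suppose for contradiction that $h(S')\in S$, say $h(S') = b$ for some $b\in S$. Now $h(S')$ is built from $S'$, and $S'$ contains $h(S)$, so coordinates $1,\dots,k$ of $h(S')$ encode a signature that includes the value $h(S)(0)=v_0$. On the other hand, since $h(S')=b\in S$ and $v_0\notin\{c(0):c\in S\}$ by construction, $b(0)\ne v_0$, and the signature encoded in $b$'s coordinates $1,\dots,k$ is the signature of some earlier set, which cannot contain $v_0$ unless the encoding collides — which, by the injectivity built into the construction, it does not. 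This yields the contradiction. The main obstacle is precisely arranging the coordinate-wise encoding so that (a) $h(S)$ differs from every element of $S$ in a controlled way, (b) $h(S)$ genuinely records all of $S$, and (c) the arithmetic "$(k^2+k+1)^{k+1}$ colours suffice" works out with $k$ coordinates each ranging over $k^2+k+1$ values; I would double-check that deleting one element and inserting a fresh one does not destroy the recoverability of the relevant part of the signature.
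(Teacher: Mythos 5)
Your overall architecture (elements of $J$ as $(k{+}1)$-tuples over a base set of size $k^2+k+1$, the value $g(\bar{t})$ being a ``fresh'' head followed by data recording the heads of the arguments) is the same as the paper's, but the proposal has a genuine gap at the crucial step. Two problems. First, choosing the new head $v_0$ to avoid only the \emph{first} coordinates $\{c(0):c\in S\}$ is too weak. Second, your verification of (ii) assumes that the element $b\in S$ alleged to equal $h(S')$ carries in its coordinates $1,\dots,k$ ``the signature of some earlier set'', which cannot contain $v_0$ ``by injectivity''. But the elements of $S$ are \emph{arbitrary} members of $J$, not values of $h$; nothing whatsoever constrains their tails, so no contradiction follows. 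Concretely, the natural instantiation of your rule fails: take $k=2$, base set $[1,7]$, and let $h$ output the smallest value not among the heads of the two inputs, followed by the two heads. With $t_1=(1,3,2)$ and $t_2=(2,1,1)$ one gets $g(t_1,t_2)=(3,1,2)$, and then $g\bigl((3,1,2),(2,1,1)\bigr)=(1,3,2)=t_1$, violating (ii); a similar example defeats the variant where the tail stores the sorted set of heads. The elaborate subset-encoding you worry about (the $\binom{k^2+k+1}{k}$ count) is a red herring --- simply listing the heads in the $k$ tail positions is enough --- the real issue is the freshness condition and the final decoding argument.

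The paper's construction repairs exactly these points. The fresh head $i_0$ is chosen to avoid \emph{every} value occurring in \emph{any} coordinate of any argument; there are at most $k(k+1)=k^2+k$ such values, which is precisely why the base must have $k^2+k+1$ elements (your bound $|S|\le k$ shows the weaker condition is doing less work than needed). The tail of $g(\bar{t})$ is just $i_1\cdots i_k$, the heads of the arguments in order. Then (i) is immediate, and (ii) is proved without decoding anything from elements of $\bar{t}$: for $h\ge 2$, the new fresh head $i'$ avoids $i_h$ directly; and $i'$ cannot equal $i_1$ because $i_1$ occurs as a letter of $g(\bar{t})$, which is one of the arguments of the second application. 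If you strengthen your freshness requirement to ``avoid all coordinates of all inputs'' and replace the signature-decoding step by this one-line observation, your argument becomes the paper's proof.
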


%!TEX root = ../main.tex

\section{Primitive generators of words}
\label{sec:primGen}
The upper complexity bounds obtained below depend on an observation concerning the combinatorics of words, which may be of independent interest.
For words $\bar{a}, \bar{c} \in A^*$ %are words over %some alphabet 
with $|\bar{a}|= k$ and $|\bar{c}|= m$, 
say that $\bar{a}$ {\em generates} $\bar{c}$ if $\bar{c} = \bar{a}^f$ for some \textit{surjective} function $f \in \bA^m_k$. 
As explained above, it helps to think of $\bar{a}^f$ as the sequence of letters encountered on an $m$-step `walk' backwards and forwards on the tuple $\bar{a}$, with $f(i)$ giving the index of 
our position in $\bar{a}$ at the $i$th step. The condition that 
$f$ is adjacent ensures that we never change position by more than one letter at a time; the condition that $f$ is surjective ensures that we visit every position of $\bar{a}$. 
We may picture a walk as a piecewise linear function,
with the \mbox{generat{\em{}ed}} word superimposed on the abscissa and the \mbox{generat{\em ing}} word on the ordinate, c.f.  Figure~\ref{fig:example}.
% For example, Fig.~\ref{fig:example} shows how
% $\bar{a} =$ {cbadefba} generates $\bar{c}=$ {abcbaaadefedadefbabf}. 
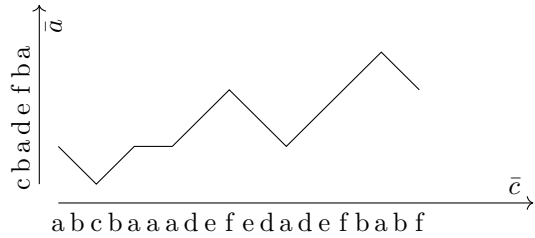
\begin{figure}[h]
  \begin{center}
    \begin{tikzpicture}[scale= 0.25]
      \draw[->] (0,0) --  (25,0);
      \coordinate[label={$\bar{c}$}] (wLabel) at (24,0);
      
      \draw[->-=1] (-1,1) to (-1, 10.5);
      \coordinate[label={\rotatebox{90}{$\bar{a}$}}] (wLabel) at(-0.25, 8.5) ;
      
      \coordinate[label=left:{\rotatebox{90}{c}}] (wLabel) at (-1,1);
      \coordinate[label=left:{\rotatebox{90}{b}}] (wLabel) at (-1,2);
      \coordinate[label=left:{\rotatebox{90}{a}}] (wLabel) at (-1,3);
      \coordinate[label=left:{\rotatebox{90}{d}}] (wLabel) at (-1,4);
      \coordinate[label=left:{\rotatebox{90}{e}}] (wLabel) at (-1,5);
      \coordinate[label=left:{\rotatebox{90}{f}}] (wLabel) at (-1,6);
      \coordinate[label=left:{\rotatebox{90}{b}}] (wLabel) at (-1,7);
      \coordinate[label=left:{\rotatebox{90}{a}}] (wLabel) at (-1,8);
      
      \coordinate[label={a}] (wLabel) at (0,-2);
      \coordinate[label={b}] (wLabel) at (1,-2);
      \coordinate[label={c}] (wLabel) at (2,-2);
      \coordinate[label={b}] (wLabel) at (3,-2);
      \coordinate[label={a}] (wLabel) at (4,-2);
      \coordinate[label={a}] (wLabel) at (5,-2);
      \coordinate[label={a}] (wLabel) at (6,-2);
      \coordinate[label={d}] (wLabel) at (7,-2);
      \coordinate[label={e}] (wLabel) at (8,-2);
      \coordinate[label={f}] (wLabel) at (9,-2);
      \coordinate[label={e}] (wLabel) at (10,-2);
      \coordinate[label={d}] (wLabel) at (11,-2);
      \coordinate[label={a}] (wLabel) at (12,-2);
      \coordinate[label={d}] (wLabel) at (13,-2);
      \coordinate[label={e}] (wLabel) at (14,-2);
      \coordinate[label={f}] (wLabel) at (15,-2);
      \coordinate[label={b}] (wLabel) at (16,-2);
      \coordinate[label={a}] (wLabel) at (17,-2);
      \coordinate[label={b}] (wLabel) at (18,-2);
      \coordinate[label={f}] (wLabel) at (19,-2);
      
      \draw (0,3) -- (2,1) -- (4,3) -- (6,3) -- (9,6) -- (12,3) --(17,8) -- (19,6);
    \end{tikzpicture}
  \end{center}
  \caption{Generation of {abcbaaadefedadefbabf} from {cbadefba}.}
  \label{fig:example}
\end{figure}

Every word generates both itself and its reversal. Moreover, if $\bar{a}$ generates $\bar{c}$, then
$|\bar{c}| \geq |\bar{a}|$; in fact, $\bar{a}$ and $\bar{a}^{-1}$ are the only words of length $|\bar{a}|$ generated by $\bar{a}$. 
Finally, generation is transitive: if $\bar{a}$ generates $\bar{b}$ and $\bar{b}$ generates $\bar{c}$, then $\bar{a}$ generates $\bar{c}$.
We call $\bar{a}$ \textit{primitive} if it is not generated by any word shorter than itself, equivalently, if it is generated only by itself and its reversal. 
For example, $babcd$ and $abcbcd$ are not primitive, because they are generated by $abcd$; but
$abcbda$ is primitive.  
Note that an infix (factor) of a primitive word need not be primitive.
Define  a {\em primitive generator} of $\bar{c}$ to be
a generator of $\bar{c}$ that is itself primitive.
From the foregoing remarks, it is obvious that every word $\bar{c}$ has some primitive generator $\bar{a}$, and indeed, $\bar{a}^{-1}$ as well,
since the reversal of a primitive generator is clearly a primitive generator. The following result, by contrast,
is anything~but~obvious. Notwithstanding the naturalness of the question it answers, we believe it to be new. Since it is concerned only with the
combinatorics of strings, however, we refer the reader to~\cite{ph22} for the proof. 
\begin{theorem}
  The primitive generator of any word is unique up to reversal.
  \label{theo:main}
\end{theorem}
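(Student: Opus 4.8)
The plan is to recast the statement as a uniqueness-of-normal-form theorem for a rewriting system on words. First observe that ``$\bar{u}$ generates $\bar{v}$'' is insensitive to reversing either $\bar{u}$ or $\bar{v}$ (compose the witnessing surjection in $\bA^m_k$ with the order-reversal of its domain or of its codomain), so reversal acts as a harmless equivalence $\equiv$ throughout, and ``unique up to reversal'' means ``unique modulo $\equiv$''. Recall also that if $\bar{a}$ generates $\bar{c}$ then $|\bar{a}| \le |\bar{c}|$, with equality only if $\bar{a} \equiv \bar{c}$; hence the primitive generators of $\bar{c}$ are exactly its $\equiv$-minimal generators, and if $\bar{c}$ is itself primitive there is nothing to prove. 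Write $\bar{c} \to \bar{c}'$ (an \emph{elementary reduction}) if $\bar{c}'$ is obtained from $\bar{c}$ by excising one ``pleat'' --- a factor that some covering walk of $\bar{c}$ traverses by running monotonically out to an extreme and immediately retracing, in a position from which the retrace can be re-inserted, so that $\bar{c}'$ generates $\bar{c}$ and $|\bar{c}'|<|\bar{c}|$. (Double consecutive letters, two-letter alternations $(ab)^t$, and the ``bounce'' patterns $\bar{w}\,x\,\bar{w}^{-1}$ are the basic instances.) Since $\to$ strictly decreases length it terminates.

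The theorem then reduces to three lemmas. \emph{Adequacy}: $\bar{c}$ is $\to$-irreducible iff $\bar{c}$ is primitive --- one direction is immediate (a reduction exhibits a strictly shorter generator), and the other, ``non-primitive $\Rightarrow$ some excisable pleat exists'', is got by taking a covering walk witnessing a strictly shorter generator and locating an innermost or outermost turn of it. \emph{Refinement}: if $\bar{a}$ generates $\bar{c}$ with $|\bar{a}|<|\bar{c}|$, then $\bar{c}\to\bar{c}'$ for some $\bar{c}'$ that $\bar{a}$ still generates; iterating (lengths strictly decrease), every primitive generator $\bar{a}$ of $\bar{c}$ satisfies $\bar{c}\to^{*}\bar{a}$, while conversely every $\to$-reduct of $\bar{c}$ is a generator of $\bar{c}$ (one step by definition, many by transitivity of generation), so the $\to$-normal forms reachable from $\bar{c}$ are precisely its primitive generators. \emph{Local confluence modulo $\equiv$}: if $\bar{c}\to\bar{c}_1$ and $\bar{c}\to\bar{c}_2$ then $\bar{c}_1\to^{*}\bar{d}_1$, $\bar{c}_2\to^{*}\bar{d}_2$ with $\bar{d}_1\equiv\bar{d}_2$. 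Together with the trivial fact that $\to$ commutes with reversal, standard confluence-modulo-equivalence arguments yield a unique normal form modulo $\equiv$, and with Refinement this is exactly the assertion that the primitive generator of $\bar{c}$ is unique up to reversal.

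The main obstacle will be Adequacy's ``$\Rightarrow$'' direction together with local confluence modulo $\equiv$, which between them carry all the combinatorial weight --- and, relatedly, pinning down the exact condition for a pleat to be ``excisable'': as the $\bar{w}\,x\,\bar{w}^{-1}$ case already shows, one of the two obvious ways to collapse such a pattern may fail to re-generate $\bar{c}$, so the correct local move is subtler than ``delete a palindromic factor''. For confluence one must classify how two excised pleats can meet: disjoint factors commute at once, but overlapping or nested pleats --- a palindromic prefix meeting a palindromic suffix, two alternations sharing letters, one pleat strictly inside another --- require a short chain of further reductions on each side, and, tellingly, several of these cases terminate in the two \emph{reversals} of one shorter word rather than in a single word, which is precisely why the argument must be run modulo $\equiv$ and not on the nose. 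Granting these, the remainder (the well-founded-recursion wrapper, Adequacy's easy direction, and the bookkeeping in Refinement) is routine.
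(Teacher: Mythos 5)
First, a point of comparison: the paper does not actually prove Theorem~\ref{theo:main} in the text --- it explicitly defers the proof to the cited reference~\cite{ph22} --- so there is no in-paper argument to measure your proposal against; it has to stand on its own. As it stands, it does not: what you have written is a proof \emph{plan} that relocates the entire combinatorial content of the theorem into three unproven lemmas, and you say so yourself (``the main obstacle will be Adequacy's `$\Rightarrow$' direction together with local confluence modulo $\equiv$, which between them carry all the combinatorial weight''). The abstract wrapper --- termination because length strictly decreases, reversal commuting with $\to$ so that coherence with $\equiv$ is free, Newman/Huet-style confluence modulo an equivalence, and the observation that primitive generators are exactly the $\to$-normal forms reachable from $\bar{c}$ once Refinement is granted --- is the routine part, and that part of your sketch is coherent. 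But the theorem is precisely as hard as the parts you defer: (i) the rewriting step itself is not actually defined (``a factor \ldots in a position from which the retrace can be re-inserted, so that $\bar{c}'$ generates $\bar{c}$'' is close to circular, and you concede that the correct excisability condition is ``subtler than delete a palindromic factor''); (ii) Adequacy's hard direction, that every non-primitive word admits such an excision, is asserted via ``locate an innermost or outermost turn'' without an argument; (iii) Refinement --- that a single excision can always be chosen compatibly with an arbitrary shorter generator $\bar{a}$ --- is a nontrivial diamond-type claim, not bookkeeping; and (iv) local confluence modulo reversal for overlapping, nested, or letter-sharing pleats is exactly the delicate case analysis in which a proof of this theorem either succeeds or dies, and you give only a taxonomy of the cases, not their resolution.

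Concretely, then, the gap is this: granting your three lemmas is essentially granting the theorem, and nothing in the proposal gives confidence that the lemmas (in particular local confluence, with a reduction relation that has not been pinned down) can be pushed through rather than merely named. The paper itself flags the statement as ``anything but obvious'' precisely because examples like $abcbcbd$ (one primitive generator $abcbd$, several distinct covering walks) and the $\bar{w}\,x\,\bar{w}^{-1}$ bounce patterns show that local surgery on walks interacts with the rest of the word in ways that are easy to mis-handle. A complete proof along your lines would need a precise, walk-independent definition of the elementary reduction, full proofs of Adequacy and Refinement for that definition, and an exhaustive overlap analysis for confluence; until those are supplied, the proposal is a plausible strategy, not a proof.
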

Remarkably, while primitive generators are unique up to reversal, modes of generation are not. The word $\bar{c}= abcbcbd$ has primitive generator
$\bar{a}= abcbd$. But there are \textit{distinct} surjective functions $f, g \in \bA^7_5$ such that $\bar{c}= \bar{a}^f = \bar{a}^g$, as is easily verified.
Define the {\em primitive length} of any word $\bar{c}$ to be the length of any primitive generator of $\bar{c}$. By Theorem~\ref{theo:main},
this notion is well-defined; it will play a significant role in our analysis of the adjacent fragment.
Obviously, the primitive length of $\bar{c}$ is at most $|\bar{c}|$, but will be strictly less if $\bar{c}$ is not primitive. 

Let $\chi$ be a quantifier-free $\AFv{\ell}$-formula, and let $g \in \bA^\ell_k$. We denote by $\chi^g$ the formula $\chi(x_{g(1)} \cdots x_{g(\ell)})$.
We claim that $\chi^g \in \AFv{k}$. Indeed, any
atom $\alpha$ appearing in $\chi$ is of the form $p(\bx^f_k)$, 
where $p$ is a predicate of some 
arity $m$ and $f \in \bA^m_\ell$. But then the corresponding atom in $\chi^g$ has the form
$\beta\coloneqq \alpha(x_{g(1)} \cdots x_{g(\ell)}) = p(x_{g(f(1))} \cdots x_{g(f(m))}) = p(\bx^{(g \circ f)}_k)$.
Since the composition of adjacent functions is adjacent, the claim follows. 
The following (almost trivial) lemma is useful when manipulating adjacent formulas. 
Recall in this regard that any function $g \in \bA^\ell_k$ 
maps a $k$-tuple $\bar{a}$ over some set to an $\ell$-tuple $\bar{a}^g$ 
over the same set.
% It makes use of our earlier observation 
% that any function $g \in \bA^\ell_k$ maps a $k$-tuple $\bar{a}$
% over some set to an $\ell$-tuple $\bar{a}^g$ over the same set.

\begin{lemma}
  Let $\chi$ be a quantifier-free formula of $\AFv{\ell}$, and  $g \in \bA^\ell_k$. For any $\sig(\chi)$-structure $\fA$ and any $\bar{a} \in A^k$, we have $\fA \models \chi^g[\bar{a}]$ if and only if $\fA \models \chi[\bar{a}^g]$.
  \label{lma:essentiallyTrivial}
\end{lemma}
\begin{proof}
	We may assume without loss of generality that $\chi = p(\bx^f_k)$ is atomic, with $f \in \bA^m_\ell$; the general case follows by an easy structural induction.   
	But then, writing $\bar{a} = a_1 \cdots a_m$,  both sides of the 
	bi-conditional amount to the statement $a_{g(f(1))} \cdots a_{g(f(m))} \in p^\fA$. 
\end{proof}

The adjacent type of any tuple in $\fA$ is thus completely determined 
by that of its primitive generator. 
Indeed, let $\fA$ be a $\sigma$-structure, and 
$\bar{a}$ an $\ell$-tuple from $A$. Then $\bar{a}$ has a primitive generator, say $\bar{b}$ of length $k \leq \ell$, with
$\bar{a} = \bar{b}^g$ for some surjective $g \in \bA^\ell_k$.  Now consider any atomic \AFv{\ell}-formula $\alpha$. 
By Lemma~\ref{lma:essentiallyTrivial} we have 
$\fA \models \alpha[\bar{a}]$ if and only if 
%\fA \models \beta[\bar{b}]$.  
$\fA \models \alpha^g[\bar{b}]$.

When evaluating \AFv{\ell}-formulas, for fixed $\ell$, we can disregard any tuples whose primitive length
is greater than $\ell$. Indeed, consider a pair of $\sigma$-structures
$\fA$ and $\fA'$ over a common~domain~$A$. %We say that
We write
%$\fA$ and $\fA'$ \textit{identical to depth $\ell$}, and write 
$\fA \approx_\ell \fA'$, if, for 
any predicate $p$ (of any arity $m \geq 0$), and any $m$-tuple $\bar{a}$ from $A$ of primitive length at most $\ell$, 
$\bar{a} \in p^\fA$ if and only if $\bar{a} \in p^{\fA'}$. That is, $\fA \approx_\ell \fA'$
just in case
$p^\fA$ and $p^{\fA'}$ agree on all those $m$-tuples whose primitive length~is~at~most~$\ell$. The following may 
be proved by structural induction, using Lemma~\ref{lma:essentiallyTrivial}.
\begin{lemma}
Let $\phi$ be an $\AFv{\ell}$-sentence,  and suppose $\fA$ and $\fA'$
are $\sig(\phi)$-structures over a common domain $A$ such that
$\fA \approx_\ell \fA'$. 
Then $\fA \models \phi \Rightarrow\fA' \models \phi$.
\label{lma:approx}
\end{lemma}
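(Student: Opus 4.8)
The statement to prove: if $\phi$ is an $\AFv{\ell}$-sentence and $\fA \approx_\ell \fA'$ are $\sig(\phi)$-structures over a common domain $A$, then $\fA \models \phi \Rightarrow \fA' \models \phi$. Let me sketch the proof.

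Looking at this, the natural approach is structural induction, but we need to be careful about the right induction hypothesis. Since $\phi$ is built up via the grammar for $\AFv{[k]}$-formulas (atoms, Boolean combinations, quantification over $x_{k+1}$), and $\AFv{\ell}$ consists of formulas using only variables among $\bx_\ell$, the cleanest thing is to prove the following strengthening by induction on formula structure: for every subformula $\psi$ of $\phi$ whose free variables lie among $\bx_k$ for some $k \le \ell$, and every $\bar a = a_1 \cdots a_k \in A^k$, we have $\fA \models \psi[\bar a] \iff \fA' \models \psi[\bar a]$. The sentence case ($k=0$) then gives the lemma. Note that because $\approx_\ell$ is symmetric, we in fact get a biconditional for free, which is convenient for the induction (Boolean negation causes no trouble).

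I would carry out the induction as follows. The Boolean cases are immediate from the induction hypothesis. For the quantifier case, $\psi = \exists x_{k+1}\, \psi'$ (the $\forall$ case is dual, or handled via negation): we have $\fA \models \psi[\bar a]$ iff there is $a_{k+1} \in A$ with $\fA \models \psi'[\bar a, a_{k+1}]$; since $\psi'$ has free variables among $\bx_{k+1}$ and $k+1 \le \ell$ (as $\psi$, hence $\psi'$, lies in $\AFv{\ell}$), the induction hypothesis applies to $\psi'$ and the tuple $\bar a \cdot a_{k+1} \in A^{k+1}$, yielding $\fA' \models \psi'[\bar a, a_{k+1}]$; hence $\fA' \models \psi[\bar a]$, and conversely. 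The base case is the only place where the hypothesis $\fA \approx_\ell \fA'$ is actually used. Here $\psi = p(\bx^f_k)$ is an adjacent $k$-atom, with $f \in \bA^m_k$; then $\fA \models \psi[\bar a]$ iff $\bar a^f \in p^{\fA}$, where $\bar a^f = a_{f(1)} \cdots a_{f(m)}$ is an $m$-tuple. The point is that $\bar a^f$ is \emph{generated} by $\bar a$ (via $f$ restricted to its image, which is still adjacent; more simply, $\bar a^f$ is generated by any primitive generator of $\bar a$ by transitivity of generation), so the primitive length of $\bar a^f$ is at most $|\bar a| = k \le \ell$. Therefore $\approx_\ell$ guarantees $\bar a^f \in p^{\fA} \iff \bar a^f \in p^{\fA'}$, which is exactly $\fA \models \psi[\bar a] \iff \fA' \models \psi[\bar a]$.

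The main conceptual point — and the only slightly non-routine step — is the base case observation that the tuple $\bar a^f$ plugged into a predicate by an adjacent $k$-atom always has primitive length $\le k \le \ell$, so that the restricted agreement $\approx_\ell$ suffices. This is immediate from the definitions once one notes that $f$ (or its surjective corestriction) witnesses generation of $\bar a^f$ by $\bar a$, and primitive length is bounded by the length of any generator. Everything else is a bookkeeping induction of the standard kind, and Lemma~\ref{lma:essentiallyTrivial} is not even strictly needed — though one could phrase the base case through it by first passing to a primitive generator of $\bar a$ — so I would keep the argument self-contained and short.
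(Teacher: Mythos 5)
Your proof is correct and follows essentially the same route as the paper's: strengthen the statement to formulas with free variables among $\bx_k$ ($k\le\ell$), observe in the atomic base case that the tuple $\bar a^f$ substituted into a predicate has primitive length at most $k\le\ell$ so that $\approx_\ell$ suffices, and finish by a routine structural induction. Your extra remark justifying the primitive-length bound via the (corestricted) adjacent map $f$ is a welcome bit of detail that the paper leaves implicit.
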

\begin{proof}
  Let $\psi$ be a \textit{formula} of $\AFv{\ell}$ (possibly featuring free variables), 
  and let $k$ ($0 \leq k \leq \ell$) be such that $\psi \in \AFv{[k]}$. (We may as well take the smallest such $k$.)
  We claim that, for any $k$-tuple   of elements $\bar{a}$, $\fA \models \psi[\bar{a}]$ 
  if and only if $\fA' \models \psi[\bar{a}]$.
  To see this, suppose first that $\psi$ is atomic.
  We may write $\psi\coloneqq p(\bx^f_k)$,  where $p$ is a predicate (of arity, say, $m$), and
  $f \in \bA^m_{k}$. If $\bar{a}$ is a $k$-tuple of elements from $A$, then $\fA \models \psi[\bar{a}]$ 
  if and only if $\bar{a}^f \in p^\fA$. But the primitive length of $\bar{a}^f$ is certainly at most $k = |\bar{a}|$. 
  This proves our claim for all $k$ ($0 \leq k \leq \ell$) and for all \textit{atomic} $\psi \in \AFv{[k]}$.
  The general case follows simply by structural induction. The statement of the
  lemma is the special case where $\psi$ has no free variables.
\end{proof}

In view of Lemma~\ref{lma:approx}, when considering models of \AFv{\ell}-sentences, it will be useful to take the extensions of predicates (of whatever arity) to be \textit{undefined in respect of tuples whose primitive length is greater than} $\ell$, since these 
cannot affect the outcome of semantic evaluation. That is, where $\ell$ is clear from context, we typically suppose any
model $\fA$ of $\phi$ to 
determine  whether $\bar{a} \in p^\fA$ for any $m$-ary predicate $p$ and any $m$-tuple $\bar{a}$ \textit{of primitive length at most} $\ell$; but with respect to $m$-tuples $\bar{a}$ having greater primitive length, $\fA$ remains agnostic. To make it clear that the structure
$\fA$ need not be fully defined, we speak of  such a structure~$\fA$ as a {\em layered} structure, and we refer to $\ell$ 
as its {\em primitive length}. Notice that the notion of primitive length is independent of the arities of the predicates interpreted. 
A layered structure $\fA$ may have primitive length, say 3, but still interpret a predicate $p$ of arity, say, 5. In this case, it is
determined whether $\fA \models p[babcc]$, because the primitive generator of $babcc$ is $abc$; however, it will not be determined
whether  $\fA \models p[abcab]$, because $abcab$~is~primitive.

One of the intriguing aspects of layered structures is that they allow us to build up models of \AF-formulas layer by layer.
Suppose $\fA$ has primitive length $k$; and we wish to
construct a layered structure $\fA^+$ of primitive length $k{+}1$ over the same domain $A$, agreeing with the assignments made by $\fA$. Clearly, it suffices to fix the adjacent type of each primitive $(k{+}1)$-tuple $\bar{b}$ from $A$. To fix the adjacent type
of $\bar{b}$---and hence that of its reversal, $\bar{b}^{-1}$---we consider each predicate $p$ in turn---of arity, say, $m$---and decide, for any
$m$-tuple $\bar{c}$ from $A$ whose primitive generator is $\bar{b}$, whether $\fA \models p[\bar{c}]$. 
%Once this has been done, $\atp^\fA[\bar{b}]$ and $\atp^\fA[\bar{b}^{-1}]$ will be fixed. 
Now repeat this process for all pairs of mutually inverse 
primitive words $(\bar{b}, \bar{b}^{-1})$ from $A$ having primitive length~$k{+}1$.
Since every tuple $\bar{c}$ considered for inclusion in the extension of some predicate has primitive length~$k{+}1$, these assignments will not clash with any previously made
in the original structure~$\fA$. Moreover, since, by Theorem~\ref{theo:main}, every $m$-tuple $\bar{c}$ assigned in this process has a unique primitive generator
$\bar{b}$ (up to reversal), these assignments will not clash with each other.
Thus, to increment the primitive length of
$\fA$, one takes each inverse pair $(\bar{b}, \bar{b}^{-1})$ of primitive $(k+1)$-tuples in turn,
and fixes the adjacent type of each $\bar{b}$ consistently with the existing assignments of all tuples generated by
proper infixes of $\bar{b}$, as given in the original structure $\fA$.

We finish this section with an easy technical  observation  that will be needed in the sequel.
Denote by  $\vec{\bA}^m_k$ the set of all functions $f \in \bA^m_k$ such that $f(m) = k$.
Thus, if $f \in \vec{\bA}^m_k$ is used to define a walk of length $m$ on some word $\bar{a}$ of length $k$, then the walk in question ends at the final position of $\bar{a}$. 
\begin{lemma}
	Let $\bar{c}$ be a word of length $m \geq 0$ over some alphabet $A$, and $d$ an element of~$A$ that does not appear in $\bar{c}$. If $\bar{c}d$ is not primitive, then neither is $\bar{c}$. In fact, there is a word~$\bar{a}$
	of length $k < m$ and a function $f \in \vec{\bA}^m_{k}$ such that $\bar{a}^f = \bar{c}$.
	\label{lma:simplePrimitive}
\end{lemma}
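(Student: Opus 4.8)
The plan is to unwind the definition of non-primitivity for $\bar{c}d$ and then ``strip off'' the single occurrence of $d$. First I would fix, using the hypothesis that $\bar{c}d$ is not primitive, a word $\bar{b}=b_1\cdots b_{k'}$ of length $k'<m+1$ together with a \emph{surjective} adjacent function $f'\in\bA^{m+1}_{k'}$ such that $\bar{b}^{f'}=\bar{c}d$. The key observation is that, since $d$ does not occur in $\bar{c}$, the letter $d$ occurs in $\bar{c}d=\bar{b}^{f'}$ only in its last position; hence $b_{f'(i)}=c_i\neq d$ for all $i\le m$, and writing $j^{*}:=f'(m+1)$ we have $b_{j^{*}}=d$ and $f'(i)\neq j^{*}$ for all $i\le m$. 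In walk-metaphor terms: the walk described by $f'$ visits position $j^{*}$ of $\bar{b}$ only at its very last step.

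Next I would use adjacency to pin down where $j^{*}$ sits in $\bar{b}$. Since neighbouring values of $f'$ differ by at most $1$, the set $R:=\{f'(1),\dots,f'(m)\}$ is an interval of integers contained in $[1,k']$, and it avoids $j^{*}$; thus $R\subseteq[1,j^{*}-1]$ or $R\subseteq[j^{*}+1,k']$. On the other hand, surjectivity of $f'$ forces $R\supseteq[1,k']\setminus\{j^{*}\}$. The first alternative is therefore only possible if $j^{*}=k'$ and $R=[1,k'-1]$; the second, symmetrically, if $j^{*}=1$ and $R=[2,k']$. In both cases the sole occurrence of $d$ sits at an \emph{end} of the generator $\bar{b}$.

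In the first case I would take $\bar{a}:=b_1\cdots b_{k'-1}$, of length $k:=k'-1\le m-1<m$, and $f:=f'\restr[1,m]$. Then $f\in\bA^m_k$ is surjective with $\bar{a}^f=\bar{c}$, and since $f(m)=f'(m)$ lies in $R$ and differs from $j^{*}=k'$ by at most $1$, we get $f(m)=k'-1=k$, i.e.\ $f\in\vec{\bA}^m_k$. The second case is handled by the mirror-image construction, $\bar{a}:=b_{k'}b_{k'-1}\cdots b_2$ and $f(i):=k'+1-f'(i)$, which is again a surjective function in $\bA^m_k$ with $\bar{a}^f=\bar{c}$ and $f(m)=k'+1-f'(m)=k'-1=k$; alternatively one simply replaces $\bar{b}$ by its reversal, which also generates $\bar{c}d$ (generation being transitive) and carries $d$ into its last position, reducing the second case to the first. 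Either way $\bar{c}$ is generated by the strictly shorter word $\bar{a}$, hence is not primitive, and $f\in\vec{\bA}^m_k$ is exactly the refinement claimed. I do not expect a genuine obstacle; the only points needing care are the elementary discrete intermediate-value fact that the image of an adjacent function is an interval, and the bookkeeping that shows the boundary occurrence of $d$ really lands at an end of $\bar{b}$ — which is precisely what preserves the ``walk ends at the last position'' condition encoded by $\vec{\bA}^m_k$. The degenerate case $m=0$ is vacuous, since then $\bar{c}d=d$ is primitive.
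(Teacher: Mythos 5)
Your proposal is correct and follows essentially the same route as the paper: locate the unique occurrence of $d$ at an end of the generator (the paper argues this via the walk's full traversal, you via the interval image of the adjacent function), reverse if necessary so that $d$ is last, and drop the final step of the walk, using adjacency to conclude $f(m)=k$. The only difference is that you spell out the "end position" claim more explicitly; otherwise the arguments coincide.
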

\begin{proof}
  Suppose $\bar{c}d = \bar{b}^g$ for some word $\bar{b}$ of length $k+1 \leq m$ and some surjective $g \in \bA^{m+1}_{k +1}$.
  Since $d$ does not occur in $\bar{c}$, it is immediate that $d$ occupies either the first or last position in $\bar{b}$, for
  otherwise, it would be encountered again in the entire traversal of $\bar{b}$ (as $g$ is adjacent and surjective). By reversing $\bar{b}$ if necessary,
  assume the latter, so that we may write $\bar{b} = \bar{a}d$, with $g(m+1) = k+1$. By adjacency, $g(m) = k$, so that
  setting $f = g \setminus \set{\langle m+1, k+1\rangle}$, we have the required $\bar{a}$ and $f$.
\end{proof}

Finally, we remark that, if $f \in \vec{\bA}^m_{k}$, then the function $f^+ = f \cup \langle m+1,k+1 \rangle$
satisfies $f^+ \in \vec{\bA}^{m+1}_{k+1}$. That is, we can extend $f$ by setting $f(m+1) = k+1$, retaining adjacency.
%!TEX root = ../main.tex
\section{Upper bounds for $\AF{}$ and $\AFv{k}$}\label{section:AF-upper-bounds}
In this section, we establish a small model property for each of the fragments $\AFv{k}$ with $k \geq 3$.
Define the function $\ft(k,n)$ inductively by $\ft(0,n) = n$ and $\ft(k{+}1,n) = 2^{\ft(k,n)}$.
We show that, for some fixed polynomial $p$, if $\phi$ is a satisfiable formula
of  $\AFv{k}$, then $\phi$ is satisfied in a structure of size at most $\ft(k-2,p(\sizeof{\phi}))$. We proceed by induction, establishing first the base case for $k = 3$, and then reducing the case $k{+}1$ to the case~$k$. It follows that
the satisfiability problem (= finite satisfiability problem) for $\AFv{k}${} is in $(k{-}2)$-\NExpTime{} for all $k \geq 3$. 
The best lower complexity bound is $\lfloor k/2 \rfloor$-\NExpTime-hard, from the $k$-variable fluted fragment~\cite{phst19}.
For $k \leq 2$, the adjacency restriction has no effect on the complexity of satisfiability. Thus satisfiability for $\AFv{2}$ is \NExpTime-complete, while for $\AFv{1}$ and $\AFv{0}$ it is \NP-complete.
We begin by establishing a normal form lemma for \AF.
\begin{lemma}
Let $\phi$ be a sentence of \AFv{\ell+1}, where $\ell \geq 2$. We can compute, in polynomial time, an \AFv{\ell+1}-formula 
$\psi$ satisfiable over the same domains as $\phi$, of the form
\begin{equation}
\bigwedge_{i \in I} \forall \bx_{\ell} \exists x_{\ell+1}\, \gamma_i \wedge 
\forall \bx_{\ell+1}\,
\delta,
\label{eq:anf}
\end{equation}  
where $I$ is a finite index set, 
the formulas $\gamma_i$ and $\delta$ are quantifier-free.
\label{lma:anf}
\end{lemma}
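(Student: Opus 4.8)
The plan is to convert $\phi$ into Scott-style normal form adapted to the adjacent fragment, in two passes: first rename subformulas to bound quantifier alternation depth / make the formula ``flat'' at level $\ell$, then split the resulting universally quantified conjunction into the $\exists$-conjuncts $\forall\bx_\ell\exists x_{\ell+1}\,\gamma_i$ and the single $\forall\bx_{\ell+1}\,\delta$.

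\textbf{Step 1: Renaming nested quantifiers.} First I would process $\phi$ from the inside out. Whenever a subformula $Q x_{k+1}\,\chi$ occurs with $k+1 \le \ell$ — i.e. a quantifier that is not at the outermost admissible level $\ell+1$ — I introduce a fresh predicate $r$ of arity equal to $k$ (the number of free variables $x_1,\dots,x_k$ of $Q x_{k+1}\,\chi$ that may actually occur), replace the occurrence of $Q x_{k+1}\,\chi$ by the atom $r(x_1\cdots x_k) = r(\bx_k)$, and conjoin the definitional axiom $\forall\bx_k\,(r(\bx_k)\leftrightarrow Q x_{k+1}\,\chi)$. Since $\bx_k$ is a prefix of $\bx_{k+1}$ and in particular an adjacent word, the new atom $r(\bx_k)$ is an adjacent $(k{+}1)$-atom (indeed an adjacent $k$-atom), so adjacency is preserved; and the definitional axiom, once $\chi$ has itself been normalized so that it contains only atoms and at most one further level of quantification, lies in $\AFv{[k]}\subseteq\AFv{\ell+1}$. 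Iterating bottom-up, and observing that each replacement strictly decreases the quantifier-nesting measure while only polynomially increasing size (each subformula is named once), after polynomially many steps $\phi$ is equisatisfiable over the same domains with a conjunction of sentences each of the shape $\forall\bx_k\,\theta$ or $\forall\bx_k\,(\theta_1\leftrightarrow Q x_{k+1}\,\theta_2)$ with $\theta,\theta_1,\theta_2$ quantifier-free and $k\le\ell$.

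\textbf{Step 2: Padding to level $\ell$ and assembling the two blocks.} Each conjunct from Step 1 has at most one quantifier in its matrix over $x_{k+1}$, with $k\le\ell$. Using the fact that $\forall\bx_k\,\rho$ is logically equivalent to $\forall\bx_\ell\,\rho$ for quantifier-free $\rho$ over $\bx_k$ (dummy universal quantifiers are harmless), pad every purely universal conjunct up to a prefix $\forall\bx_{\ell}$ resp. $\forall\bx_{\ell+1}$; collect all of these into the single conjunct $\forall\bx_{\ell+1}\,\delta$ by taking $\delta$ to be the conjunction of the (padded) matrices. For a conjunct of the form $\forall\bx_k\,(\theta_1\to\exists x_{k+1}\,\theta_2)$, first pad the outer block to $\forall\bx_\ell$, then rename the bound variable so the witness variable is $x_{\ell+1}$ (replacing $x_{k+1}$ by $x_{\ell+1}$ throughout $\theta_2$; this substitution is among the adjacent renamings of Lemma~\ref{lma:essentiallyTrivial}, so it stays in the fragment), yielding $\forall\bx_\ell\exists x_{\ell+1}\,\gamma_i$ with $\gamma_i\coloneqq(\theta_1\to\theta_2)$; index these by $i\in I$. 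The ``only if'' direction of the biconditional, $\forall\bx_k\,(\theta_1\leftarrow\exists x_{k+1}\,\theta_2)$, is equivalent to $\forall\bx_{k+1}\,(\theta_2\to\theta_1)$, which is purely universal over $\bx_{k+1}$, hence after padding goes into $\delta$. Boolean bookkeeping then puts $\psi$ exactly in the form~\eqref{eq:anf}; the construction is visibly polynomial-time and preserves the class of models.

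\textbf{Main obstacle.} The only delicate point is bookkeeping about \emph{which} arguments a renamed subformula legitimately has, and verifying that every atom and every renaming produced stays \emph{adjacent}. Because $\AFv{[k]}$ formulas have free variables only among $x_1,\dots,x_k$ and the standard variable order is fixed, the fresh atoms are always of the form $r(\bx_k)$ with $\bx_k$ an adjacent (indeed monotone) word, and the only substitutions performed are of the form $x_{k+1}\mapsto x_{\ell+1}$ composed with padding, each an adjacent reindexing; so by the closure of $\AFv{k}$ under adjacent substitution (the argument preceding Lemma~\ref{lma:essentiallyTrivial}) nothing escapes the fragment. I would spell this out once carefully and then treat the rest as routine. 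Equisatisfiability over the same domains is immediate since every step is either a logical equivalence or the introduction of a conservative predicate definition, which does not change the domain of any model.
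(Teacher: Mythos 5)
Your overall strategy coincides with the paper's proof of Lemma~\ref{lma:anf}: rename innermost quantified subformulas by fresh $k$-ary predicates applied to the (adjacent, indeed monotone) word $\bx_k$, note that this is a conservative definition preserving the domain, and then massage the resulting conjuncts into the two blocks of~\eqref{eq:anf} by padding and re-indexing. (The paper writes the definitional axioms directly as the pair $\forall \bx_k Q x_{k+1}(p(\bx_k)\rightarrow\chi)$ and $\forall \bx_k \bar{Q} x_{k+1}(\chi\rightarrow p(\bx_k))$, so that each conjunct is already of $\forall^{*}\exists$ or $\forall^{*}$ shape, rather than using a biconditional and splitting afterwards as you do; this difference is cosmetic.)

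However, your Step~2 has a genuine flaw at exactly the point you yourself flag as delicate. The substitution that replaces $x_{k+1}$ by $x_{\ell+1}$ while leaving $x_1,\dots,x_k$ fixed is \emph{not} an adjacent re-indexing, and Lemma~\ref{lma:essentiallyTrivial} does not license it: the corresponding function $g$ with $g(i)=i$ for $i\le k$ and $g(k+1)=\ell+1$ is not in $\bA^{k+1}_{\ell+1}$ when $k<\ell$, since the last step jumps by $\ell+1-k\ge 2$. Concretely, an adjacent atom such as $p(x_k x_{k+1})$ occurring in $\theta_2$ becomes $p(x_k x_{\ell+1})$, which is not an adjacent $(\ell+1)$-atom, so the $\gamma_i$ you construct need not lie in \AFv{\ell+1} at all --- defeating the purpose of the lemma. (There is also a minor capture issue: padding the prefix to $\forall\bx_\ell$ before renaming re-uses $x_{k+1},\dots,x_\ell$, which then shadow the inner existential.) The repair is what the paper's terse phrase ``re-indexing variables'' amounts to: shift the \emph{entire} block, substituting $x_i\mapsto x_{i+\ell-k}$ for all $i\in[1,k+1]$ --- a translation, hence an adjacent re-indexing under which every atom $p(\bx_{k+1}^f)$ becomes $p(\bx_{\ell+1}^{g\circ f})$ with $g\circ f$ adjacent --- yielding $\forall x_{\ell-k+1}\cdots\forall x_\ell\exists x_{\ell+1}\,\gamma_i$, and then prepend the vacuous quantifiers $\forall x_1\cdots\forall x_{\ell-k}$. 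With that correction (and the symmetric treatment of universally quantified subformulas, which you leave implicit), your argument is essentially the paper's.
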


% To simplify the presentation, we shall always assume that any normal-form formula $\phi$ as given by~\eqref{eq:anf} over a signature
% $\Sigma$ satisfies
% $|\Sigma| > |I|$. There is no loss of generality involved here, since we may replace each conjunct $\forall \bx_{\ell} \exists x_{\ell+1}\, \gamma_i$ ($i \in I$) by  
% $\forall \bx_{\ell} \exists x_{\ell+1}\, (\gamma_i \wedge p_i(\bx_\ell x_{\ell{+}1}))$ with $p_i$ a fresh $(\ell{+}1)$-ary predicate,
% without
% affecting the domains over which $\phi$ is satisfiable. Further, we may as well assume that $\Sigma$ contains no 
% predicates of arity 0 (proposition letters), since their truth-values may be guessed.

Let $\phi$ be a normal-form \AFv{\ell+1}-formula as given in~\eqref{eq:anf}, over signature $\sigma$.
Recall the operation ${\cdot}^f$ on quantifier-free adjacent formulas employed in Lemma~\ref{lma:essentiallyTrivial}, as well as
the sets of functions $\vec{\bA}^\ell_k$ employed in Lemma~\ref{lma:simplePrimitive}. For
any $f \in \vec{\bA}^\ell_{k}$, we continue to write 
$f^+$ for the function (in $\vec{\bA}^{\ell+1}_{k+1}$) extending $f$ by setting $f(\ell+1) = k+1$.
Now define the {\em adjacent closure of $\phi$}, denoted $\phi^\#$, to be:
\begin{equation*}
\bigwedge_{\hspace{2mm}i \in I\phantom{\bigwedge^{\ell+1}_{k+1}}\hspace{-5mm}} \hspace{-3mm}
\hspace{-1mm} \bigwedge_{\hspace{2mm}k=1\phantom{\bigwedge^{\ell+1}_{k+1}}\hspace{-4mm}}^{\ell-1} \hspace{-3mm}
\bigwedge_{f \in \vec{\bA}^{\ell}_{k}} \hspace{-2mm}
\forall \bx_{k} \exists x_{k+1}\, \gamma^{f^+}_i 
\wedge
\bigwedge_{\hspace{2mm}k=1\phantom{\bigwedge^{\ell+1}_{k+1}}\hspace{-5mm}}^{\ell}  \hspace{-4mm}
\bigwedge_{\hspace{2mm}g \in \bA^{\ell+1}_{k}\phantom{\bigwedge^{\ell+1}_{k+1}}\hspace{-8mm}} \hspace{-3mm}
\forall \bx_{k}\, \delta^g.
\end{equation*}  
Observe that the
conjunctions for the $\forall^\ell\exists$-formulas range over $\vec{\bA}^{\ell}_{k}$, while the conjunctions
for the purely universal formula range over the whole of $\bA^{\ell+1}_{k}$.
Up to trivial logical rearrangement and re-indexing of variables, $\phi^\#$ is actually a normal-form \AFv{\ell}-formula. 
In effect, $\phi^\#$ is the result of identifying various universally quantified variables in $\phi$ in a way which preserves adjacency.
The following~lemma~is~therefore~immediate. 
\begin{lemma}
  Let $\phi \in \AFv{\ell+1}$ be in normal-form. Then $\phi \models  \phi^\#$. 
  \label{lma:adjacentClosure}
\end{lemma}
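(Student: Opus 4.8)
The plan is to verify the semantic entailment $\phi \models \phi^\#$ directly: assuming $\fA \models \phi$ for an arbitrary $\sig(\phi)$-structure $\fA$, we show that each of the two big conjuncts of $\phi^\#$ holds in $\fA$. Recall $\phi = \bigwedge_{i\in I}\forall\bx_\ell\exists x_{\ell+1}\,\gamma_i \wedge \forall\bx_{\ell+1}\,\delta$, so the work splits along these two kinds of conjuncts.

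\emph{The universal part.} Fix $k$ with $1 \leq k \leq \ell$, a function $g \in \bA^{\ell+1}_k$, and a tuple $\bar a \in A^k$. I must show $\fA \models \delta^g[\bar a]$. By Lemma~\ref{lma:essentiallyTrivial}, $\fA \models \delta^g[\bar a]$ iff $\fA \models \delta[\bar a^g]$, and $\bar a^g$ is an $(\ell+1)$-tuple over $A$. Since $\fA \models \forall\bx_{\ell+1}\,\delta$ and $\delta$ is quantifier-free with free variables among $\bx_{\ell+1}$, it is satisfied by \emph{every} $(\ell+1)$-tuple, in particular by $\bar a^g$. Hence $\fA \models \delta[\bar a^g]$, which gives what we need.

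\emph{The $\forall^\ell\exists$ part.} Fix $i \in I$, $k$ with $1 \leq k \leq \ell-1$, a function $f \in \vec{\bA}^\ell_k$, and a tuple $\bar a \in A^k$. I must show $\fA \models \exists x_{k+1}\,\gamma_i^{f^+}[\bar a]$, i.e.\ that there is $b \in A$ with $\fA \models \gamma_i^{f^+}[\bar a\, b]$. First apply $\forall\bx_\ell\exists x_{\ell+1}\,\gamma_i$ to the $\ell$-tuple $\bar a^f$ (legitimate since $f \in \vec{\bA}^\ell_k$ and $\bar a \in A^k$, so $\bar a^f \in A^\ell$): there is $b \in A$ with $\fA \models \gamma_i[\bar a^f\, b]$. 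Now observe that, because $f \in \vec{\bA}^\ell_k$ (so $f(\ell)=k$) and $f^+ = f \cup \{\langle\ell+1,k+1\rangle\}$ with $f^+ \in \vec{\bA}^{\ell+1}_{k+1}$ by the remark following Lemma~\ref{lma:simplePrimitive}, the $(k+1)$-tuple $(\bar a\, b)$ satisfies $(\bar a\, b)^{f^+} = \bar a^f\, b$: the walk $f^+$ traverses $\bar a$ exactly as $f$ does, ending at position $k$, and then steps to the new final position $k+1$, which holds $b$. Hence by Lemma~\ref{lma:essentiallyTrivial}, $\fA \models \gamma_i[\bar a^f\, b]$ is equivalent to $\fA \models \gamma_i^{f^+}[\bar a\, b]$, which is exactly the witness required.

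\emph{Main obstacle.} There is no real obstacle: both parts are routine once the bookkeeping about composing the substitution operator ${\cdot}^f$ with tuple-restriction (Lemma~\ref{lma:essentiallyTrivial}) and the identity $(\bar a\, b)^{f^+} = \bar a^f\, b$ is set up carefully. The one point demanding a line of care is checking that $\gamma_i^{f^+}$ and $\delta^g$ are genuinely $\AFv{k+1}$- resp.\ $\AFv{k}$-formulas, so that $\phi^\#$ is well-formed in $\AFv{\ell}$ --- but this is precisely the closure-of-adjacent-functions-under-composition observation already made before Lemma~\ref{lma:essentiallyTrivial}, together with $f^+$ being adjacent. Collecting the two parts over all choices of $i$, $k$, $f$, $g$ yields $\fA \models \phi^\#$, and since $\fA$ was arbitrary, $\phi \models \phi^\#$.
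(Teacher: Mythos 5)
Your proof is correct and matches the paper's reasoning: the paper treats the lemma as immediate, since $\phi^\#$ arises by instantiating (identifying) the universally quantified variables of $\phi$ along adjacency-preserving maps, and your argument is exactly that verification, carried out via Lemma~\ref{lma:essentiallyTrivial} together with the identity $(\bar a\, b)^{f^+} = \bar a^f\, b$ and the adjacency of $f^+$ for $f \in \vec{\bA}^{\ell}_{k}$.
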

The following notation will be useful.
If $\chi$ is any quantifier-free $\AFv{\ell+1}$-formula, we denote by $\chi^{-1}$ the formula $\chi(x_{\ell+1}, \dots, x_1)$
obtained by simultaneously replacing each variable $x_h$ by $x_{\ell -h +2}$ ($1 \leq h \leq \ell+1$); and we denote by $\hat{\chi}$ the formula $\chi \wedge \chi^{-1}$. Obviously $\chi^{-1}$ and $\hat{\chi}$ are also in $\AFv{\ell+1}$. 
If $\eta$ is an adjacent $\ell$-type, we denote by $\eta^+$ the 
quantifier-free $\AFv{\ell+1}$-formula $\eta(x_2, \dots, x_{\ell{+}1})$ obtained by incrementing the index of each variable. 
Finally, if $\chi$ is a quantifier-free $\AFv{\ell+1}$-formula over some signature $\sigma$ (which we take to be given by context), we
denote by $\chi^\circ$ the quantifier-free $\AFv{\ell}$-formula
$\smash{\bigvee \set{\eta \in \Atp^\sigma_\ell \mid \text{$\chi \wedge \eta^+$ is consistent}}}$. The intuition in this last case is that, if $a$ is an element  and $\bar{a}$ an $\ell$-tuple of elements such that $a\bar{a}$ satisfies $\chi$ in some structure, then
$\chi^\circ$ is the strongest statement that follows regarding $\bar{a}$. 

Now we are in a position to tackle the main task of this section, namely, to bound the complexity of the satisfiability 
problem for $\AFv{k}$ ($k \geq 3$). Certainly the satisfiability problem for $\AFv{2}$ is in \NExpTime, since any normal-form $\AFv{2}$-formula is in $\FOt$.
Here, we strengthen that result to $\AFv{3}$ (which will sharpen the bound of Theorem~\ref{theo:upperBound} by one exponential). The proof is 
similar to an analogous result for the three-variable \textit{fluted} fragment, $\FLv{3}$~\cite[Lemma~4.5]{phst19}

Let $\sigma$ be a relational signature. 
If $\pi$ is a 1-type over $\sigma$, define the 2-type $\pi^2$, over the same
signature, to be $\set{\lambda \mid \text{$\lambda$ a literal in \AFv{[2]} s.t.~$\lambda(x_1, x_1) \in  \pi$}}$. 
The intuition here is that
if $\pi$ is the type of an element $a$ in some structure, then $\pi^2$ is the type of the pair $aa$.
A {\em connector-type} 
(\textit{over} $\sigma$) is a set $\omega$ of 2-types over $\sigma$ subject to the condition that there exists some 1-type $\pi$ over $\sigma$ such that $\pi^2 \in \omega$ and $ \zeta \models \pi$ for all $\zeta \in \omega$. 
This 1-type $\pi$ is clearly unique, and we denote it by $\tp(\omega)$. If $\fA$ is any structure interpreting $\sigma$ and $a \in A$, then $a$ defines a connector-type $\omega$ over $\sigma$ in a natural way by
setting $\omega = \set{\tp^\fA[a,b] \mid b \in A}$. We refer to $\omega$ as {\em the connector-type of $a$ in $\fA$}, and  denote it $\con^\fA[a]$. It follows immediately from the above definitions that $\tp(\con^\fA[a]) = \tp^\fA[a]$.
When speaking of connector-types, we suppress
reference to $\sigma$ if irrelevant or clear from context. 

Let $\phi$ be a normal-form formula of $\AFv{3}$, as given in~\eqref{eq:anf}, with $\ell = 2$ and $\sigma = \sig(\phi)$.
In the sequel we refer freely to the subformulas $\gamma_i$ ($i \in I$) and $\delta$ of $\phi$. Say that a connector-type~$\omega$ is {\em compatible} with $\phi$ if the following conditions hold:
\begin{description}
	\item[$\mathrm{L}\exists_1$:] for all $i \in I$,  
	there exists $\eta \in \omega$~s.t.~$\eta \models \gamma_i(x_1,x_1,x_2)$. 
	\item[$\mathrm{L}\exists_2$:] for all $\zeta$ such that $\zeta^{-1} \in \omega$ and all $i \in I$,  
	there exists $\eta \in \omega$ 
	such that the \AFv{3}-formula $\zeta \wedge \eta^+ \wedge \gamma_i \wedge \hat{\delta}$~is~consistent;
	\item[$\mathrm{L}\forall_1$:] for all $\eta \in \omega$ and all $f \in \bA^3_2$, $\eta \models \delta^f$;
	\item[$\mathrm{L}\forall_2$:] for all $\zeta$ such that $\zeta^{-1} \in \omega$ and all $\eta \in \omega$,  
	the \AFv{3}-formula $\zeta \wedge \eta^+ \wedge \hat{\delta}$ is consistent.
\end{description}

The proofs of Lemmas~\ref{lma:simpleConnector}--\ref{lma:smallConnector} are straightforward and will be omitted.
\begin{lemma}
	If $\phi$ is a normal-form $\AFv{3}$-formula, $\fA \models \phi$ and $a \in A$, then  
	$\con^\fA[a]$ is compatible with $\phi$.
	\label{lma:simpleConnector}
\end{lemma}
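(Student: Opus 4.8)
The plan is to fix a model $\fA\models\phi$ and an element $a\in A$, set $\omega=\con^{\fA}[a]=\set{\tp^{\fA}[a,b]\mid b\in A}$ --- which is a connector-type with $\tp(\omega)=\tp^{\fA}[a]$, as already observed --- and verify the four clauses $\mathrm{L}\exists_1$, $\mathrm{L}\exists_2$, $\mathrm{L}\forall_1$, $\mathrm{L}\forall_2$ by producing witnesses directly inside $\fA$. Two facts are used repeatedly. First, since $\phi$ is in normal form~\eqref{eq:anf} with $\ell=2$, $\fA$ satisfies each conjunct $\forall x_1\forall x_2\exists x_3\,\gamma_i$ and the conjunct $\forall\bx_3\,\delta$. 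Second, a complete $2$-type over $\sigma$ decides every quantifier-free two-variable formula over $\sigma$, so to prove $\eta\models\chi$ for $\eta=\tp^{\fA}[c,d]$ it is enough to check $\fA\models\chi[c,d]$; variable renamings such as $\gamma_i(x_1,x_1,x_2)$, $\delta^{f}$, $\eta^{+}$, $\delta^{-1}$ are handled by Lemma~\ref{lma:essentiallyTrivial} and the substitution conventions of Section~\ref{sec:preliminaries}. I also use throughout that ``$\zeta^{-1}\in\omega$'' says exactly ``$\zeta=\tp^{\fA}[b,a]$ for some $b\in A$''.

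The ``single $2$-type'' clauses are immediate. For $\mathrm{L}\exists_1$: given $i\in I$, instantiate $\forall x_1\forall x_2\exists x_3\,\gamma_i$ at $x_1,x_2\leftarrow a$, getting $b\in A$ with $\fA\models\gamma_i[a,a,b]$; since $\fA\models\gamma_i(x_1,x_1,x_2)[a,b]$ iff $\fA\models\gamma_i[a,a,b]$ by Lemma~\ref{lma:essentiallyTrivial}, the $2$-type $\eta:=\tp^{\fA}[a,b]\in\omega$ satisfies $\eta\models\gamma_i(x_1,x_1,x_2)$. For $\mathrm{L}\forall_1$: given $\eta=\tp^{\fA}[a,b]\in\omega$ and $f\in\bA^3_2$, Lemma~\ref{lma:essentiallyTrivial} gives $\fA\models\delta^{f}[a,b]$ iff $\fA\models\delta[(a,b)^{f}]$, and the latter holds because $(a,b)^{f}$ is a $3$-tuple and $\fA\models\forall\bx_3\,\delta$; hence $\eta\models\delta^{f}$.

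The ``consistency'' clauses are discharged by producing a satisfying assignment for the relevant $\AFv{3}$-formula inside $\fA$ itself. Suppose $\zeta^{-1}\in\omega$, so $\zeta=\tp^{\fA}[b,a]$ for some $b\in A$. For $\mathrm{L}\forall_2$: given $\eta\in\omega$, choose $c\in A$ with $\tp^{\fA}[a,c]=\eta$ and look at the triple $(b,a,c)$. Then $\fA\models\zeta[b,a,c]$ (as $\zeta$ ignores $x_3$ and $\zeta=\tp^{\fA}[b,a]$), $\fA\models\eta^{+}[b,a,c]$ (as this is equivalent to $\fA\models\eta[a,c]$), and $\fA\models\hat{\delta}[b,a,c]$ because $\fA\models\delta[b,a,c]$ and $\fA\models\delta^{-1}[b,a,c]$, i.e., $\fA\models\delta[c,a,b]$ --- both instances of $\forall\bx_3\,\delta$. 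Hence $\zeta\wedge\eta^{+}\wedge\hat{\delta}$ is consistent. For $\mathrm{L}\exists_2$ we may also choose $\eta$: given $i\in I$, instantiate $\forall x_1\forall x_2\exists x_3\,\gamma_i$ at $x_1\leftarrow b$, $x_2\leftarrow a$ to obtain $c$ with $\fA\models\gamma_i[b,a,c]$, and set $\eta:=\tp^{\fA}[a,c]\in\omega$; the triple $(b,a,c)$ then additionally satisfies $\gamma_i$, and the computation just made shows it also satisfies $\zeta\wedge\eta^{+}\wedge\hat{\delta}$, so $\zeta\wedge\eta^{+}\wedge\gamma_i\wedge\hat{\delta}$ is consistent.

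I expect no real obstacle: the proof is a straightforward unwinding of the definitions of connector-type and of the superscript operations ${\cdot}^{-1}$, ${\cdot}^{+}$, ${\cdot}^{f}$, $\hat{\cdot}$, followed by reading witnesses off $\fA$. The only thing needing care is the bookkeeping of reversals --- keeping in mind that a clause about $\zeta^{-1}\in\omega$ concerns a pair whose \emph{second} coordinate is $a$, and that $\hat{\delta}$ forces one to invoke the ``reversed'' instance $\delta[c,a,b]$ of $\forall\bx_3\,\delta$ --- together with the observation that ``consistent'' is met simply by exhibiting an assignment in $\fA$, so no auxiliary structure is needed. This is presumably why the authors call the proof straightforward.
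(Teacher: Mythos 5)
Your proof is correct and follows essentially the same route as the paper's: for each of the four conditions you read the required witnesses directly off the model, exactly as in the paper's (terser) argument, which spells out $\mathrm{L}\exists_1$ and $\mathrm{L}\exists_2$ and notes the universal conditions are similar. The extra care you take with reversals, the $(\cdot)^+$/$(\cdot)^f$ substitutions, and the fact that a complete $2$-type decides the relevant quantifier-free two-variable formulas is exactly the bookkeeping the paper leaves implicit.
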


A set $\Omega$ of connector-types is said to be {\em coherent} if 
the following conditions hold:
\begin{description}
	\item[$\mathrm{G}\exists$:] for all $\omega \in \Omega$ and all $\zeta \in \omega$, there exists $\omega' \in \Omega$ 
	such that $\zeta^{-1} \in \omega'$;
	\item[$\mathrm{G}\forall$:] for all $\omega, \omega' \in \Omega$,  
	there exists a 2-type $\zeta$ such that $\zeta \in \omega$ and $\zeta^{-1} \in \omega'$.
\end{description}
\begin{lemma}
	Let $\fA$ be a structure. Then $\Omega = \set{\con^\fA[a] \mid a \in A}$ is coherent.
	\label{lma:simpleConnectorSet}
\end{lemma}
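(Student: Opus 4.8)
The plan is to verify the two clauses $\mathrm{G}\exists$ and $\mathrm{G}\forall$ of the definition of coherence directly from the definition $\con^\fA[a] = \set{\tp^\fA[a,b] \mid b \in A}$; recall that each such set is indeed a connector-type, as observed when connector-types were introduced, so $\Omega$ is a set of connector-types. The one auxiliary fact I would isolate first is the elementary \emph{reversal identity} for $2$-types: for all $a,b \in A$ we have $\tp^\fA[b,a] = (\tp^\fA[a,b])^{-1}$. This is immediate from the way $(\cdot)^{-1}$ was defined on quantifier-free $\AFv{2}$-formulas (for $\ell = 1$ it simply interchanges $x_1$ and $x_2$): for every $2$-literal $\lambda$ we have $\fA \models \lambda[b,a]$ iff $\fA \models \lambda^{-1}[a,b]$, hence $\lambda \in \tp^\fA[b,a]$ iff $\lambda^{-1} \in \tp^\fA[a,b]$.

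For $\mathrm{G}\exists$ I would take an arbitrary $\omega \in \Omega$, say $\omega = \con^\fA[a]$, and an arbitrary $\zeta \in \omega$. By definition of $\con^\fA[a]$ there is some $b \in A$ with $\zeta = \tp^\fA[a,b]$. Then $\omega' := \con^\fA[b] \in \Omega$ is as required: by the reversal identity, $\zeta^{-1} = (\tp^\fA[a,b])^{-1} = \tp^\fA[b,a] \in \con^\fA[b] = \omega'$.

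For $\mathrm{G}\forall$ I would take arbitrary $\omega = \con^\fA[a]$ and $\omega' = \con^\fA[b]$ in $\Omega$ and simply exhibit the $2$-type $\zeta := \tp^\fA[a,b]$. Then $\zeta \in \con^\fA[a] = \omega$ (witnessed by the element $b$), while, again by the reversal identity, $\zeta^{-1} = \tp^\fA[b,a] \in \con^\fA[b] = \omega'$ (witnessed by the element $a$). This discharges both clauses and hence establishes coherence of $\Omega$.

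I do not expect any genuine obstacle here: the whole argument is a direct unpacking of the definitions once the reversal identity is in hand, which is presumably why the authors declare the proof straightforward and omit it. The only point that merits a moment's attention is the bookkeeping around $(\cdot)^{-1}$: one must keep its action on variable indices (as fixed in Section~\ref{section:AF-upper-bounds}) consistent with its action on $2$-types, so that the condition ``$\zeta^{-1} \in \omega'$'' is being read in the intended way.
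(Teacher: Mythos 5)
Your proof is correct and follows essentially the same route as the paper's own (omitted-in-main-text) argument: for $\mathrm{G}\exists$ pick a witness $b$ with $\tp^\fA[a,b]=\zeta$ and take $\omega'=\con^\fA[b]$, and for $\mathrm{G}\forall$ exhibit $\zeta=\tp^\fA[a,b]$, using that $\tp^\fA[b,a]=(\tp^\fA[a,b])^{-1}$. Your explicit statement of the reversal identity is a harmless (and correct) elaboration of a step the paper leaves implicit.
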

Define a {\em certificate} for $\phi$ to be a non-empty, coherent set of connector-types, all of which are compatible with $\phi$.
\begin{lemma}
	Any satisfiable normal-form $\AFv{3}$-formula has a certificate $\Omega$ such that both~$|\Omega|$ and $|\bigcup \Omega|$ are  $2^{O(\sizeof{\phi})}$.
	\label{lma:smallConnector}
\end{lemma}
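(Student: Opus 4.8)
The plan is to start from an arbitrary model $\fA \models \phi$ and prune the family $\Omega_\fA := \{\con^\fA[a] \mid a \in A\}$ to a subfamily of singly-exponential size that is still a certificate. By Lemma~\ref{lma:simpleConnector} every member of $\Omega_\fA$ is compatible with $\phi$, and by Lemma~\ref{lma:simpleConnectorSet} the whole family is coherent. Compatibility of the members is obviously inherited by any subfamily, and so is condition $\mathrm{G}\forall$ (which is a property of \emph{pairs} of connector-types, witnessed entirely inside $\fA$). The only delicate point is $\mathrm{G}\exists$, which requires that, whenever a $2$-type $\zeta$ occurs in a retained connector-type, some retained connector-type realizes $\zeta^{-1}$. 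So the whole task is to choose the subfamily so as to keep $\mathrm{G}\exists$ alive while keeping it small.

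The key observation is that every $2$-type $\zeta$ occurring in some $\con^\fA[a]$ is \emph{realized} in $\fA$, i.e.\ $\zeta = \tp^\fA[a,b]$ for some $b$; and that there are only $2^{O(\sizeof{\phi})}$ $2$-types over $\sigma := \sig(\phi)$, since the normal form of Lemma~\ref{lma:anf} uses polynomially many predicates of bounded arity, so there are only $O(\sizeof{\phi})$ relevant $2$-atoms over $\sigma$. Hence it suffices to retain, for each realized $2$-type, a single witnessing element. Concretely: for each $2$-type $\zeta$ realized in $\fA$, fix an element $a_\zeta \in A$ with $\zeta \in \con^\fA[a_\zeta]$ (e.g.\ the first coordinate of a pair realizing $\zeta$), and let $\Omega$ be the set of all connector-types $\con^\fA[a_\zeta]$ with $\zeta$ ranging over the $2$-types realized in $\fA$. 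Then $|\Omega|$ is at most the number of realized $2$-types, hence $2^{O(\sizeof{\phi})}$; and $\bigcup \Omega$ is a set of $2$-types over $\sigma$, so $|\bigcup \Omega| \le 2^{O(\sizeof{\phi})}$ as well. Non-emptiness is immediate since $A \neq \emptyset$, and every member of $\Omega$ has the form $\con^\fA[a]$ and is thus compatible with $\phi$ by Lemma~\ref{lma:simpleConnector}.

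It then remains to verify coherence of $\Omega$. For $\mathrm{G}\forall$, given $\con^\fA[a], \con^\fA[b] \in \Omega$, the $2$-type $\zeta := \tp^\fA[a,b]$ lies in $\con^\fA[a]$ and $\zeta^{-1} = \tp^\fA[b,a]$ lies in $\con^\fA[b]$, exactly as in Lemma~\ref{lma:simpleConnectorSet}. For $\mathrm{G}\exists$, let $\omega = \con^\fA[a] \in \Omega$ and $\zeta \in \omega$; then $\zeta = \tp^\fA[a,b]$ for some $b$, so $\zeta^{-1} = \tp^\fA[b,a]$ is again realized in $\fA$, whence $\con^\fA[a_{\zeta^{-1}}] \in \Omega$ and, by the choice of $a_{\zeta^{-1}}$, it contains $\zeta^{-1}$; taking $\omega' := \con^\fA[a_{\zeta^{-1}}]$ discharges the requirement. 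Thus $\Omega$ is a certificate of the required size.

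The one step that genuinely needs thought is $\mathrm{G}\exists$: a careless pruning of $\Omega_\fA$ would destroy the connector-types witnessing inverses. Indexing the retained connector-types by \emph{all} realized $2$-types — rather than, say, by realized $1$-types — is precisely what makes the choice self-supporting, since the $2$-types appearing inside a connector-type are exactly the realized ones and the realized $2$-types are closed under reversal. Everything else is routine bookkeeping, closely paralleling the $\FLv{3}$ argument of~\cite[Lemma~4.5]{phst19}.
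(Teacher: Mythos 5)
Your construction of $\Omega$ (one connector-type per realized $2$-type, chosen via witnesses $a_\zeta$) is sound as far as coherence and compatibility go, and it is in fact a neat, non-iterative alternative to the closure procedure the paper uses. But the size bound — which is the whole point of the lemma — rests on a false counting claim. You assert that the normal form of Lemma~\ref{lma:anf} "uses polynomially many predicates of bounded arity, so there are only $O(\sizeof{\phi})$ relevant $2$-atoms over $\sigma$". The normal-form transformation does not bound arities: a normal-form $\AFv{3}$-formula may contain atoms such as $p(x_1x_2x_3x_2x_1\cdots)$ with $p$ of arity up to roughly $\sizeof{\phi}$, and every one of the $2^m$ words over $\{x_1,x_2\}$ of length $m$ is an admissible argument sequence for an $m$-ary $p$ (adjacency is vacuous on two variables). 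So there are $2^{O(\sizeof{\phi})}$ adjacent $2$-atoms over $\sig(\phi)$, hence $2^{2^{O(\sizeof{\phi})}}$ $2$-types, and an arbitrary model $\fA\models\phi$ can realize doubly exponentially many of them (a high-arity predicate interpreted richly over a large domain does this even for a trivially satisfiable $\phi$). Since your $|\Omega|$ and $|\bigcup\Omega|$ are bounded only by the number of $2$-types realized in $\fA$, both bounds can be doubly exponential, and the lemma is not established.

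The missing idea is a preliminary pruning of the model: pass from $\fA$ to $\fA'$ by removing from every predicate extension all tuples that are not substitution instances of argument sequences actually occurring in $\phi$ (the paper's "sensitive" tuples). This does not affect the truth of $\phi$, but it forces every non-sensitive atom to be false, so the $2$-types realized in $\fA'$ are determined by only $O(\sizeof{\phi})$ atoms and hence number $2^{O(\sizeof{\phi})}$. With that step in place, your selection of one witnessing connector-type per realized $2$-type immediately yields both $|\Omega|\le 2^{O(\sizeof{\phi})}$ and $|\bigcup\Omega|\le 2^{O(\sizeof{\phi})}$, and your verification of $\mathrm{G}\exists$ and $\mathrm{G}\forall$ goes through verbatim; the paper instead extracts the small certificate by a greedy closure under $\mathrm{G}\exists$, but that difference is cosmetic. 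Without the pruning (or some equivalent restriction of $2$-types to the atoms relevant to $\phi$), the argument fails at exactly the quantitative claim the lemma is about.
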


We are now in a position to obtain a bound on the size of models of $\AFv{3}$-formulas. 
\begin{lemma}
	Let $\phi$ be a normal-form $\AFv{3}$-formula over a signature $\sigma$.
	If $\phi$ is satisfiable, then it has a model of size~$2^{O(\sizeof{\phi})}$.
	\label{lma:base}
\end{lemma}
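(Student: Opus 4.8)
The plan is to manufacture a model of $\phi$ directly from a certificate, extending the ``circular witnessing'' technique of two-variable logic to three variables by means of Lemma~\ref{lma:simpleComb}, instantiated at $k=2$ (so $|J|=7^3$, a fixed constant). Fix a certificate $\Omega$ for $\phi$ as supplied by Lemma~\ref{lma:smallConnector}, so that $|\Omega|$ and $|\bigcup\Omega|$ are $2^{O(\sizeof{\phi})}$, and let the domain $A$ be a product of $\Omega$ with a polynomially-bounded index set (tagging the conjuncts $\gamma_i$ of~\eqref{eq:anf}, which by Lemma~\ref{lma:anf} number polynomially many) and with $J$; then $|A| = 2^{O(\sizeof{\phi})}$, as required. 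Each $a \in A$ carries a connector-type coordinate $\omega(a)\in\Omega$ and a coordinate $j(a)\in J$.

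First I would fix the $1$- and $2$-types. Declare $\tp^\fA[a]=\tp(\omega(a))$ and $\tp^\fA[a,a]=\tp(\omega(a))^2$; for distinct $a,b$, choose $\tp^\fA[a,b]$ to be a $2$-type $\zeta$ with $\zeta\in\omega(a)$ and $\zeta^{-1}\in\omega(b)$, which condition $\mathrm{G}\forall$ permits. Using condition $\mathrm{G}\exists$, the choices can be organised so that for every $a$ and every $\zeta\in\omega(a)$ some $b$ actually realizes $\tp^\fA[a,b]=\zeta$; then $\con^\fA[a]=\omega(a)$, which by hypothesis is a connector-type compatible with $\phi$. The $J$-coordinate routes the existential witnesses: the witness for a pair $(u,v)$ and a conjunct $\gamma_i$ is an element $w$ with $j(w)=g(j(u),j(v))$ whose index coordinate records $i$. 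Clause~(i) of Lemma~\ref{lma:simpleComb} makes $w$ distinct from $u$ and $v$, so $(u,v,w)$ is a genuine triple of pairwise-distinct elements, and the index tag keeps the witnesses for different conjuncts apart, so no pair is ever asked to carry two incompatible $2$-types.

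Next I would fix the adjacent $3$-types of all triples of pairwise-distinct elements; triples with a repeated element have primitive length $\le 2$ and require no separate treatment. For a witness triple $(u,v,w)$ as above, condition $\mathrm{L}\exists_2$ — applied with $\omega=\omega(v)$ and $\zeta=\tp^\fA[u,v]$, whose reverse is in $\omega(v)$ — produces $\eta\in\omega(v)$ with $\zeta\wedge\eta^+\wedge\gamma_i\wedge\hat\delta$ consistent, witnessed by some adjacent $3$-type $\theta$; I set $\tp^\fA[u,v,w]=\theta$, forcing $\tp^\fA[v,w]=\eta$ and hence pinning down $\omega(w)$ through $\mathrm{G}\exists$. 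For every remaining pairwise-distinct triple, condition $\mathrm{L}\forall_2$ supplies a $3$-type consistent with the $2$-types already fixed on its (adjacent) edges and with $\hat\delta$, and I use that. The two non-trivial clauses of Lemma~\ref{lma:simpleComb} are exactly what blocks the collisions here: clause~(i) keeps a witness off the pair it serves, while clause~(ii) prevents the witness of a ``child'' pair from landing back on its ``grandparent'', so that no triple is driven to realize two incompatible adjacent $3$-types and no edge two incompatible $2$-types; the taking of conjunctions with reversals ($\hat{\cdot}$ in the compatibility conditions) is what lets both orientations of each triple be handled without clash.

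It remains to verify $\fA\models\phi$. For $\forall\bx_2\exists x_3\,\gamma_i$: on a distinct pair $(a,b)$ the element $w$ built above works because $\theta\models\gamma_i$; on $(a,a)$, condition $\mathrm{L}\exists_1$ gives $\eta\in\omega(a)$ with $\eta\models\gamma_i(x_1,x_1,x_2)$, and since $\con^\fA[a]=\omega(a)$ there is a $b$ with $\tp^\fA[a,b]=\eta$, so $\fA\models\gamma_i[a,a,b]$ by Lemma~\ref{lma:essentiallyTrivial}. For $\forall\bx_3\,\delta$: on pairwise-distinct triples every chosen $3$-type entails $\hat\delta$, hence $\delta$; on degenerate triples the claim reduces, through Lemmas~\ref{lma:essentiallyTrivial} and~\ref{lma:simplePrimitive}, to some $\delta^f$ with $f\in\bA^3_2$ evaluated on a sub-pair, which holds by condition $\mathrm{L}\forall_1$. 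The size bound is immediate from the choice of $A$. I expect the main obstacle to be precisely the consistency bookkeeping of the middle step — arranging the binary and pairwise-distinct-triple assignments so that the many local demands (one per conjunct $\gamma_i$, per pair, per direction, plus the binary constraints inherited by each triple, and possibly colliding witnesses arriving from different directions) never conflict. This is where the genuinely non-obvious content of Lemma~\ref{lma:simpleComb} is spent, and getting the witness indexing right so that clause~(ii) alone defeats the ``grandparent'' clash is the delicate point, exactly as circular witnessing does in $\FOt$.
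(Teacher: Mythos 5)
Your overall strategy coincides with the paper's (a certificate from Lemma~\ref{lma:smallConnector}, circular witnessing via Lemma~\ref{lma:simpleComb} with $k=2$, and a layer-by-layer construction of $1$-, $2$- and adjacent $3$-types), but the step you yourself flag as ``the main obstacle''---the bookkeeping between the binary and ternary layers---is genuinely missing, and with the domain as you define it it cannot be carried out. Your domain $A=\Omega\times I\times J$ contains only $|I|\cdot|J|$ (polynomially many) elements per connector-type. To achieve $\con^\fA[a]=\omega(a)$ you must realize every $\zeta\in\omega(a)$ as $\tp^\fA[a,b]$ for some $b$ with $\zeta^{-1}\in\omega(b)$, and distinct $\zeta$ require distinct $b$. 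Since $|\omega(a)|$ can be of size $2^{\Theta(\sizeof{\phi})}$ and, in the worst case, every $\zeta\in\omega(a)$ may have $\zeta^{-1}$ lying in a single $\omega'\in\Omega$, you would need exponentially many witnesses all of connector-type $\omega'$, which your domain does not supply. This is exactly why the paper takes $A=\Omega\times T\times H\times I\times J$: the $2$-type coordinate $T$ provides, for each $\omega'$ and each $\eta$, a full stock of elements $(\omega',\eta,h,i,j)$, and the coordinate $H=\set{0,1,2}$ (addition modulo $3$) is what makes your assertion that ``the choices can be organised'' via $\mathrm{G}\exists$ actually true, by preventing an element chosen as a witness for $a$ from later demanding an incompatible $2$-type back towards~$a$.

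A second, related gap sits at the interface of your two stages. You fix the $2$-types of \emph{all} pairs in stage 2, but in stage 3 you then need, for each distinct pair $(u,v)$, each $i\in I$, and the specific $\eta\in\omega(v)$ delivered by $\mathrm{L}\exists_2$, an element $w$ whose $J$-coordinate is $g(j(u),j(v))$, whose index coordinate is $i$, and for which $\tp^\fA[v,w]$ \emph{already equals} $\eta$. You cannot ``force'' $\tp^\fA[v,w]=\eta$ at that point, nor ``pin down $\omega(w)$'': $w$'s coordinates and the $2$-type $\tp^\fA[v,w]$ were fixed earlier, and overwriting would clash with stage 2. The paper's stage 2 establishes the stronger invariant that for every $a'$, every $\eta\in\omega(a')$ and \emph{every} pair of tags $(i',j')$ there is an element $(\omega_\eta,\eta,h'{+}1,i',j')$ receiving $2$-type $\eta$ from $a'$; that is precisely what licenses its stage-3 choice of the witness $b_i$. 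Your treatment of $\delta$ on non-primitive triples and your use of $\mathrm{L}\forall_2$ for the leftover primitive triples are fine, but without the extra coordinates and this stronger stage-2 invariant the existential witnessing step does not go through.
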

\begin{proof}
	We may assume without loss of generality that $\sigma$ features no proposition letters.
	Let~$\phi$ be as given by~\eqref{eq:anf}.
	By Lemma~\ref{lma:smallConnector}, $\phi$ has a certificate $\Omega$ of cardinality at most $2^{O(\sizeof{\phi})}$; moreover the 
	set of 2-types $T$ occurring anywhere in $\Omega$ is $2^{O(\sizeof{\phi})}$.
	Let $H= \set{0, 1, 2}$, let $I$ be the index set occurring in $\phi$, let
	$J$ be a set of cardinality $343 = 7^3$, and let $g\colon J^2 \rightarrow J$ be a function satisfying the
	conditions of Lemma~\ref{lma:simpleComb} with $k = 2$. Defining $A = \Omega \times T \times H \times I \times J$, 
	we see that $|A|$ is $2^{O(\sizeof{\phi})}$, as required by the lemma. We write any element $a \in A$ as $(\omega,\zeta,h,i,j)$. 
	We shall construct a layered model $\fA \models \phi$ of primitive length 3 over this domain, proceeding  layer by layer. 
	In the sequel, bear in mind that a pair or triple of elements is primitive if and only if those elements are distinct.
	
	\medskip
	
	\noindent
	{\bf Stage 1: }
	We set the 1-type of any $a = (\omega,\zeta,h,i,j)$ to be $\tp^\fA[a] = \tp(\omega)$. Clearly, all these
	determinations can be made independently, since $\sigma$ features no proposition letters. At this point, we have a layered structure of primitive length 1.
	
	\medskip
	
	\noindent
	{\bf Stage 2: } Now consider any $a = (\omega,\zeta,h,i,j) \in A$ and
	any $\eta \in \omega$. By (G$\exists$), there exists~$\omega_\eta \in \Omega$
	such that $\eta^{-1} \in \omega_\eta$. For each $i' \in I$ and $j' \in J$ set $\tp^\fA[a,a_{i',j'}] = \eta$, where $a_{i',j'}$ denotes
	the element $(\omega_\eta, \eta, h{+}1,i',j')$. (Here the addition in ``$h{+}1$'' is taken modulo 3.) The index~$\eta$ ensures that the $a_{i',j'}$ are chosen to be distinct for 
	distinct $\eta \in \omega$. Moreover, the index $h{+}1$ ensures that this process can be carried out for every $a \in A$ without danger of clashes. (This is the familiar technique of `circular witnessing'~\cite{GradelKV97}.) Finally, suppose $a = (\omega,\zeta,h,i,j)$ and $a' = (\omega',\zeta',h',i',j')$ are distinct elements of $A$ for which
	$\tp^\fA[a,a']$ has not yet been defined. By (G$\forall$), there exists $\eta \in \omega$
	such that $\eta^{-1} \in \omega'$, and we set $\tp^\fA[a,a'] = \eta$. At the end of this process,
	all 1- and 2-types have been defined, and we thus have a layered structure of primitive length 2.
	From the foregoing construction, if $a = (\omega,\zeta,h,i,j) \in A$ and $\eta \in \omega$, then  there exists
	a constructor-type $\omega'$ such that $\tp^\fA[a,b] = \eta$
	for each $b \in A$ of the form $(\omega',\eta,h+1,i',j')$ (where $i' \in I$, $j' \in J$); moreover,
	for all $a = (\omega,\zeta,h,i,j)$ and $b = (\omega',\zeta',h',i',j')$ with
	$\tp^\fA[a,b] = \eta$, we are guaranteed that $\eta \in \omega$ and $\eta^{-1} \in \omega'$.
	We remark that, in particular,
	$\con^\fA[a] = \omega$. It follows from L$\exists_1$
	that, for every $a \in A$ and every $i \in I$, there exists $b \in A$ such that $\gamma_i[a,a,b]$. 
	Another way of saying this is that, for every pair of elements $a_1,a_2$ whose primitive length is 1 (i.e.~$a_1 = a_2$), 
	$\fA$ provides a witness for the formula $\exists x_3\, \gamma_i$. Likewise, it follows from
	L$\forall_1$
	that,
	for every triple $\bar{a}$ whose primitive length is either 1 or 2, $\fA \models \delta[\bar{a}]$. Indeed, if
	$\bar{a} = \bar{b}^f$ where $|\bar{b}| \leq 2$, we have $\fA \models \delta^f[\bar{b}]$, whence $\fA \models \delta[\bar{a}]$ by Lemma~\ref{lma:essentiallyTrivial}. 
	
	\medskip
	
	\noindent
	{\bf Stage 3: }
	We now increment the primitive length of $\fA$~to~$3$ by setting the adjacent 3-types of all primitive triples in~$\fA$.
	Fix any pair of distinct elements $a = (\omega, \tau, h, i, j)$ and $a' = (\omega', \tau', h', i', j')$. Let us write
	$\zeta= \tp^\fA[a,a']$, so that,
	by construction of $\fA$ in the previous stage, $\zeta \in \omega$ and $\zeta^{-1} \in \omega'$. By (L$\exists_2$), there exists 
	some $\eta \in \omega'$ such that the \AFv{3}-formula $\psi\coloneqq \zeta \wedge \eta^+ \wedge \gamma_i \wedge \hat{\delta}$ is consistent; 
	let $\theta_i$ be an adjacent 3-type entailing this formula.
	By the construction of the previous stage again, we can find an element $b_i\coloneqq (\omega'', \eta, h'+1, i, g(j,j')) \in A$ such that 
	$\tp^\fA[a',b_i] = \eta$. We shall set $\atp^\fA[a,a',b_i] = \theta_i$ for all $i \in I$. From the index $i$, the elements $b_i$ are distinct,
	and so these assignments do not clash with each other. Since $\theta_i$ entails $\zeta \wedge \eta^+$, they do not clash with the 2-types
	assigned so far. Since $\theta_i$ entails $\gamma_i$, the pair $a, a'$ now has a witness in respect of the formula $\exists x_3\, \gamma_i$. 
	From property (i) of $g$ secured by Lemma~\ref{lma:simpleComb}, the triple $a,a',b_i$ is primitive; hence the only primitive triples whose
	adjacent types are thereby defined are $a,a',b_i$ and $b_i,a',a$. But since $\theta_i$ entails $\hat{\delta}$, neither of these triples violates 
	$\forall x_1 x_2 x_3\, \delta$. Now repeat this construction for all pairs of  distinct elements $a = (\omega, \tau, h, i, j)$ and $a' = (\omega', \tau', h', i', j')$. We claim that no tuple~$\bar{c}$ is assigned to the extensions of any predicates twice in this process. Since
	$\bar{c}$ must have some primitive generators $a_1a_2a_3$ and $a_3a_2a_1$, the only possibility for double assignment of
	$\bar{c}$ is if~$a_3$ is chosen as some witness for the pair $a_1, a_2$, and $a_1$ is chosen as some witness for the pair $a_3, a_2$.
	Remembering that $a_1$, $a_2$ and $a_3$ are actually quintuples,
	let their final components be, respectively, $j$, $j'$, $j''$. By the choice of witnesses, $j'' = g(j, j')$ and $j= g(j'',j')$. But this contradicts property (ii) of $g$ secured by Lemma~\ref{lma:simpleComb}, thus establishing the claim that no primitive triple
	is assigned to extensions of predicates twice.
	At this point, for
	every pair of elements $a_1a_2$ (of primitive length either 1 or 2) and every $i \in I$, 
	$\fA$ provides a witness for the formula $\exists x_3\, \gamma_i$. Moreover, no adjacent 3-type so-far assigned violates
	$\delta$. 
	To complete the extension of $\fA$ to primitive length 3, it remains only to assign 
	adjacent types to all remaining primitive triples without violating $\delta$. Suppose, then $a,a',a''$ are distinct elements whose adjacent type in $\fA$ has not yet been defined. Let $\zeta = \tp^\fA[a_1,a_2]$ and $\eta = \tp^\fA[a_2,a_3]$. By the previous stage, 
	$\zeta \wedge \eta^{+} \wedge \hat{\delta}$ is consistent, so let $\theta$ be an adjacent 3-type entailing this formula, and set 
	$\tp^\fA[a_1,a_2,a_3] = \theta$. Observe that we are also thereby assigning the adjacent 3-type of $\tp^\fA[a_3,a_2,a_1]$, but are assigning
	no other adjacent 3-types. Since $\theta$ entails $\zeta \wedge \eta^+$, this assignment does not clash with the assignments of the previous step. Since $\theta$ entails $\hat{\delta}$, no newly assigned triple violates $\delta$. This completes the construction of the model~$\fA$.
\end{proof}

Extending Lemma~\ref{lma:base} to the whole of $\AF$ represents a greater challenge. For the next two lemmas (\ref{lma:reductionDirection1} and \ref{lma:reductionDirection2}), fix a normal-form \AFv{\ell{+}1}-formula $\phi$ over some signature $\sigma$, as given in~\eqref{eq:anf},
with $\ell \geq 3$.
We construct a normal-form formula $\phi' \in \AFv{\ell}$ such that: (i) if $\phi$ is satisfiable over some domain $A$, then so is $\phi'$; and (ii) if $\phi'$ is satisfiable
over some domain $B$, then $\phi$ is satisfiable over a domain  $A$, with $|A|/|B|$ 
bounded by some exponential~function~of~$\sizeof{\phi}$. 

Recall that the adjacent closure, $\phi^\#$ of $\phi$, may be regarded as a normal-form \AFv{\ell}-formula over the same signature.
For every adjacent $\ell$-type $\zeta$ over $\sigma$,
let $p_\zeta$ be a fresh predicate of arity $\ell{-}1$. Intuitively, we shall think of $p(x_2 \cdots x_{\ell})$ as stating ``{\em for some $x_1$},
the $\ell$-tuple $\bx_\ell= x_1 \cdots x_{\ell}$ is of adjacent type $\zeta$''. Now define $\phi'$ to be the conjunction of $\phi^\#$ with the following \AFv{\ell}-formulas:
\begin{align}
& \bigwedge_{\zeta \in \Atp^\sigma_\ell} \forall \bx_{\ell} \big(\zeta \rightarrow p_\zeta(x_2 \cdots x_{\ell})\big)
\label{eq:reduction1}\\
& \bigwedge_{\zeta \in \Atp^\sigma_\ell} \hspace{-2mm} \bigwedge_{\hspace{3mm} i \in I\phantom{\Atp^\sigma_\ell} \hspace{-4mm}} \hspace{-2mm} \forall \bx_{\ell-1} \exists x_{\ell} \big(p_\zeta(\bx_{\ell-1}) \rightarrow
(\zeta \wedge \hat{\delta} \wedge \gamma_i)^\circ \big)
\label{eq:reduction2}\\
& \bigwedge_{\zeta \in \Atp^\sigma_\ell} \forall \bx_{\ell} \big(p_\zeta(\bx_{\ell-1}) \rightarrow 
(\zeta \wedge \hat{\delta})^\circ\big)
\label{eq:reduction3}
\end{align}
\begin{lemma}
Suppose $\fA \models \phi$. Then we can expand $\fA$ to a model $\fA^+ \models \phi'$.
\label{lma:reductionDirection1}
\end{lemma}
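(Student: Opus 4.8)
The plan is to obtain $\fA^+$ from $\fA$ without touching the interpretation of any predicate of $\sigma$, adding only an interpretation of each fresh predicate $p_\zeta$ that realises its intended reading. Concretely, for every adjacent $\ell$-type $\zeta$ over $\sigma$ I would put
\[
p_\zeta^{\fA^+} = \set{(a_2,\dots,a_\ell)\in A^{\ell-1} \mid \atp^\fA[a_1,a_2,\dots,a_\ell]=\zeta \text{ for some } a_1\in A}.
\]
Since the $p_\zeta$ do not occur in $\sigma$, and since $\fA\models\phi$ yields $\fA\models\phi^\#$ by Lemma~\ref{lma:adjacentClosure}, the conjunct $\phi^\#$ of $\phi'$ carries over to $\fA^+$ at once; what remains is to check the three families of conjuncts~\eqref{eq:reduction1}, \eqref{eq:reduction2} and~\eqref{eq:reduction3}.

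Conjunct~\eqref{eq:reduction1} holds by the very definition of $p_\zeta^{\fA^+}$: if $a_1\cdots a_\ell$ realises $\zeta$ in $\fA^+$, then $\atp^\fA[a_1,\dots,a_\ell]=\zeta$, so $(a_2,\dots,a_\ell)\in p_\zeta^{\fA^+}$, with $a_1$ as witness. For~\eqref{eq:reduction2} I would fix $i\in I$, a type $\zeta$, and an $(\ell{-}1)$-tuple $b_1\cdots b_{\ell-1}$ lying in $p_\zeta^{\fA^+}$; choose $a\in A$ with $\atp^\fA[a,b_1,\dots,b_{\ell-1}]=\zeta$, and apply $\fA\models\forall\bx_\ell\exists x_{\ell+1}\,\gamma_i$ to the assignment $x_1\leftarrow a,\ x_2\leftarrow b_1,\ \dots,\ x_\ell\leftarrow b_{\ell-1}$ to get $b_\ell$ with $\fA\models\gamma_i[a,b_1,\dots,b_\ell]$. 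I claim $b_\ell$ is the required witness, i.e.~$\fA^+\models(\zeta\wedge\hat\delta\wedge\gamma_i)^\circ[b_1,\dots,b_\ell]$. To see this, set $\eta=\atp^\fA[b_1,\dots,b_\ell]$ and observe that the $(\ell{+}1)$-tuple $a\,b_1\cdots b_\ell$ satisfies in $\fA$ the formula $\zeta\wedge\hat\delta\wedge\gamma_i\wedge\eta^+$: the conjunct $\zeta$ constrains only $x_1,\dots,x_\ell$ (i.e.~$a,b_1,\dots,b_{\ell-1}$) and holds by the choice of $a$; $\hat\delta=\delta\wedge\delta^{-1}$ holds because $\fA\models\forall\bx_{\ell+1}\,\delta$ (applied to this tuple and to its reversal); $\gamma_i$ holds by the choice of $b_\ell$; and $\eta^+=\eta(x_2,\dots,x_{\ell+1})$ holds because $\atp^\fA[b_1,\dots,b_\ell]=\eta$. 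Hence $(\zeta\wedge\hat\delta\wedge\gamma_i)\wedge\eta^+$ is consistent, so $\eta$ is a disjunct of $(\zeta\wedge\hat\delta\wedge\gamma_i)^\circ$, which $b_1\cdots b_\ell$ satisfies. Conjunct~\eqref{eq:reduction3} is the same argument with $\gamma_i$ and the existential witness dropped: from an $\ell$-tuple $b_1\cdots b_\ell$ with $b_1\cdots b_{\ell-1}\in p_\zeta^{\fA^+}$ and a corresponding $a$, the tuple $a\,b_1\cdots b_\ell$ again witnesses the consistency of $(\zeta\wedge\hat\delta)\wedge\eta^+$ for $\eta=\atp^\fA[b_1,\dots,b_\ell]$, giving $\fA^+\models(\zeta\wedge\hat\delta)^\circ[b_1,\dots,b_\ell]$.

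I do not expect a genuine obstacle in this lemma: it is the easy half of the reduction, the substance lying in the converse (Lemma~\ref{lma:reductionDirection2}). The only points that need care are the index-shift bookkeeping around the operations $\eta\mapsto\eta^+$, $\delta\mapsto\hat\delta$ and $\chi\mapsto\chi^\circ$, and the layered-structure convention: every tuple invoked above has primitive length at most $\ell{+}1$, so each atomic fact used is determined in $\fA$ (which, as a model of an $\AFv{\ell+1}$-sentence, has primitive length $\ell{+}1$), while $\fA^+$ may be taken layered of primitive length $\ell$, matching the arity $\ell{-}1$ of the $p_\zeta$. If one insists on genuine structures rather than layered ones, $\fA$ can be completed arbitrarily above primitive length $\ell{+}1$ without affecting the truth of any $\AFv{\ell+1}$-formula evaluated, by the reasoning behind Lemma~\ref{lma:approx}.
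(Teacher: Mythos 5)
Your proposal is correct and follows essentially the same route as the paper: interpret each $p_\zeta$ as the set of $(\ell{-}1)$-tuples $\bar{a}$ such that $\fA \models \zeta[a\bar{a}]$ for some $a$, note that \eqref{eq:reduction1} is then immediate, and verify \eqref{eq:reduction2} (and \eqref{eq:reduction3} analogously) by taking the witness $b$ supplied by the conjunct $\forall \bx_{\ell}\exists x_{\ell+1}\,\gamma_i$ of $\phi$, observing that $a\bar{a}b$ satisfies $\zeta\wedge\hat{\delta}\wedge\gamma_i\wedge\eta^+$ with $\eta=\atp^\fA[\bar{a}b]$, so that $\eta\models(\zeta\wedge\hat{\delta}\wedge\gamma_i)^\circ$. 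Your extra remarks on $\phi^\#$ via Lemma~\ref{lma:adjacentClosure} and on the layered-structure convention are consistent with the paper and harmless.
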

\begin{proof}
Set
$p^{\fA^+}_\zeta = \set{\bar{a} \in A^{\ell{-}1} \colon \text{$\fA \models \zeta[a\bar{a}]$ for some $a \in A$}}$,
for every $\zeta \in \Atp^\sigma_\ell$. 
The truth of~\eqref{eq:reduction1} in $\fA^+$ is then immediate. To see the same for~\eqref{eq:reduction2}, fix 
$\zeta \in \Atp^\sigma_\ell$ and $i \in I$, and suppose 
$\fA^+ \models p_\zeta(\bar{a})$, where $\bar{a} \in A^{\ell-1}$. Then there exists $a \in A$ such that 
$\fA \models \zeta[a\bar{a}]$. Moreover, since $\fA \models \phi$, there exists $b \in A$ such that $\fA \models \gamma_i[a\bar{a}b]$ and 
$\fA \models \hat{\delta}[a\bar{a}b]$. Now let $\eta= \atp^\fA[\bar{a}b]$. Writing $\chi$ for
$\zeta \wedge \hat{\delta} \wedge 
\gamma_i$, we have $\fA \models \chi[a\bar{a}b]$,
whence $\chi$
is consistent; and since $\fA \models \eta[\bar{a}b]$, it follows that
$\eta \models \chi^\circ$. Thus, $b$ is a witness for the $(\ell{-}1)$-tuple $\bar{a}$ required by
the relevant conjunct of~\eqref{eq:reduction2}. This secures the truth of~\eqref{eq:reduction2} in $\fA^+$.
Formula~\eqref{eq:reduction3} is handled similarly.
\end{proof}
\begin{lemma}
  Suppose $\fB \models \phi'$. Then we can construct a model $\fC^+ \models \phi$ such that $|C^+|/|B| \leq |I| \cdot (\ell^2+\ell+1)^\ell$. 
  \label{lma:reductionDirection2}
\end{lemma}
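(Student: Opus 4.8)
The plan is to obtain $\fC^+$ by taking a Cartesian blow-up of $\fB$ on all tuples of primitive length at most $\ell$, and then adding one further layer --- the adjacent $(\ell{+}1)$-types of the primitive $(\ell{+}1)$-tuples --- engineered so as to provide witnesses for the $\forall^\ell\exists$-conjuncts of $\phi$. By (a minor variant of the construction in) Lemma~\ref{lma:simpleComb}, fix a set $J$ of size $(\ell^2+\ell+1)^\ell$ and a function $g\colon J^\ell\to J$ such that, for every $\bar{t}\in J^\ell$ listing elements $t_1,\dots,t_\ell$ in some order: (i) $g(\bar{t})\notin\{t_1,\dots,t_\ell\}$; and (ii) if $\bar{t}'\in J^\ell$ lists $\{t_2,\dots,t_\ell,g(\bar{t})\}$ in some order, then $g(\bar{t}')\notin\{t_1,\dots,t_\ell\}$ either. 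Put $C^+=B\times I\times J$, so that $|C^+|/|B|=|I|\cdot(\ell^2+\ell+1)^\ell$, and let the restriction of $\fC^+$ to tuples of primitive length at most $\ell$ be the corresponding restriction of $\fB\times(I\times J)$. By Lemmas~\ref{lma:cartesian} and~\ref{lma:approx}, this partially-defined $\fC^+$ already satisfies $\phi^\#$ and the formulas~\eqref{eq:reduction1}--\eqref{eq:reduction3}, no matter how its top layer is completed; and for any $\ell$-tuple $\bar{a}$ of $C^+$, writing $\bar{a}'=a_1'\cdots a_\ell'$ for its projection to $B$, we have $\atp^{\fC^+}[\bar{a}]=\atp^{\fB}[\bar{a}']$.

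Call an $\ell$-tuple $\bar{a}$ \emph{reducible} if $\bar{a}=\bar{d}^f$ for some word $\bar{d}$ of length $k<\ell$ and some $f\in\vec{\bA}^\ell_k$. If $\bar{a}$ is reducible then, using the conjunct $\forall\bx_k\exists x_{k+1}\,\gamma_i^{f^+}$ of $\phi^\#$ together with Lemma~\ref{lma:essentiallyTrivial}, $\fC^+$ already provides, for each $i\in I$, an element $e$ with $\fC^+\models\gamma_i[\bar{a}e]$, so no new type need be assigned. If $\bar{a}=a_1\cdots a_\ell$ is \emph{irreducible} --- a class that contains every primitive $\ell$-tuple, but also tuples with more intricate equality patterns, for instance the pattern of $abcb$ when $\ell=4$ --- then by Lemma~\ref{lma:simplePrimitive} the tuple $\bar{a}b$ is primitive for every element $b\notin\bar{a}$, so we may freely set its adjacent $(\ell{+}1)$-type. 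Following Stage~3 of the proof of Lemma~\ref{lma:base}: with $\zeta=\atp^{\fC^+}[\bar{a}]$, formula~\eqref{eq:reduction1} gives $\fB\models p_\zeta[a_2'\cdots a_\ell']$, whence~\eqref{eq:reduction2} supplies $e'\in B$ for which, putting $\eta=\atp^{\fB}[a_2'\cdots a_\ell'e']$, the formula $\zeta\wedge\eta^+\wedge\hat{\delta}\wedge\gamma_i$ is consistent; pick an adjacent $(\ell{+}1)$-type $\theta_i$ entailing it, put $b_i=\langle e',i,g(j_1,\dots,j_\ell)\rangle$ where $j_h$ is the $J$-component of $a_h$, and declare $\atp^{\fC^+}[\bar{a}b_i]=\theta_i$. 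Property~(i) of $g$ forces $b_i\notin\bar{a}$, so $\bar{a}b_i$ is primitive; and $\atp^{\fC^+}[a_2\cdots a_\ell b_i]=\eta$ by construction, so $\theta_i$ is consistent with the already-fixed adjacent $\ell$-types of the two length-$\ell$ infixes of $\bar{a}b_i$. Finally, any primitive $(\ell{+}1)$-tuple still without a type has its two length-$\ell$ infixes of adjacent types $\zeta,\eta$ with $\zeta\wedge\eta^+\wedge\hat{\delta}$ consistent --- precisely the content of~\eqref{eq:reduction3} --- and we give it any adjacent $(\ell{+}1)$-type entailing $\zeta\wedge\eta^+\wedge\hat{\delta}$.

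It remains to verify that the assignments never clash and that $\fC^+\models\phi$. The latter is the easier part: every adjacent $(\ell{+}1)$-type assigned entails $\hat{\delta}$, hence $\delta$, while a non-primitive $(\ell{+}1)$-tuple $\bar{c}=\bar{d}^h$ with $\bar{d}$ primitive of length at most $\ell$ satisfies $\delta$ because $\forall\bx_{|\bar{d}|}\,\delta^h$ is a conjunct of $\phi^\#$ and $\fC^+\models\phi^\#$ (Lemma~\ref{lma:essentiallyTrivial}); and every $\ell$-tuple has a $\gamma_i$-witness --- coming from $\phi^\#$ if it is reducible, and being the element $b_i$ (since $\theta_i\models\gamma_i$) if it is irreducible. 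The clash-freeness is where Lemma~\ref{lma:simpleComb}, and behind it the uniqueness of primitive generators (Theorem~\ref{theo:main}), is indispensable: assigning $\atp^{\fC^+}[\bar{a}b_i]=\theta_i$ also fixes the type of the reversed tuple, so a clash would force $\bar{a}b_i$ to be the reversal of some $\bar{a}''b_{i''}$ arising from a distinct irreducible tuple $\bar{a}''$, and chasing the $J$-components through that identity --- the two ``fresh'' coordinates being $a_1=b_{i''}$ and $b_i=a_1''$, with the interiors agreeing up to reversal --- yields exactly the configuration ruled out by property~(ii) of $g$. I expect this last point --- establishing global coherence of the witnessing scheme on the new layer, a circular-witnessing argument run one dimension higher than in two-variable logic --- to be the main obstacle; the remaining bookkeeping over adjacent types and their infixes, though lengthy, is routine given Lemmas~\ref{lma:essentiallyTrivial}, \ref{lma:approx} and~\ref{lma:simplePrimitive}.
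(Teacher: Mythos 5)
Your proposal is correct and follows essentially the same construction as the paper: inflate $\fB$ by $I\times J$ using Lemmas~\ref{lma:cartesian} and~\ref{lma:approx}, reuse $\phi^\#$ for witnesses of tuples admitting a shorter $\vec{\bA}$-representation, assign fresh adjacent $(\ell{+}1)$-types entailing $\zeta\wedge\eta^+\wedge\hat{\delta}\wedge\gamma_i$ via~\eqref{eq:reduction1}--\eqref{eq:reduction2} for the remaining ones, rule out clashes via Theorem~\ref{theo:main} together with properties (i)--(ii) of $g$ from Lemma~\ref{lma:simpleComb}, and complete the top layer via~\eqref{eq:reduction3}. Your case split on reducibility of the $\ell$-tuple (rather than on primitivity of the extended tuple) and your clash argument traced directly through the $J$-components are only cosmetic variations of the paper's argument, and your choice $|J|=(\ell^2+\ell+1)^\ell$ is consistent with the stated bound.
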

\begin{proof}

Since $\phi' \in \AFv{\ell}$, we may assume  by Lemma~\ref{lma:approx} that $\fB$ is a layered structure of primitive length $\ell$---that is, does not specify the extensions of predicates in respect of tuples whose primitive length is greater than $\ell$. 
Let $\fB^-$ be the reduct of $\fB$ to the signature $\sigma$ (i.e.~we forget the predicates $p_\zeta$).
Thus, every $\ell$-tuple from $B$ satisfies a unique element of $\Atp^\sigma_\ell$.
We first define a collection of `witness' functions $v_{i} \colon B^{\ell}\rightarrow B$, where $i \in I$. 
For any $\ell$-tuple $\bar{b} = b_1 \cdots b_\ell$, let $\zeta = \atp^{\fB^-}[\bar{b}]$.
By~\eqref{eq:reduction1}, $\fB \models p_\zeta[b_2 \cdots b_\ell]$, whence, by~\eqref{eq:reduction2}, we may select $b \in B$
such that $\fB \models (\zeta \wedge \hat{\delta} \wedge \gamma_i)^\circ[(b_2 \cdots b_k b)]$. Set 
$v_{i}(\bar{b}) = b$.
Now let $J$ be a set of cardinality $\ell^2 + \ell + 1$ and let $g \colon J^\ell \rightarrow J$ a function satisfying conditions 
(i) and (ii) guaranteed by
Lemma~\ref{lma:simpleComb}. 
We inflate the structure $\fB^-$ using the product construction of Lemma~\ref{lma:cartesian}.
Specifically,  we define  $\fC=\fB^- \times (I \times J)$, writing elements of $\fC$ as triples $(b,i,j)$, where $b \in B$, $i \in I$ and
$j \in J$. Now, predicate extensions featuring tuples of primitive length greater than $\ell$ can be safely disregarded in the structure $\fC$.
We next define a collection of witness
functions $w_{i} \colon C^{\ell}\rightarrow C$, based on the functions $v_{i}$
defined above. The motivation is that these functions will allow us to choose witnesses in $\fC$ for the
conjuncts~\eqref{eq:reduction2} that do not, as it were, tread on each others' toes. 
Consider any $\ell$-tuple  $\bar{c} = c_1 \cdots c_\ell$ of elements in $C$, with $c_h= (b_h,i_h,j_h)$ for each $h$ ($1 \leq h \leq \ell$). 
Writing $\bar{b} = b_1 \cdots b_\ell$, we define $w_{i}(\bar{c})$ to be the element $(v_{i}(\bar{b}), i, g(j_1 \cdots j_\ell))$. 
Since $\fB^- \models (\zeta \wedge \hat{\delta} \wedge \gamma_i)^\circ[b_2 \cdots b_\ell\, w_{i}(\bar{b})]$, 
it follows from Lemma~\ref{lma:cartesian} that $\fC \models (\zeta \wedge \hat{\delta} \wedge \gamma_i)^\circ[c_2 \cdots c_\ell\, w_{i}(\bar{c})]$. 
In addition, the functions $w_i$ satisfy the following two additional properties:
\begin{description}
  \item[(w1)] for fixed $\bar{c}$, the $w_{i}(\bar{c})$
  are distinct as $i$ varies over $I$;
  \item[(w2)] $w_{i}(\bar{c})$ does not occur in $\bar{c}$; 
  \item[(w3)] 
  if $\bar{c}'$
  is an $\ell$-tuple consisting of the elements
  $c_2, \dots, c_\ell, w_{i}(\bar{c})$ in some order, then 
  $w_{i'}(\bar{c}')$ does not occur in $\bar{c}$ for any $i' \in I$.
\end{description}
Indeed, {(w1)} is immediate from the fact  $w_i(\bar{c})$ contains $i$ as its second element; {(w2)} and {(w3)}
follow, respectively, from conditions (i) and 
(ii) on $g$ guaranteed in
Lemma~\ref{lma:simpleComb}. 

We are now ready to extend $\fC$ to a structure $\fC^+$ of primitive length $\ell{+}1$ such that $\fC^+ \models \phi$.  
We first manufacture witnesses required by the conjuncts 
$\forall \bar{x} \exists x_{\ell+1} \gamma_i$, insofar as these are not already present.
Fix any $\ell$-tuple $\bar{c} = c_1 \cdots c_\ell$,
and let $\zeta= \atp^{\fC}[\bar{c}]$. 
Now consider any $i \in I$, and write $c = w_{i}(\bar{c})$. 
We have two cases, depending on whether the word 
$\bar{c}c$ is primitive. Suppose first that it is not. By {(w2)}, $c$ is not an element of $\bar{c}$, whence
by Lemma~\ref{lma:simplePrimitive}
there is some $k$-tuple $\bar{d}$ ($k < \ell$) and $f \in \vec{\bA}^{\ell}_{k}$ such that $\bar{d} = \bar{c}^f$. 
As before, define $f^+ \in \vec{\bA}^{\ell+1}_{k+1}$ extending $f$ by setting $f(k+1) = \ell+1$.
Since $k < \ell$, and $\fC \models
{\phi}^\#$, there exists $c' \in C$ such that $\fC \models \gamma^{f^+}_i[\bar{d}c']$.
By Lemma~\ref{lma:essentiallyTrivial}, $\fC \models \gamma_i[(\bar{d}c')^{f^+}]$, or in other words, $\fC \models \gamma_i[\bar{c}c']$, so that a witness $c'$ is already present in respect of the tuple $\bar{c}$ and
the index $i$. (Notice that we are throwing our original witness, $c$, away.)
Suppose on the other hand that $\bar{c}c$ is primitive. Since $\fC$ has primitive length~$\ell$, no tuple with primitive generator $\bar{c}c$
has been assigned to the extension of any predicate in $\fC$.
Let $\eta = \atp^\fC[c_2 \cdots c_\ell c]$. Writing $\chi$ for the $\AFv{\ell}$-formula $\zeta \wedge \hat{\delta} \wedge \gamma_i$ 
it follows from the choice of $c$ that $\fC \models \chi^\circ[\bar{c}c]$, whence,  by the definition of the operator $(\cdot)^\circ$, the 
$\AFv{\ell+1}$-formula $\eta^+ \wedge (\zeta \wedge \hat{\delta} \wedge \gamma_i)$ is consistent.
Therefore, there exists an adjacent $(\ell+1)$-type $\omega$
entailing it, and we may fix $\atp^{\fC^+}[\bar{c}c] = \omega$. To see that this assignment makes sense and extends $\fC$, recall that
$\atp^{\fC^+}[\bar{c}c]$ specifies whether $\fC^+ \models q[\bar{d}]$ for any $m$-tuple~$\bar{d}$ whose primitive
generator is an infix, say $\bar{e}$, of $\bar{c}c$. If $\bar{e}$ is of length $\ell$ or less, then its adjacent type
has already been fixed in $\fC$ consistently with $\zeta$ or $\eta^+$. Otherwise, the primitive generator of $\bar{d}$ is
$\bar{c}c$, so that $\fC$ does not determine satisfaction of $q$ by $\bar{d}$; writing $\bar{d} = (\bar{c}c)^f$, then,
we may set $\fC^+ \models q[\bar{d}]$ if and only if $\models \omega \rightarrow q((\bx_\ell x_{\ell+1})^f)$.
Since $\omega \models \gamma_i$, we see that,
following these assignments, $\fC^+$ has been provided with a witness in respect of the tuple $\bar{c}$ and
the index $i$. We claim in addition that the newly assigned tuples do not violate $\forall \bx_\ell x_{\ell+1}\, \delta$. For suppose
that $\bar{d}$ is an $(\ell+1)$-tuple whose adjacent type in $\fC^+$ has been defined. If the primitive length of $\bar{d}$ is $\ell$ or less, then we have
$\bar{d} = \bar{e}^g$ for some primitive $\bar{e}$ of length $k \leq \ell$ and some $g \in \bA^{\ell+1}_{k}$. Since $\fC \models \phi^{\#}$, we have $\fC \models \delta^g[\bar{e}]$, whence by Lemma~\ref{lma:essentiallyTrivial}, 
$\fC \models \delta[\bar{d}]$. If, however, 
the primitive length of $\bar{d}$ is $\ell+1$, then  $\bar{d}$ is either $\bar{c}c$ or its reversal, and by the fact that 
$\omega \models \hat{\delta}$, we have $\fC^+ \models \delta[\bar{d}]$ as required. Still keeping
$\bar{c}$ fixed for the moment, we may carry out the above procedure for all $i \in I$. 
To see that these assignments do not interfere with each other, we simply note property {(w1)} of the functions~$w_i$.% established above.

Now make these assignments as just described for \textit{each} word $\bar{c} \in C^\ell$. To ensure that these assignments do not interfere with each other,
we make use of properties {(w1)} and {(w3)} of the functions $w_i$. If $\bar{d}$ is an $m$-tuple that has been assigned (or not) to the extensions of various predicates by the process described above, then the two primitive generators of $\bar{d}$ must be of the form $\bar{c}c$ and $(\bar{c}c)^{-1}$, where $c = w_{i}(\bar{c})$ for some $i \in I$. Since primitive generators are unique up to reversal by Theorem~\ref{theo:main}, it suffices to show that, for distinct pairs $(\bar{c},i)$ and
$(\bar{c}',i')$, the corresponding $(\ell{+}1)$-tuples $(\bar{c}\, w_i(\bar{c}))$ and $(\bar{c}'\, w_{i'}(\bar{c}'))$ are not
the same up to reversal. Now $\bar{c}\, w_i(\bar{c}) = \bar{c}'\, w_{i'}(\bar{c}')$ implies $\bar{c} = \bar{c}'$,  whence $i$ and $i'$ are distinct, whence $w_i(\bar{c}) \neq w_{i'}(\bar{c}')$ by ({w1}), a contradiction. 
On the other hand if $\bar{c}\, w_i(\bar{c}) = (\bar{c}'\, w_{i'}(\bar{c}))^{-1}$, then $\bar{c}'=w_{i}(\bar{c}), c_\ell \cdots c_2$,
whence $w_{i'}(\bar{c}')$ does not occur in $\bar{c}$ by {(w3)}, again a contradiction.

At this point, we have assigned a collection of tuples with primitive length $\ell{+}1$ to the extensions of predicates in $\sigma$ so as to guarantee that  
$\fC^+ \models \forall \bx_{\ell} \exists x_{\ell+1}\, \gamma_i$
for all $i \in I$. In addition, no adjacent $(\ell{+}1)$-types thus defined violate $\forall \bx_{\ell+1}\, \delta$.
It remains to complete the specification of $\fC^+$ by defining the adjacent types of all remaining primitive $\ell{+}1$-tuples, 
and showing that, in the resulting structure, every $(\ell{+}1)$-tuple (primitive or not) satisfies $\delta$.
Let $c_1 \cdots c_{\ell+1}$ be a primitive $(\ell{+}1)$-tuple whose adjacent type has not yet been defined. Let
$\zeta = \atp^\fC[c_1 \cdots c_{\ell}]$ 
and $\eta= \atp^\fC[c_2 \cdots c_{\ell+1}]$. 
Writing $c_h = (b_h, i_h, j_h)$ for all $h$  ($1 \leq h \leq \ell+1$),
we have $\zeta = \atp^{\fB^-}[b_1 \cdots b_{\ell}]$ and $\eta= \atp^{\fB^-}[b_2 \cdots b_{\ell+1}]$.
By~\eqref{eq:reduction1}, $\fB \models p_\zeta[b_2 \cdots b_{\ell+1}]$, and hence by~\eqref{eq:reduction3},
$\fB \models (\zeta \wedge \hat{\delta})^\circ[b_2 \cdots b_{\ell+1}]$, whence 
$\zeta  \wedge \eta^+ \wedge \hat{\delta}$ is consistent, by the definition of the operator~$(\cdot)^\circ$.
So let $\omega \in \Atp^\sigma_{\ell{+}1}$ entail this formula, and
set $\atp^{\fC^+}[\bar{c}c] = \omega$. Carrying this procedure out for all remaining primitive $\ell{+}1$-tuples, we obtain a layered structure $\fC^+$ of primitive length $\ell+1$. Let $\bar{d}$ be any $(\ell{+}1)$-tuple
of elements from $C$.
If $\bar{d}$ is primitive, then we have just ensured that $\fC^+ \models \delta[\bar{d}]$. If, on the other hand, 
$\bar{d} = \bar{e}^{f}$ for some $k$-tuple $\bar{e}$ and some $f \in \bA^{\ell+1}_{k}$, where $k \leq \ell$, then, since
$\phi^\#$, we have $\fC \models \delta^f[\bar{e}]$ and hence, by Lemma~\ref{lma:essentiallyTrivial}, 
$\fC \models \delta[\bar{d}]$. 
This completes the construction of $\fC^+$. We have shown that $\fC^+ \models \phi$.
\end{proof}

Lemma~\ref{lma:base} establishes the decidability of satisfiability for $\AFv{3}$.
Lemmas~\ref{lma:reductionDirection1} and~\ref{lma:reductionDirection2}, on the other hand, reduce the
satisfiability problem for $\AFv{\ell+1}$ to that for $\AFv{\ell}$~($\ell \geq 3$), though with exponential blow-up.
Putting these together, we obtain the decidability of satisfiability for the whole of $\AF$.
More precisely:
\begin{theorem}
If $\phi$ is a satisfiable \AFv{\ell+1}-formula, with $\ell \geq 2$, then 
$\phi$ is satisfied in a structure of size at most $\ft(\ell{-}1, O(\sizeof{\phi}))$.
Hence the satisfiability problem for \AFv{\ell} is in $(\ell{-}2)$-\NExpTime{} for all $\ell \geq 3$, and the adjacent fragment is $\Tower$-complete.
\label{theo:upperBound}
\end{theorem}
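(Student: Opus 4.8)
The plan is to establish the small-model bound by induction on $\ell\ge 2$, taking Lemma~\ref{lma:base} as the base case and the reduction of Lemmas~\ref{lma:reductionDirection1}--\ref{lma:reductionDirection2} as the inductive step, and then to read off the complexity consequences. By Lemma~\ref{lma:anf} we may assume throughout that $\phi$ is a normal-form $\AFv{\ell+1}$-sentence, since passing to the normal form is polynomial and preserves satisfiability over the same domains. One preliminary reduction is useful: we first arrange that every predicate of $\sig(\phi)$ has arity at most $\ell+1$. This is routine, as the range of an adjacent walk is always an interval, so any atom $p(\bx^f_k)$ mentions only a contiguous block of at most $\ell+1$ variables and may be replaced by an atom over a fresh predicate of arity equal to that block size, together with polynomially many additional adjacent axioms forcing the fresh predicates to agree on tuples with repeated entries. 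With arities so bounded, the number of adjacent $\ell$-types over $\sig(\phi)$ is single-exponential in $\sizeof{\phi}$, and hence so are the number of auxiliary predicates $p_\zeta$ and the lengths of the formulas $(\zeta\wedge\hat{\delta}\wedge\gamma_i)^{\circ}$ and $(\zeta\wedge\hat{\delta})^{\circ}$; consequently the formula $\phi'$ constructed just before Lemma~\ref{lma:reductionDirection1} satisfies $\sizeof{\phi'}\le 2^{O(\sizeof{\phi})}$ and is, up to trivial rearrangement and re-indexing of variables, a normal-form $\AFv{\ell}$-sentence.

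For the induction, the base case $\ell=2$ is exactly Lemma~\ref{lma:base}: a satisfiable normal-form $\AFv{3}$-sentence has a model of size $2^{O(\sizeof{\phi})}=\ft(1,O(\sizeof{\phi}))=\ft(\ell{-}1,O(\sizeof{\phi}))$. For $\ell\ge 3$, suppose $\phi$ is satisfiable and form $\phi'\in\AFv{\ell}$ as above. By Lemma~\ref{lma:reductionDirection1}, $\phi'$ is satisfiable; applying the induction hypothesis to the normal-form $\AFv{\ell}$-sentence $\phi'$ gives a model $\fB$ with $|B|\le\ft(\ell{-}2,O(\sizeof{\phi'}))=\ft(\ell{-}2,2^{O(\sizeof{\phi})})=\ft(\ell{-}1,O(\sizeof{\phi}))$, where the last equality uses the elementary identity $\ft(a,\ft(b,n))=\ft(a+b,n)$. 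Lemma~\ref{lma:reductionDirection2} then produces a model $\fC^{+}$ of $\phi$ with $|C^{+}|\le |I|\cdot(\ell^2+\ell+1)^{\ell}\cdot|B|$; since the first two factors are $2^{O(\sizeof{\phi})}$ and the $\ft$-term dominates, $|C^{+}|\le\ft(\ell{-}1,O(\sizeof{\phi}))$. This proves the size bound. In particular every satisfiable $\AFv{\ell+1}$-sentence has a finite model, so satisfiability and finite satisfiability coincide for these fragments.

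For the complexity upper bound, fix $\ell\ge 3$ and let $\phi\in\AFv{\ell}$. A nondeterministic procedure computes the normal form in polynomial time, guesses a \emph{layered} structure of primitive length $\ell$ over a domain of size at most $\ft(\ell{-}2,O(\sizeof{\phi}))$, and checks that it is a model of $\phi$; by Lemma~\ref{lma:approx} only tuples of primitive length at most $\ell$ need be recorded, so for fixed $\ell$ the whole structure has description size $\ft(\ell{-}2,O(\sizeof{\phi}))$ and the verification is polynomial in that. Hence the procedure runs in $(\ell{-}2)$-\NExpTime. The cases $\ell\le 2$ are classical, the adjacency requirement being vacuous: $\AFv{2}$ translates into $\FOt$ and is \NExpTime-complete, while $\AFv{1}$ and $\AFv{0}$ are \NP-complete. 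Finally, any $\AF$-sentence $\phi$ uses some number $k\le\sizeof{\phi}$ of variables and so lies in $\AFv{k}$; thus $\AF$-satisfiability is decided within a tower of exponentials of height linear in the input, placing it in \Tower. The matching lower bound is inherited from the fluted fragment: $\FL\subseteq\AF$ and $\FL$-satisfiability is already \Tower-hard, so $\AF$-satisfiability is \Tower-complete.

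The substance of the argument resides in the two reduction lemmas, which are granted here: their correctness turns on Theorem~\ref{theo:main} (uniqueness of primitive generators up to reversal), which is precisely what guarantees that the layer-by-layer assignments of adjacent $(\ell{+}1)$-types never clash in the model construction, and on the witness function of Lemma~\ref{lma:simpleComb} for the circular-witnessing step. Granting those, the remaining difficulty in assembling the theorem is quantitative rather than conceptual: one must check that the blow-up in \emph{formula} size at each inductive step is genuinely single-exponential --- this is exactly what the arity-bounding preprocessing secures; without it one still obtains decidability and membership in \Tower, but only with a larger tower --- and then track the tower height so that it comes out as $\ell{-}1$ rather than a constant multiple of $\ell$.
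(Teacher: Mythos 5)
Your overall architecture is the paper's: normalize via Lemma~\ref{lma:anf}, iterate the reduction of Lemmas~\ref{lma:reductionDirection1}--\ref{lma:reductionDirection2} down to three variables, invoke Lemma~\ref{lma:base} there, and keep the per-step formula blow-up single-exponential so the tower height is $\ell-1$; the complexity and \Tower{} conclusions are then read off as in the paper. The one point where you diverge is exactly where your argument has a genuine gap: the arity-bounding preprocessing you call routine. First, the fresh predicate must be indexed by the pair (predicate, walk), not merely by the block, since distinct walks on the same block denote different tuples; more seriously, the surrogates must be forced to agree whenever two replaced atoms denote the same ground tuple, and this is not captured by ``agreement on tuples with repeated entries''. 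Take the two atoms of formula~\eqref{eq:simpleExample}, $p(x_1x_2x_3x_2x_3x_4x_5)$ and $p(x_1x_2x_3x_4x_3x_4x_5)$, with walks $f_1=(1,2,3,2,3,4,5)$ and $f_2=(1,2,3,4,3,4,5)$. Under any assignment with $x_2=x_4$ they denote the same $7$-tuple, in particular under assignments of pattern $a\,b\,c\,b\,d$ with $a,b,c,d$ pairwise distinct (the denoted word $abcbcbd$ is precisely the paper's example of a word with two distinct modes of generation from its primitive generator $abcbd$). An agreement axiom of the natural form $\forall \bx_r\,\bigl(q_1(\bx_r^{u_1})\leftrightarrow q_2(\bx_r^{u_2})\bigr)$ is sound only if $u_1\circ f_1=u_2\circ f_2$, which here forces $u_1=u_2=u$ with $u(2)=u(4)$; adjacency of $u$ then forces every instance $\bx_r^{u}$ to exhibit a further coincidence among coordinates $1,3,5$ (or with coordinate $2$), so no sound axiom of this form ever constrains the critical tuples $a\,b\,c\,b\,d$ with $a,b,c,d$ distinct. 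Hence the ``polynomially many adjacent agreement axioms'' cannot secure the consistency needed for the backward direction of your replacement, and since the language has no equality it is unclear that any adjacent axioms can; compressing onto the primitive generator of the argument word instead fails in the opposite direction, since the two atoms above have the same primitive argument pattern $x_1\cdots x_5$ yet are inequivalent.

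The paper needs no signature surgery here. It controls the blow-up by the counting device of Lemma~\ref{lma:smallConnector}: when forming $\phi_k$ from $\phi_{k+1}$ one need only consider adjacent $k$-types over atoms whose argument sequences are substitution instances of argument sequences actually occurring in $\phi_{k+1}$, since other atoms cannot affect its evaluation; for fixed $\ell$ there are only $O(\sizeof{\phi_{k+1}})$ such atoms, hence $2^{O(\sizeof{\phi_{k+1}})}$ relevant types, and the single-exponential step (and thus $\ft(\ell-1,O(\sizeof{\phi}))$ and $(\ell{-}2)$-\NExpTime) follows with no new reduction and no soundness issue. If you substitute this observation for your preprocessing, the remainder of your proof --- the induction with base $\ell=2$, the use of $\ft(a,\ft(b,n))=\ft(a{+}b,n)$, the factor $|I|\cdot(\ell^2+\ell+1)^{\ell}$ from Lemma~\ref{lma:reductionDirection2}, and the \Tower{} upper and lower bounds --- goes through exactly as in the paper. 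As you note yourself, without some such device you only obtain a taller tower, so the gap lies precisely in the quantitative claims of the theorem.
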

\begin{proof}
	Fix $\ell \geq 2$ and suppose $\phi$ is a satisfiable $\AFv{\ell+1}$-formula 
	over a signature $\sigma$.
	By Lemma~\ref{lma:anf}, we may assume that $\phi$ is in normal form.
	%Without loss of generality, we assume $|\sigma| \geq 2$. 
	Writing $\phi_{\ell+1}$ for $\phi$, let $\phi_{\ell}$ be the formula $\phi'$ given by the
	conjunction of $\phi^\#$ and formulas~\eqref{eq:reduction1}--\eqref{eq:reduction3}
	as described before Lemma~\ref{lma:reductionDirection1}. 
	Repeating this process, we obtain a
	sequence of formulas $\phi_{\ell+1}, \dots, \phi_3$. 
	By Lemma~\ref{lma:reductionDirection1}, $\phi_3$ is satisfiable.
	%Since our goal is to obtain a small model property for the language $\AFv{\ell+1}$, we may regard $\ell$ as a constant. 
	For all $k$, ($3 \leq k \leq \ell+1$), let $\phi_k$ have signature $\sigma_k$, and for $k \leq \ell$, consider the construction of $\phi_{k}$ from
	$\phi_{k+1}$. Since $\sum_{k'=1}^{k+1} |\bA^{k+1}_{k'}|$ 
	is bounded by a constant, we see that $\sizeof{\phi^{\#}_{k+1}}$ is $O(\sizeof{\phi_{k+1}})$. 
	Turning now to the formulas corresponding to~\eqref{eq:reduction1}--\eqref{eq:reduction3},
	we employ the same technique used in the proof of Lemma~\ref{lma:smallConnector}.
	When considering the adjacent $k$-types over $\sigma_{k+1}$, we may disregard all adjacent atoms whose argument sequence is not
	a substitution instance of some argument sequence $\bx_{k}^g$ occurring in an atom of $\phi_{k+1}$, as these cannot affect the
	evaluation of $\phi_{k+1}$. And since $k \leq \ell$, the 
	number of functions from $\bx_k$ to itself is again bounded by a constant, so that the number of adjacent $k$-atoms over $\sigma_{k+1}$ 
	that we need to consider is $O(\sizeof{\phi_{k+1}})$. Thus, the number of adjacent $k$-types over $\sigma_{k+1}$ that we need to consider
	is $2^{O(\sizeof{\phi_{k+1}})}$; and this bounds the number of conjuncts in~\eqref{eq:reduction1}--\eqref{eq:reduction3} taken together. 
	Some care is needed when calculating the sizes of these conjuncts themselves, as they feature the subformulas
	$(\zeta \wedge \hat{\delta} \wedge \gamma_i)^\circ$ and
	$(\zeta \wedge \hat{\delta})^\circ$. However, these are simply, in effect, disjunctive normal forms over atoms contained in $\phi_{k+1}$, and hence
	have cardinality $2^{O(\sizeof{\phi_{k+1}})}$, whence $\sizeof{\phi_k}$
	is $2^{O(\sizeof{\phi_{k+1}})}$. By an easy induction, then, $\sizeof{\phi_{3}}$ is $\ft(\ell-2, O(\sizeof{\phi_{\ell+1}}))$, i.e.~$\ft(\ell-2, O(\sizeof{\phi}))$.  
	
	By Lemma~\ref{lma:base}, $\phi_3$ has a model of cardinality $\ft(\ell-1, O(\sizeof{\phi}))$. Moreover, 
	by Lemma~\ref{lma:reductionDirection1}, each of the formulas $\phi_k$ ($3 \leq k \leq \ell$)  
	has a model over a set, say $B_k$, such that  $|B_{k+1}| \leq |B_{k}| \cdot \sizeof{\phi_{k+1}} \cdot (\ell^2 +  \ell + 1)^\ell$. 
	Since $\sizeof{\phi_{k}}$ is $\ft(\ell-3, O(\sizeof{\phi}))$ for all $k$
	and remembering that $\ell$ is a constant, we see that $\phi = \phi_{\ell+1}$ has a model of cardinality 
	$O(\sizeof{\phi}^{\ell-1}\ft(\ell-1, O(\sizeof{\phi})))$, that is to say $\ft(\ell-1, O(\sizeof{\phi}))$.
\end{proof}
%!TEX root = ../main.tex

\section{The Guarded Subfragment}\label{sec:guarded}
We next shift our attention to the \emph{guarded subfragment} of the adjacent fragment, 
denoted~$\GA$, defined as the intersection of the guarded fragment $\GF$
and $\AF$.
Recall that in $\GF$, quantification is relativized by atoms, e.g. all universal quantification 
takes the form $\forall \bar x (\alpha \rightarrow \psi)$, where $\alpha$ (a \emph{guard}) 
is an atom featuring all 
the variables in $\bar{x}$ and all the free variables~of~$\psi$. 
We show that the satisfiability problem for $\GA$, 
in contrast to $\GF^2$ (the two-variable guarded fragment), 
is $\TwoExpTime$-complete, and thus as difficult as full $\GF$.

Our proof employs the same strategy as the $\TwoExpTime$-hardness 
proof for $\GF$ by Gr\"adel~\cite{Gradel99}.
The novel part of the reduction here concerns a feature characteristic of 
hardness results for guarded logics~\cite{Gradel99,Kieronski19}. However,
the fact that we are working in the guarded \textit{adjacent} fragment means that 
existing techniques are not directly available. 

Let $m \in \N$ and consider the following adjacent functions 
(the upper index is mapped to the lower~one):
\begingroup % nasty hack
\setlength\arraycolsep{3pt}
\begin{align*}
    \lambda_{1} &\coloneqq 
        \begin{pmatrix}
        1 & 2 & 3 & 4 & \dots & m{+}2\\
        1 & 2 & 2 & 3 & \dots & m{+}1
        \end{pmatrix},
    \lambda_{2} \coloneqq 
        \begin{pmatrix}
        1 & 2 & 3 & 4 & \dots & m{+}2\\
        1 & 2 & 1 & 2 & \dots & m
        \end{pmatrix},
    \lambda_{3} \coloneqq 
        \begin{pmatrix}
        1 & 2 & 3 & 4 & \dots & m{+}2 \\
        1 & 2 & 3 & 3 & \dots & m{+}1
        \end{pmatrix}.
\end{align*}
\endgroup
We show that repeated application of $\lambda_1$--$\lambda_3$ on the bit-string $\bfz \bfo \bfo^m$ yields the whole of
$\bfz \bfo \{ \bfz, \bfo \}^m$.
\begin{lemma} \label{lemma_GA_generation_1}
    Let $W_0 \subseteq \{ \bfz, \bfo \}^*$ contain $\bfz \bfo \bfo^m$ and $W_i \coloneqq W_{i{-}1}$
    $\cup \{ \bar{w}^{\lambda_1}, \bar{w}^{\lambda_2}, \bar{w}^{\lambda_3} \mid \bar{w} \in W_{i{-}1} \}$.
    Setting $W \coloneqq \textstyle \bigcup_{i\geq0} W_i$, we have $\bfz \bfo \{ \bfz, \bfo \}^m \subseteq W$. %Then $\bfz \bfo \{ \bfz, \bfo \}^n \subseteq \textstyle \bigcup_{i\geq0} W_i$.
\end{lemma}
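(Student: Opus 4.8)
The plan is to change coordinates so that the three maps become transparent operations on bit-strings, and then prove reachability by a short induction. Since the set $W$ depends monotonically on $W_0$, I may assume $W_0=\{\bfz\bfo\bfo^m\}$, so that $W$ is exactly the closure of $\{\bfz\bfo\bfo^m\}$ under the maps $\bar w\mapsto\bar w^{\lambda_j}$, $j\in\{1,2,3\}$. Each $\lambda_j$ fixes $1$ and $2$, so every word in $W$ begins with $\bfz\bfo$; accordingly I identify a word $\bfz\bfo b_1\cdots b_m$ with the tuple $\bar b=b_1\cdots b_m\in\{\bfz,\bfo\}^m$. Reading off the bottom rows of $\lambda_1,\lambda_2,\lambda_3$, one checks that under this identification the three maps become the following ``prepend-and-truncate'' operations on $\{\bfz,\bfo\}^m$:
\[
L_1(\bar b)=\bfo\,b_1b_2\cdots b_{m-1},\qquad
L_2(\bar b)=\bfz\bfo\,b_1b_2\cdots b_{m-2},\qquad
L_3(\bar b)=b_1\,b_1b_2\cdots b_{m-1}.
\]
Thus $L_1$ inserts a $\bfo$ at the front, $L_2$ inserts $\bfz\bfo$, and $L_3$ repeats the current leading bit, with the word truncated back to length $m$ in each case. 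Writing $R$ for the set of tuples reachable from $\bfo^m$ by $L_1,L_2,L_3$, it suffices to prove $R=\{\bfz,\bfo\}^m$.

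For this I would establish, by induction on $r$, the stronger claim: for every $r$ with $0\le r\le m$ and every $\bar t=t_1\cdots t_r\in\{\bfz,\bfo\}^r$ there is some $\bar b\in R$ with $b_i=t_i$ for all $i\le r$ (the lemma is the case $r=m$). The base case $r=0$ is witnessed by $\bfo^m\in R$. For the step, fix $\bar t$ with $r\ge1$ and split on its first one or two letters. If $t_1=\bfo$, apply the hypothesis to $t_2\cdots t_r$ to get $\bar b'\in R$ and take $L_1(\bar b')$. If $t_1=\bfz$, inspect the second letter, treating a missing one ($r=1$) as if it were $\bfo$: if it is $\bfo$, apply the hypothesis to $t_3\cdots t_r$ (the empty word when $r\le2$) and take $L_2(\bar b')$, which fixes the first two coordinates to $\bfz\bfo$; if it is $\bfz$, apply the hypothesis to $t_2\cdots t_r$ and take $L_3(\bar b')$, noting that the returned tuple already has leading bit $b'_1=t_2=\bfz$, so $L_3(\bar b')$ begins $\bfz\bfz$. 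In each case the recursive call has rank strictly below $r$, so the induction terminates, and checking that the chosen operation reproduces $t_1\cdots t_r$ on the first $r$ coordinates is a one-line substitution into the formulas above.

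The single delicate point---and the reason for looking at $t_2$ rather than at $t_1$ alone---is the $L_3$ case: $L_3$ can only duplicate whichever bit is currently at the front, so it cannot create a leading $\bfz$ out of nothing. The case split is arranged precisely so that $L_3$ is invoked only when the inductive hypothesis has already been asked for a tuple whose leading bit is $\bfz$; whenever that cannot be arranged (the target's second letter is $\bfo$, or there is no second letter) one falls back on $L_2$, which pins down both of the first two coordinates simultaneously. Everything else is bookkeeping with the definitions of $\lambda_1,\lambda_2,\lambda_3$ and of $\bar w^{\lambda_j}$.
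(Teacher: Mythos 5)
Your proof is correct and follows essentially the same route as the paper's: after translating $\lambda_1,\lambda_2,\lambda_3$ into prepend/duplicate-and-truncate operations on the suffix, you induct on the length of the prescribed prefix and split on the leading one or two bits, using $\lambda_1$ for a leading $\bfo$, $\lambda_2$ for $\bfz\bfo$ (or a lone $\bfz$), and $\lambda_3$ for $\bfz\bfz$ --- exactly the paper's case analysis, with your padding-$\bfo$ convention handling the same edge case. The only difference is notational (prefix-realizability in $R$ versus the paper's explicit words $\bfz\bfo\bar{c}\bfo^{m-i}$), so no further comment is needed.
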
   
\begin{proof}
We inductively prove that, for any $i \in [0, m]$ 
and any $\bar{c} \in \{ \bfz, \bfo \}^i$, we have $\bfz \bfo \bar{c} \bfo^{m{-}i} \in W$.
The base case ($i = 0$) follows from the assumption that $W_0$ contains $\bfz\bfo\bfo^n$, so let~$i > 0$. 
We aim at generating any word $u = \bfz \bfo x \bar{c} \bfo^{m{-}i{-}1}$ for $x\in \{ \bfz, \bfo \}$.
By induction hypothesis both $\bar{v} = \bfz \bfo \bar{c}\bfo^{m{-}i{-}1} \bfo$ and
$\bar{w} = \bfz \bfo \bar{d} \bfo^{m{-}i{-}1} \bfo \bfo$ (where $\bar{c}= c_1\bar{d}$) are in $W$.
We consider cases: 
(i) if $x = \bfo$ then $\bar{u} = \bar{v}^{\lambda_1}$, 
(ii) if both $x$ and $c_1 = \bfz$ then $\bar{u} = \bar{v}^{\lambda_3}$,
and otherwise, (iii) $x=\bfz$, $c_1 =  \bfo$ and $\bar{u} = \bar{w}^{\lambda_2}$.
Thus~$\bar{u} \in W$.
\end{proof} 

Let $G_{m}$ and $P$ be, respectively, $(m+2)$-ary and binary~predicates.
We define $\zeta^P_{m}$ to be the sentence~below:
\[
    \forall x y \; \Big( P(x y) \to
        G_{m}(x y \underbrace{y \cdots y}_m) \Big) \land
    \bigwedge_{i=1,2,3} \forall \bu_{m+2} \; 
        \Big( G_{m}(\bu_{m+2}) \to
        G_{m}(\bu_{m+2}^{\lambda_i}) \Big).
\]
Let $\str{A}$ be a model of $\zeta^P_{m}$, and take any
$(a,b) \in P^{\str{A}}$. By Lemma~\ref{lemma_GA_generation_1} we conclude 
that $G_m^{\str{A}}$ contains every word of the form $a b \{a, b\}^m$.
Let $R$ be some $4$-ary relation symbol.
In the forthcoming proof we also consider a $(2m + 4)$-ary predicate $F_m$
described by $\epsilon^{R}_{m}$ which is a conjunction of the following two sentences:
\[
    \forall y x x' y' \; \Big( R(y x x' y') \to
        F_m(\underbrace{y \dots y}_m y x x' y' \underbrace{y' \dots y'}_m) \Big)
\]
\[
    \bigwedge_{i, j \in \{0,1,2,3\}} \forall \bu_{m+2}^{-1} \bv_{m+2} \; 
        \Big( F_m(\bu_{m+2}^{-1} \bv_{m+2}) \to
        F_m(z_{\lambda_{i}(m+2)} \dots z_{\lambda_{i}(1)} z'_{\lambda_{j}(1)} \dots z'_{\lambda_{j}(m+2)}) \Big).
\]
Here $\lambda_0$ is the identity function (i.e. $k \mapsto k$ for each $k \in [1, m]$).
The intended meaning is that whenever $\fA \models \zeta^{R}_{m}$ holds, 
this implies that for any quadruple $baa'b' \in R^{\str{A}}$
we have~that $\str{A} \models G[\bar{c} b a a' b' \bar{c}']$ holds for all
$\bar{c} \in \{a, b\}^m$ and $\bar{c'} \in \{a', b'\}^m$.

\textbf{ATMs.} An \emph{Alternating Turing Machine} (ATM) $\tm$
is a tuple $\langle \tmst, \tmal, \tmtrl, \tmtrr, q_0, \kappa \rangle$, 
where $\tmst$ is a finite set of states,
$\tmal$ is a finite alphabet containing an empty-cell symbol `$\blank$',
$\tmtrl, \tmtrr : \tmst \times \tmal \to \tmst \times \tmal \times [-1, 1]$ are, 
respectively, the left and right transition functions,
$q_0 \in Q$ is the initial state, and
$\kappa: \tmst \rightarrow \{ \tmuni, \tmexi \}$ is the state 
descriptor function stating if a given state is, respectively, 
\emph{universal} $\tmuni$ or \emph{existential} $\tmexi$.
We say that a state $q$ is \emph{accepting} if and only if $\kappa(q) = \tmuni$, and there are no possible transitions given by $\tmtrl(q, s)$ and $\tmtrr(q, s)$ for any symbol $s \in \tmal$.
Dually, $q$ is \emph{rejecting} if and only if $\kappa(q) = \tmexi$, and, again, there are no possible transitions.
%Given a word $\bar{\omega} \in \tmal^*$ we call $\tm(\bar{\omega})$ a \emph{run of $\tm$ on input $\bar{\omega}$}.
By tracking every step of the computation by $\tm$ on $\bar{w} \in \tmal^*$
we obtain a binary tree structure $\tmct = \langle \tmctver, \tmctverl, \tmctverr \rangle$. Each vertex $v \in \tmctver$ is labelled with some %\ianside{No. Each node is labelled with a tuple, not regarded as one.}
{\em configuration} $\langle q,\bar{s},h\rangle$, where
$q \in \tmst$ is a machine state, $\bar{s} \in \tmal^m$ a word indicating the contents of the tape,
and $h \in [0, m-1]$ an integer indicating the position of the head at some point in the computation.
Each edge $(v, u) \in \tmctvera$ (where $\eta=l,r$) is labelled by a transition $\tmtra(q, s_h)$, enabled in the configuration~labelling~of~$v$.
We call $\tmct$ the \emph{configuration tree} of the computation by $\tm$ on $\bar{w}$. 
Assuming $\tmct$ is finite,
we say that it is \emph{accepting} if no vertex is associated with a rejecting state.
We identify the following three properties of $\tmct$.
The root node is labelled with a configuration in which the machine is in state $q_0$, the head position is 
$0$, and the tape is written with the string $\bar{w}$ followed by blanks.
We call this property {\em initial configuration} (\textbf{IC}).
Let $u$ be a vertex labelled with a configuration in which the machine is in state $q$.
If $q$ is universal, then $u$ will have two children; if $q$ is existential, then $u$ will have a single child; if $q$ is accepting, or rejecting then $u$ will have no children.
We call this property {\em successor existence} (\textbf{SE}).
Suppose further that, in the configuration labelling of some node $u$ we have that the head is reading the symbol $s$ whilst in state $q$. Then any child
$v_\eta$ (s.t. $(u, v_\eta) \in \tmtra$, where $\eta=l$ or $\eta = r$) represents the result of a single transition $\tmtra(q, s) = (p, s', k)$, and thus is labelled with a configuration in which the machine state
is $p$, $s'$ is written in place of $s$, and the head is moved by a distance of $k$.
We call this property \textit{configuration succession}~(\textbf{CS}).

\textbf{Encoding numbers.} 
Let $\bin{m}{x, y}$ be the canonical~map~from $[0, 2^n - 1]$ to
bit-string representations of length $m \in \N$, using $x$ as the \emph{zero bit} and $y$ as the \emph{unit bit}.
In the sequel, we will consider structures $\str{A}$ with elements labelled by a unary predicate $O$. 
If $\fA \models \lnot O[a] \land O[b]$ we say that $a, b$ \emph{act as zero and unit bits}. 
We thus associate every word $\bar{c} \in \{a,b\}^+$ with an
integer value given by the canonical map $\val{\fA}$ (this function depends on $\fA$, because it is $\fA$ that
determines which is the zero bit and which is the unit bit.)
Given two bit-strings $\bar{c}$ and $\bar{d}$ (not necessarily composed of the same elements) there is a classical 
way to define the following properties in the monadic fragment of $\FO$ (hence also in $\GA$):
\begin{itemize}\itemsep0em
\item $\fA \models \shless{\bar{c}}{\bar{d}}$ iff $\val{\fA}(\bar{c}) < \val{\fA}(\bar{d})$ \label{GAF_less_def}
\item $\fA \models \sheq{\bar{c}}{\bar{d}}$ iff $\val{\fA}(\bar{c}) = \val{\fA}(\bar{d})$
\item $\fA \models \sheq{\bar{c}}{\bar{d} + k}$ iff $\val{\fA}(\bar{c}) = \val{\fA}(\bar{d}) + k$, where $k \in [-1, 1]$.
\end{itemize}
Formally, the formulas are defined as follows:
\[
    \shless{\bu_m}{\bv_m} \coloneqq \bigvee_{i=1}^m \Big( \neg O(z_i) \wedge O(z'_i) \wedge 
    \bigwedge_{j = i + 1}^{m} \big( O(z_j) \leftrightarrow O(z'_j) \big) \Big)
\]
\[
    \sheq{\bu_m}{\bv_m} \coloneqq \bigwedge_{i = 1}^{m} \Big( O(z_i) \leftrightarrow O(z'_i) \Big)    
\]
\[
    \sheq{\bu_m}{\bv_m + 1} \coloneqq \bigwedge_{i = 1}^{m} \Big( \big( O(z_i) \leftrightarrow O(z'_i) \big) \leftrightarrow \bigvee_{j=1}^{i-1} O(z_j) \Big)
\]
and where $\sheq{\bu_m}{\bv_m + 0} \coloneqq \sheq{\bu_m}{\bv_m}$ and $\sheq{\bu_m}{\bv_m - 1} \coloneqq \sheq{\bv_m}{\bu_m +~1}$.

Fix an ATM $\tm$
working in exponential space w.r.t any given input $\bar{w}$.
Our goal is to construct a polynomial-size $\GA$-sentence $\varphi_{\tm, \bar{w}}$
which is satisfiable if and only if $\tm$ has an accepting configuration tree on a given input $\bar{w}$.
Utilising the fact that $\AExpSpace$ equals $\TwoExpTime$, the reduction yields the desired bound on $\GA$.
Now, take an accepting configuration tree 
$\tmct = (\tmctver, \tmctverl, \tmctverr)$ for $\tm$~and~$\bar{w}$, and fix $n =|\bar{w}|$.
We consider structures interpreting binary predicates $V, R, Q_q$ (for each $q \in \tmst$), quaternary 
predicates $E_l, E_r$ and $n$-ary predicates $H, S_s$ (for each $s \in \tmal$).
We say that $\str{A}_0$ \emph{embeds $\tmct$} if there is $f: \tmctver \to A_0^2$ such that for all $v \in \tmctver$
\begin{enumerate}[label=(\alph*)]\itemsep0em
    \item $\fA_0 \models V[f(v)]$, \label{GAF:list_model_A_first}
    \item $\fA_0 \models R[f(v)]$ if $v$ is the root node,
    \item $\fA_0 \models E_\eta[f(u)^{-1}, f(v)]$ if $(u, v) \in \tmctvera$, where $\eta=l,r$,
    \item $\fA_0 \models Q_q[f(v)]$ if the configuration $v$ is in state $q$,
    \item $\fA_0 \models S_s[\bin{n}{f(v)}(i)]$ if $v$'s $i$-th tape cell has symbol $s$, 
    \item $\fA_0 \models H[\bin{n}{f(v)}(i)]$ if $v$'s head is located over the $i$-th tape square. \label{GAF:list_model_A_last}
\end{enumerate}
We construct $\str{A}_0$ embedding $\tmct$ as follows:
the domain $A_0$ is composed of fresh symbols $0_v, 1_v$ for 
each vertex $v \in \tmctver$, for which we also put $f(v) = 0_v1_v$. (Notice that $0_v1_v$ is a word over $A_0$
of length 2.)
We interpret the predicates $V, R, E, Q_q, S_s$ and $H$ as required 
by conditions \ref{GAF:list_model_A_first}--\ref{GAF:list_model_A_last}.
Then, we construct a $\varphi_{\tm, \bar{w}}$ in $\GA$ such that: 
(i) $\str{A}_0$ can be expanded to a model $\str{A}$ of $\varphi_{\tm, \bar{w}}$;
and (ii) every model of $\varphi_{\tm, \bar{w}}$ embeds~$\tmct$.

The first conjunct of $\varphi_{\tm, \bar{w}}$ requires pairs $ab$ satisfying the predicate $V$ to act as zero bits and unit bits,
indicated by the predicate $O$:
\[
    \varphi_1 = \forall x y \; \Big( V(x y) \to \neg O(x) \land O(y) \Big)
\]
%
%Given a pair of elements $a, b \in A$ satisfying $\fA \models \neg O[a]$ 
%and $\fA \models O[b]$, we treat these elements, respectively, as turned-off 
%and turned-on bits of the binary encoding of integers. 
%Thus for any sequence $\bar{u} \in \{a,b\}^+$ we can associate the 
%integer value given by the canonical map $\val{\fA}$.

We now add $\zeta^V_n$ and $\zeta^V_{2n}$ to the main formula $\varphi_{\tm, \bar{w}}$.
Recall that the sentence $\zeta^V_m$ features an $(m+2)$-ary predicate $G_m$, and ensures that,
if $\fA \models V[a b]$, then $\fA \models G_m[a b \bar{c}]$ for all $c \in \{a, b\}^m$.
Writing $\varphi_2$, $\varphi_3$ and $\varphi_4$ as
\[ 
    \bigwedge_{p, q \in \tmst}^{p \neq q}
        \forall x y \Big( V(x y) 
        \to \big( \neg Q_{p}(x y) \lor \neg Q_{q}(x y) \big) \Big),
\]
\[ 
    \bigwedge_{s, s' \in \tmal}^{s \neq s'}
       \forall x y \bu_n \Big(
           G_n(x y \bu) {\to} \lnot \big( 
                S_s(\bu_n) {\land} S_{s'}(\bu_n) \big)
       \Big),
\]
\[
    \forall xy\bu_n\bv_n \bigg(
        G_{2n}(x y \bu_n \bv_n) \to \Big( 
            \big( H(\bu_n) \land H(\bv_n) \big) \to \sheq{\bu_n}{\bv_n}
        \Big) \bigg)
\]
respectively, we ensure that every configuration is in at most one state at a time,
every tape square of a configuration has at most one symbol, and the read-write head of any configuration is pointing to a single square at a time.
Note that all of these formulas are guarded. However, the advertised behaviour of the guard predicates $G_n$ and $G_{2n}$ means,
in essence, that the guards have no semantic effect.

We now secure the property (\textbf{IC}). Let $\mu_1$ abbreviate the formula
$Q_{q_0}(x y) \land H(\bin{n}{x y}(0)) \land \bigwedge_{i = 1}^{|\bar{w}|} S_{w_i}(\bin{n}{x, y}(i{-}1))$, and
$\mu_2$ the formula $\shless{\bin{n}{x, y}(|\bar{w}| {-} 1)}{\bu_n} \to S_{\blank}(\bu_n)$. Writing 
\[
    \varphi_5 \coloneqq \exists xy \Big( V(x y) \wedge R(x y) \Big) \wedge
    \forall xy \Big( R(x y) \to
    \big( \mu_1 \wedge \forall \bu_n (G(x y \bu_n) \to \mu_2) \big) \Big),
\]
we ensure that there is a root configuration in which
the machine state is $q_0$, the head is scanning square `0', and the tape is written with the string $\bar{w}$ followed by the requisite number of blanks.

Let $\text{K}_{\tmuni}$ be the formula $\textstyle \bigvee^{\kappa(q) {=} \tmuni}_{q \in \tmst} Q_q(xy)$, 
and define $\text{K}_{\tmexi}$ analogously.
Similarly, we define $\text{K}_{\tmrej}$ to be a disjunction of rejecting states.
The formula $\varphi_6 \coloneqq \forall x y ( V(x y) \to \neg \text{K}_{\tmrej}(x y) )$ 
ensures that no configuration is labelled with a rejecting state.%\ianside{There's a weakness here. Some conjucnts enforce properties SE, IS or whatever. Others are just hanging around. This is uneven. Not sure if this can be fixed now}

We next encode the transitions of $\tm$, securing the property (\textbf{SE}).%\bbeside{do we need to say anything about acc state?}\ianside{See earlier remarks about states with no successors}
Let 
$\psi_{\tmuni}$ abbreviate the formula $\exists x' y' E_l(y x x' y') \land \exists x' y' E_r(y x x' y')$, and
$\psi_{\tmexi}$ the formula $\exists x' y' E_l(y x x' y') \lor \exists x' y' E_r(y x x' y') $.
Writing
\[
\varphi_7 \coloneqq \bigwedge_{k=\tmuni, \tmexi} \forall y x \bigg( V(x y) \to
\Big( \text{K}_{k}(x y) \to \psi_k \Big) \bigg).
\]
we ensure that, if $\fA \models V[a, b] {\land} \text{K}_{k}[a b]$, then $\fA$ contains 
pairs encoding the appropriate successor configurations.

We next ensure that the transitions have the expected effect on the configurations they connect, securing
the property (\textbf{CS}). For this, we
need a further predicate, $F_n$, to act as a dummy guard.
By adding $\epsilon^{E_\eta}_{n}$ to the main formula (for both $\eta=l,r$), we secure $\fA \models$ $F_n[\bar{c} b a a' b' \bar{c}']$
for all $a, b, a', b'$ such that $\fA \models E_\eta[b a a' b']$ 
with $\bar{c} \in \{ a, b \}^n$, $\bar{c}' \in \{ a', b' \}^n$.
The formula $\varphi_8$ then ensures that any pair of parent and successor configurations 
have identical tape contents except (possibly) for the position scanned by the head, thus:
\begin{multline*}
    \varphi_8 \coloneqq \forall \bu_n y x x' y' \bv_n \; 
        \Bigg( F_n(\bu_n y x x' y' \bv_n) \to \\
            \bigg( \Big( \neg H(\bu_n) \land \sheq{\bu_n}{\bv_n} \Big) \to 
                \Big( \bigwedge_{s \in \tmal} \big( S_s(\bu_n) \to S_s(\bv_n) \big) \Big) 
        \bigg) \Bigg).
\end{multline*}
Now let $\chi_1$ abbreviate the formula $Q_p(x' y')$,
$\chi_2$ the formula $\sheq{\bv_n}{\bu_n} \to S_{s'}(\bv_n)$, and $\chi_3$ the formula $\sheq{\bv_n}{\bu_n + k} \to H(\bv_n)$.
In addition, we write~$\xi_{\tau_\eta}$~for the sentence %\ianside{I wonder if it wouldn't be better to label these formulas with the transitions.}
\[
    \forall \bu_n y x x' y' \bv_n \; 
	\bigg( G(\bu_n y x x' y' \bv_n) {\to}
	    \Big( \big( E_\eta(y x x' y') \land  Q_q(x y) \land H(\bu_n) \land S_s(\bu_n) \big) {\to}
	    \big( \chi_1 \land \chi_2 \land \chi_3 \big) \Big)
        \bigg).
\]
Assuming the transition $\tau_\eta$ is of the form $(q, s) \mapsto (p, s', k)$, the formula $\xi_{\tau_\eta}$
states that, if in a certain configuration, the machine state is $q$ and the head is reading symbol $s$, 
then in the $\eta$-side successor configuration defined by $\tmtra$, the machine state will be
$p$, the symbol $s$ will have been replaced by $s'$, and  the head will have moved by $k$.
To encode all possible transitions, we write $\varphi_9$ to be a conjunction of $\xi_{\tau_\eta}$
for each transition $\tau_\eta \in \tmtra$ for both~$\eta=l,r$.
%The parameters $p$, $s'$, $k$ are then the state, symbol and head movement from~$\tmtra(q, s)$.\ianside{Yuck. Why don't you define a transition $\tau$ to be a quadruple and define $\xi_\tau$?}

Let $\fA_0$ embed some accepting $\tmct$ as described in \ref{GAF:list_model_A_first}--\ref{GAF:list_model_A_last}.
We expand $\fA_0$ to $\fA$ by setting $A = A_0$ with
\begin{enumerate}
    \item $\fA \models \lnot O[a]$ and $\fA \models O[b]$ if $\fA_0 \models V[ab]$,
    \item $\fA \models G_m[a b \bar{c}]$ where $\fA_0 \models V[a b]$ and $\bar{c} \in \{ a, b \}^m$ (for $m=n, 2n$),
    \item $\fA \models F_n[\bar{c} b a a' b' \bar{c}']$ where $\fA_0 \models S_\eta[b a a' b']$, $\bar{c} \in \{ a, b \}^n$ and
    $\bar{c}' \in \{ a', b' \}^n$ (here $\eta=l,r$).
\end{enumerate}
Recalling that $\tmct$ contains an initial configuration (\textbf{IC}), we have that $\fA \models \varphi_5$.
Additionally, $\tmct$ {has the property} (\textbf{SE}), we see that $\fA \models \varphi_7$.
Lastly, since $\tmct$ has the property (\textbf{CS}), we have that $\fA \models \varphi_8$.
At this point it is easy to verify that $\fA \models \varphi_{\tm, \bar{w}}$.

Conversely, suppose $\fA \models \varphi_{\tm, \bar{w}}$.
We
construct an embedding $f\colon \tmctver {\to} A^2$ for an accepting $\tmct$ by well-founded induction.
The following observations will be used. Suppose $\fA \models V[ab]$. Intuitively, we think of the pair
$ab$ as a vertex of the computation tree labelled by some configuration, as determined by the predicates
$Q_q$, $S_s$ and $H$.
By $\varphi_2$, there is a unique $Q_q$ (for $q \in \tmst$) satisfied by $ab$.
Moreover, by $\varphi_3$, any bit-string $\bar{c} \in \{a, b\}^n$ satisfies a unique $S_s$ (for $s \in \tmal$).
Similarly, by $\varphi_4$, there is a unique $\bar{c}$ satisfying $H$.

Proceeding with the induction,
for the base case, pick $a, b$ s.t. $\fA \models R[ab]$.
By $\varphi_5$ we see that $\fA \models Q_{q_0}[ab]$, $\fA \models H[\bin{n}{a, b}(0)]$, and $\fA \models S_{w_i}[\bar{c}]$ for each $1 \leq i \leq |\bar{w}|$
with $\val{\fA}(\bar{c}) = i-1$  and $\fA \models S_{\blank}[\bar{c}]$ otherwise.
We then set $\tmctver = \{ v \}$ and $f(v) = ab$. Labeling $v$ with the state, tape and head position as suggested by
\ref{GAF:list_model_A_first}--\ref{GAF:list_model_A_last}, we 
have secured the property (\textbf{IC}).

For the inductive step, let $u$ be vertex which has been added to the tree.
Assume $u$ is labelled with a configuration in which the machine state $q$ is universal, and the head is at position $h$, reading symbol $s$.
If $q$ is accepting, then we stop. Otherwise, $\varphi_7$ guarantees that there are words $a_\eta b_\eta \in A^2$
such that $\fA \models S_\eta[f(u)^{-1} a_\eta b_\eta]$ for both $\eta=l,r$.
Notice that by $\varphi_8$ if the configuration labelling $u$ has the symbol $s'$ written on tape square $i$ (for $i \neq h$), then $\fA \models S_{s'}[\bin{n}{a_\eta b_\eta}(i)]$.
By $\varphi_9$ the pair $a_\eta b_\eta$ satisfies the predicate $Q_p$ that is in accordance with the transition $\tmtra(q, s)=(p, s', k)$.
Additionally, $\fA \models S_{s'}[\bin{n}{a_\eta b_\eta}(h)]$ and $\fA \models H[\bin{n}{a_\eta b_\eta}(h + k)]$.

We thus set $\tmctver \coloneqq \tmctver \cup \{ v_\eta \}$, $f(v_\eta) = (a_\eta, b_\eta)$ and $\tmctvera \coloneqq \tmctvera \cup \{ (u, v_\eta) \}$ thus securing (\textbf{SE}).
By interpreting the state, tape and head position of $v_\eta$ as suggested by \ref{GAF:list_model_A_first}--\ref{GAF:list_model_A_last}
we see that $v_\eta$ is a proper successor of $u$ as required by (\textbf{CS}).
The case for when $q$ is existential is similar.

Since there are no rejecting states in conf. tree (reference $\varphi_6$),
there is an initial configuration by (\textbf{IC}),
each parent has children complying with (\textbf{SE}),
and each parent-child pair conforms to (\textbf{CS}), we conclude that $\tmct$ is an accepting configuration tree.

%!TEX root = ../main.tex

\newcommand{\posetP}{\mathcal{P}}

\section{Conclusions}\label{sec:extensions-and-future-work}

The adjacent fragment $\AF$ is defined as the union of the formulas sets $\AFv{[k]}$, each of which restricts the allowed argument sequences appearing in atomic formulas to \textit{adjacent} words over the alphabet $\bx_k$.
The question arises as to whether these restrictions might be further relaxed without 
compromising the decidability of satisfiability. Under reasonable assumptions about the fragment in question, the answer must be no. Indeed, assume, for simplicity, that the argument
sequence $x_1x_2$ is allowed in the 2-variable case, and $x_2x_3$ in the 3-variable case. Now the only non-adjacent words of length 2 over $\bx_3$ are $x_1x_3$ and $x_3x_1$. In the first case, this allows us to write the formula $\forall x_1 \forall x_2(r(x_1x_2) \rightarrow \forall x_3(r(x_2x_3) \rightarrow r(x_1x_3)))$, which says that $r$ is transitive. 
But even two-variable logic with (at least two) transitive relations yields a logic for which satisfiability and finite satisfiability are undecidable~\cite{Kieronski-csl12},
since it is simple to write formulas all of whose models embed grids of unbounded size. 
Similar remarks apply to the formula $\forall x_1 \forall x_2(r(x_1x_2) \rightarrow \forall x_3(r(x_2x_3) \rightarrow r(x_3x_1)))$, and indeed to the case of formulas featuring ternary non-adjacent atoms such as $p(x_1x_3x_2)$. 
It is therefore difficult to conceive of meaningful fragments of first-order logic defined purely by reference to restrictions on the allowable argument sequences that  do not define sub-fragments of the adjacent fragment, and that are at the same time decidable for satisfiability. In this respect, \AF{} appears to be the end of the road.

On the other hand, the last two decades have witnessed concerted attempts to investigate the decidability of the satisfiability problem for $\FO^2$ 
over various classes of structures, where 
some distinguished predicates are 
interpreted in a~special way, e.g. as linear orders~\cite{KieronskiT09, SchwentickZ10}; other such semantic constraints have also been investigated~\cite{Manuel10, CharatonikW16, KieronskiMPT14, BojanczykDMSS11, CharatonikKM14, BednarczykCK17}.
It is therefore natural to ask whether the adjacent fragment remains decidable when subject to similar
semantic constraints.
Of course, since $\AF$ extends $\FO^2$, all the undecidability results for $\FO^2$  immediately transfer to $\AF$.
Thus, $\AF$ extended with two transitive relations~\cite{Kieronski-csl12}, or with three equivalence
relations~\cite{KieronskiO12}, or with
one transitive and one
equivalence relation~\cite{KieronskiT09},
or with two linear orders and their two
corresponding successor relations~\cite{Manuel10}, must all be undecidable.
(See~\cite{KieronskiPT18}~for~a~survey.) Regarding positive results, existing results on the fluted fragment give cause for some hope. Thus, for example, the fluted fragment remains decidable with the addition of one transitive relation  (and equality)~\cite{Pratt-HartmannT22}; moreover, \textit{finite} satisfiability for $\FO^2$ with one transitive relation is also 
known to be decidable~\cite{Pratt-Hartmann18}.

A second generalization of $\FOt$ which preserves decidability of satisfiability is the extension with counting quantifiers~\cite{Pratt-Hartmann10,Pratt-Hartmann21,BenediktKT20}. (Here, however, the finite model property is lost.) It has been shown that the corresponding extension of the fluted fragment retains the finite model property~\cite{Pratt-Hartmann21}. Extending the adjacent fragment with counting quantifiers certainly results in loss of the finite model property, because $\AF$ includes $\FO^2$; however, the decidability of the satisfiability and finite satisfiability problems~is~left~for~future~work.

% \clearpage
\bibliographystyle{plain}
\bibliography{references}%
\clearpage
\appendix
%!TEX root = ../main.tex

\section{Appendix for Section~\ref{sec:preliminaries}}
The goal of this section is to prove Lemmas~\ref{lemma:fo2-and-af-over-binary-sig-are-the-same} and Lemma~\ref{lma:simpleComb}. For the former, we 
first prove an auxiliary result:
\begin{lemma}
	\label{lemma:mini-lemma-for-FO-with-two-free-vars}
    Let $\varphi$ be some $\FO$ formula over a relational signature. If no sub-formula of $\varphi$ contains more than $2$ free variables, then, after variable renaming and elimination of vacuous quantification, $\varphi$ is
    in $\FO^2$.
\end{lemma}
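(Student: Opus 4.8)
The plan is to prove, by structural induction on the sub-formulas of $\varphi$, a statement slightly stronger than the lemma, one that keeps track of a choice of names for the (at most two) free variables. Writing $\mathrm{fv}(\psi)$ for the set of free variables of $\psi$, the inductive claim I would establish is: for every sub-formula $\psi$ of $\varphi$ and every injection $\iota\colon \mathrm{fv}(\psi) \to \set{x_1, x_2}$ --- one exists, since $|\mathrm{fv}(\psi)| \leq 2$ by hypothesis --- there is an $\FOt$-formula $\psi^\iota$, all of whose variables lie in $\set{x_1,x_2}$ and whose free variables lie in $\iota(\mathrm{fv}(\psi))$, obtainable from $\psi$ by renaming variables and deleting vacuous quantifiers, and logically equivalent to $\psi$ under the renaming $\iota$. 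Instantiating this claim at $\psi = \varphi$ then gives the lemma.

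First I would dispatch the base and Boolean cases. For atomic $\psi$, every variable occurring in $\psi$ is free, so there are at most two of them, and $\psi^\iota$ is obtained by replacing each variable $v$ occurring in $\psi$ by $\iota(v)$. For $\psi = \psi_1 \circ \psi_2$ with $\circ \in \set{\wedge,\vee,\to}$, one has $\mathrm{fv}(\psi_i) \subseteq \mathrm{fv}(\psi)$, so applying the induction hypothesis to each $\psi_i$ with $\iota$ restricted to $\mathrm{fv}(\psi_i)$ yields two $\FOt$-formulas over the variables $x_1,x_2$ which may simply be recombined with $\circ$; the two renamings are automatically compatible because each resulting formula already uses only $x_1$ and $x_2$. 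Negation is immediate.

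The quantifier case $\psi = Q y\,\psi_1$ (with $Q \in \set{\exists,\forall}$) is the heart of the argument, and I would split it according to whether the quantifier is vacuous. If $y \notin \mathrm{fv}(\psi_1)$, then $\psi$ is equivalent to $\psi_1$ over non-empty structures, $\mathrm{fv}(\psi) = \mathrm{fv}(\psi_1)$, and I set $\psi^\iota \coloneqq \psi_1^\iota$, thereby discarding the vacuous quantifier. If instead $y \in \mathrm{fv}(\psi_1)$, then $\mathrm{fv}(\psi_1) = \mathrm{fv}(\psi)\cup\set{y}$; since $|\mathrm{fv}(\psi_1)| \leq 2$ by hypothesis, this \emph{forces} $|\mathrm{fv}(\psi)| \leq 1$, so a name is left over: choosing $x_j \in \set{x_1,x_2}\setminus \iota(\mathrm{fv}(\psi))$, I extend $\iota$ to an injection $\iota'$ on $\mathrm{fv}(\psi_1)$ by $\iota'(y) \coloneqq x_j$, apply the induction hypothesis to $\psi_1$, and put $\psi^\iota \coloneqq Q x_j\, \psi_1^{\iota'}$. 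The required equivalence follows directly from that for $\psi_1$ together with the semantics of $Q$, and since $\psi_1^{\iota'}$ uses only $x_1,x_2$, so does $\psi^\iota$. This non-vacuous branch is precisely where the hypothesis that no sub-formula has more than two free variables is used essentially; and the vacuous branch is where the ``elimination of vacuous quantification'' clause is indispensable --- witness $\exists x_1 \exists x_2 \exists x_3\, P(x_4)$, which meets the hypothesis but is not in $\FOt$ until its quantifiers are stripped.

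The main obstacle, such as it is, is purely one of formulation: getting the induction hypothesis right --- in particular carrying along the injection $\iota$ as a parameter --- so that the variable renamings are simultaneously capture-free and mutually consistent across Boolean combinations. Once that is in place, every case above is a short verification, and I anticipate no deeper difficulty.
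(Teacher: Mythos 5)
Your proposal is correct and follows essentially the same route as the paper: a structural induction in which each sub-formula is rewritten, by renaming, so that it uses only two variables, with the two-free-variable hypothesis invoked exactly at the non-vacuous quantifier step. The only differences are bookkeeping ones --- you carry an explicit injection $\iota\colon \mathrm{fv}(\psi)\to\set{x_1,x_2}$ and handle vacuous quantifiers inside the induction, whereas the paper eliminates vacuous quantification up front and renames only bound variables relative to an implicit two-element set $\set{u,v}$ containing the free variables.
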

\begin{proof}
    Suppose $\varphi$ satisfies the conditions of the lemma, and all vacuous quantification has been eliminated. 
    We show by structural induction that every sub-formula $\psi$ of $\varphi$ may be transformed, by 
    renaming bound variables, into a formula $\psi'$ in which only two variables occur (free or bound); the lemma then follows.
    The base case, where $\psi$ is atomic, is immediate, as is the case $\psi \coloneqq \lnot \chi$. If  $\psi \coloneqq \chi_1 \land \chi_2$, then
    the free variables of both $\chi_1$ and $\chi_2$ are by assumption contained some set $\set{u,v}$. By inductive hypothesis, let 
    $\chi'_1$ and $\chi'_2$ be \AF-formulas obtained by renaming bound variables in $\chi_1$ and $\chi_2$, with
    $\chi'_1$ and $\chi'_2$ having no variables (free or bound) other than $u$ and $v$. Then we may set
    $\psi'\coloneqq \chi'_1 \wedge \chi'_2$.
    Finally, let $\psi\coloneqq \exists u\, \chi$, where $u$ appears free in $\chi$. By inductive hypothesis, let $\chi$ be obtained by renaming bound variables in $\chi$, with $\chi'$ having no variables (free or bound) other than $u$ and $v$. Then we may set $\psi'\coloneqq \exists u\, \chi'$.
\end{proof}

\newtheorem*{RestateLemma_fo2-af-bin-sig}{Lemma~\ref{lemma:fo2-and-af-over-binary-sig-are-the-same}}
\begin{RestateLemma_fo2-af-bin-sig}    
Every $\FO^2$-formula is logically equivalent to an~$\AF$-formula. 
The converse holds for $\AF$-formulas featuring predicates of arity at most two.
\end{RestateLemma_fo2-af-bin-sig}
\begin{proof}
For this proof we innocently augment the syntax of $\AF$ and $\FO^2$ with what we call \textit{units}.
Suppose $\psi$ is some formula. We call $U(\psi)$ \textit{the unit $\psi$}, where $\psi$ itself may contain boolean combinations of units and formulas.
Semantically, we specify units to be inert; i.e. $\fA \models U(\psi) \iff \fA \models \psi$.
Clearly, every formula from a language with units is logically equivalent to a formula without them.
In the sequel, we say that a sub-formula $\theta$ of $\psi$ is contained within a unit if there is a sub-formula $U(\eta)$ of $\psi$ such that $\theta$ is contained within $\eta$.

Now, suppose $\psi$ is a formula every quantified sub-formula of which is contained within a unit. (This includes the case where
$\phi$ is quantifier free.)
We define a procedure for obtaining conjunctive normal form as follows.
(i) Rewrite every $\alpha \leftrightarrow \beta$ that is not contained within a unit to $(\alpha \to \beta) \wedge (\beta \to \alpha)$
and each $(\alpha \to \beta)$ that is not contained within a unit to $(\lnot \alpha \vee \beta)$,
(ii) for each $\lnot \alpha$ such that $\alpha$ is neither a unit nor is contained within a unit, move the negation inwards whilst also removing double negations,
(iii) assuming $\alpha \vee (\beta \wedge \gamma)$ is not contained within a unit, distribute the disjunction inwards over the conjunction.
Disjunctive normal form is defined analogously.

Let $\phi$ be any formula of $\FO^2$ in variables $u, v$ without the unit operator. 
We will preprocess $\phi$ in such a way that the resulting formula $\phi'$ is logically equivalent to $\phi$
and so that for each $y, z \in \{ u, v \}$ with $y \neq z$ and each subformula $\psi_1 \coloneqq \mathsf{Q}_1 y \xi_1$ we have that
$\psi_3 \coloneqq \mathsf{Q}_3 y \xi_3$ appears in $\xi_1$
only if there is some $\psi_2 \coloneqq \mathsf{Q}_2 z \xi_2$ such that $\psi_3$ is a subformula of $\xi_2$.
We call formulas satisfying this condition \textit{properly sequenced}.

To show that every $\FO^2$ is logically equivalent to a properly sequenced $\FO^2$ formula, consider any subformula $\psi \coloneqq \forall y \chi$ of $\phi$
such that $\psi$ is not contained within a unit and $\chi$ features quantifiers only if they are contained within a unit.
Assume, without loss of generality, that $\chi$ is in conjunctive normal form as described above.
We move the universal quantifier inwards and view the $i$-th conjunct of $\chi$ as
$\theta_i \coloneqq \forall y (\alpha(y, z) \lor \beta(y) \lor \gamma(z))$.
Notice that the aforementioned formula is logically equivalent to
$\theta'_i \coloneqq U(\forall y (\alpha(y, z) \lor \beta(y))) \lor \gamma(z)$ as $y$ does not appear free in $\gamma$.
Each $\theta_i$ is now properly sequenced, as each subformula of $\theta_i$ is also properly sequenced.
Define $\psi' \coloneqq \bigwedge_{i} \theta'_i$ and let $\varphi'$ be $\phi'$ but with $\psi$ replaced with $\psi'$.
The case for when $\psi \coloneqq \exists y \chi$ is symmetric (the major difference begin that we turn $\chi$ to disjunctive normal form).
Now, set $\varphi \coloneqq \varphi'$ and run the procedure again until the resulting formula is contained within a unit.

Let $\phi$ be a properly sequenced $\FO^2$ formula (without the unit operator).
We will again assume that $y, z \in \{ u, v \}$ with $y \neq z$.
Now, for all $i \in \N$ denote by $\psi(x_i, x_{i+1})$ the result of substituting all free occurrences of $y$ and $z$
by $x_i$ and $x_{i+1}$ respectively. We claim that for each formula $\phi$ in $\FO^2$ there is logically equivalent formula in $\AF^{[i+1]}$.
Without loss of generality, assume that $\phi$ is properly sequenced.
We proceed via structural induction. The base case is immediate as every atomic formula with two variables is adjacent.
For the interesting cases, let $\psi \coloneqq \chi_1 \land \chi_2$ be an $\FO^2$ formula. By I.H. both $\chi_1$ and $\chi_2$ have logically equivalent counterparts
$\chi'_1(x_i, x_{i+1})$ and $\chi'_2(x_j, x_{j+1})$ in $\AF^{[i+1]}$ and $\AF^{[j+1]}$ respectively. Assume, without loss of generality, that $i \geq j$.
Thus, set $k = i - j$ and let $\chi''_2$ be the result of incrementing each variable index in $\chi'_2$ by $k$. Clearly, $\chi''_2$ is logically equivalent to $\chi'_2$.
Moreover, $\chi''_2(x_i, x_{i+1})$ is in $\AF^{[i+1]}$. Then, the formula $\psi' \coloneqq \chi'_1(x_i, x_{i+1}) \land \chi''_2(x_i, x_{i+1})$ is in $\AF^{[i+1]}$ and logically equivalent
to $\psi$ as required.
Lastly, take any $\FO^2$ formula $\psi \coloneqq \exists y \, \chi$. By proper sequencing we have that each maximal sub-formula of the form $\theta \coloneqq \exists w \, \eta$ in $\chi$ has $w \neq y$;
we can thus not worry about immediate requantifications of variables.
By I.H. $\chi$ is logically equivalent to an $\AF^{[i+1]}$ formula $\chi'(x_i, x_{i+1})$. %(and each $\eta$ is equivalent to a formula $\eta(x_{i+1}, x_{i+2})$).
We thus define $\psi' \coloneqq \exists x_{i+1} \chi'(x_i, x_{i+1})$ which is in $\AF^{[i]}$ as required.

For the converse direction take $\varphi$ in $\AF$ over a signature of unary and binary predicates.
By Lemma~\ref{lemma:mini-lemma-for-FO-with-two-free-vars},
we need only show that $\varphi$ is logically equivalent to a $\AF$ formula 
containing no sub-formula with more than 2 free variables.
Now let $\psi \coloneqq \forall x_{n+1} \chi$ be a sub-formula of $\varphi$, where every occurrence of a quantifier in $\chi$ is within a unit. (This includes
the case where $\chi$ is quantifier-free.)
Convert $\chi$ to conjunctive normal form as described above. Each conjunct of $\chi$ 
is a disjunction of adjacent literals featuring a predicate of arity at most 2. By separating out those literals featuring 
the variable $x_{n+1}$, we may write such a conjunct in the form
$\bigvee_{j} \alpha_j \lor \bigvee_{k} \beta_k$, where
the variables of each $\alpha_j$ are among $x_1, \dots, x_{n}$, and the variables of each
$\beta_k$ are among $x_n, x_{n+1}$ and include $x_{n+1}$.
Moving the universal quantification into the conjunction, we obtain conjuncts of the form $\forall x_{n+1} (\bigvee_{j} \alpha_j \lor \bigvee_{k} \beta_k(x_{n}, x_{n+1}))$; and
since no $\alpha_j$ contains the variable $x_{n+1}$, we move the quantifier inwards once more: $\bigvee_{j} \alpha_j \lor \forall x_{n+1} \bigvee_{k} \beta_k(x_{n}, x_{n+1})$.
Writing $\forall x_{n+1} \bigvee_{k} \beta_k(x_{n}, x_{n+1})$ as $\theta(x_n)$, we conclude that no sub-formula of $\theta(x_n)$ contains more than two variables.
Now, replace $\theta(x_n)$ with the formula $U(\theta(x_n))$ and repeat the procedure.
The case for when $\psi \coloneqq \exists x_{n+1} \chi$ is symmetric (the major difference begin that we turn $\chi$ to disjunctive normal form).

After the procedure described above has run its course, we replace every $U(\psi)$ with $\psi$.
We will thus be left with $\varphi'$ (that is logically equivalent to $\varphi$) which has no sub-formula with more than two free variables.
By Lemma \ref{lemma:mini-lemma-for-FO-with-two-free-vars} we have that $\varphi'$ is logically equivalent to a $\FO^2$ formula which completes the proof. 
\end{proof}

\newtheorem*{RestateLemma_simpleComb}{Lemma~\ref{lma:simpleComb}}
\begin{RestateLemma_simpleComb}
For any integer $k>0$ there is a set $J$ with $|J| = (k^2 + k+1)^{k+1}$ 
and a function $g\colon J^k \rightarrow J$ such that,
for any tuple $\bar{t} \in J^k$ consisting of the elements $t_1, \dots, t_k$ in some order:
{\em (i)} $g(\bar{t})$ is not in $\bar{t}$;
{\em (ii)} if $\bar{t}' \in J^k$ consists of the elements $\set{t_2, \dots, t_k, g(\bar{t})}$ in some order, then
$g(\bar{t}')$  is not in $\bar{t}$ either.
\end{RestateLemma_simpleComb}
\begin{proof}
Take $z = k^2{+}k{+}1$ and $J = [1,z]^{k{+}1}$. 
Thus, $|J|$~is~as required.
Writing any element of $J$ as the word $i\bar{s}$, where $i \in [1,z]$ and $\bar{s} \in [1,z]^k$,
let $g$ be defined by $g(i_1\bar{s}_1, \dots, i_k\bar{s}_k) = i_0 i_1 \cdots i_k$, where $i_0$ is the smallest positive integer
not in the set $S = \set{i_1, \dots , i_k} \cup \bar{s}_1 \cup \cdots \cup \bar{s}_k$. (For
brevity, we are here identifying words over the integers with the sets of their members.) Note that $i_0 \in [1,z]$ since $|S| <z$. 
Now let some tuple $\bar{t} \in J^k$ be given,
consisting of words $t_1, \dots, t_k$ in some order, and write $t_h = i_h \bar{s}_h$ (for all $1 \leq h \leq k$), where $i_h \in [1,z]$ and $\bar{s}_h \in [1,z]^k$.
Thus, $t = g(\bar{t})$ is a word of the form $i_0\bar{s}$, where $\bar{s}$ consists of $i_1, \dots, i_{k}$ in some order, 
and $i_0$ does not occur anywhere in~$\bar{t}$.
Condition (i) is then immediate because of the choice of $i_0$. For condition (ii), we observe that 
$t' = g(\bar{t}')$ is a word of the form $i' \bar{s}'$, where $\bar{s}'$ consists of $i_0, i_2, \dots, i_{k}$ in some order, and $i'$ does not occur
 in any of the words in $\bar{t}'$. By the choice of $i'$, it is immediate that $t' \not \in \set{t_2, \dots, t_k}$. But the value $i_1$ (the first letter of $t_1$) 
occurs in $g(\bar{t})$ (which belongs to the tuple $\bar{t}'$), whence $i' \neq i_1$. It follows that $t' \neq t_1$, whence $t'$ is not in $\bar{t}$, as required.
\end{proof}

%!TEX root = ../main.tex

\section{Appendix for Section~\ref{section:AF-upper-bounds}}

\subsection{Normal Form}
\newtheorem*{RestateTheorem_anf}{Lemma~\ref{lma:anf}}
\begin{RestateTheorem_anf}
	Let $\phi$ be a sentence of \AFv{\ell+1}, where $\ell \geq 2$. We can compute, in polynomial time, an \AFv{\ell+1}-formula 
	$\psi$ satisfiable over the same domains as $\phi$, of the form
	\begin{equation}
	\bigwedge_{i \in I} \forall \bx_{\ell} \exists x_{\ell+1}\, \gamma_i \wedge 
	\forall \bx_{\ell+1}\,
	\delta,
	\label{eq:anfRep}
	\end{equation}  
	where $I$ is a finite index set, and
	the formulas $\gamma_i$ and $\delta$ are quantifier-free.
\end{RestateTheorem_anf}

\begin{proof}
If the sentence $\phi$ is quantifier-free, then it is a formula of the propositional calculus, and the result is easily obtained by adding vacuous quantification. 
Otherwise, write $\phi_0 = \phi$, and  let $\theta\coloneqq Q x_k\, \chi$
be a subformula of $\phi$, where $Q \in \set{\forall, \exists}$, such that $\chi$ is quantifier-free. 
Writing $\bar{\exists}= \forall$ and $\bar{\forall} = \exists$,
let $p$ be a new predicate 
of arity $k$, let $\phi_1$ be the result of replacing $\theta$ in $\phi_0$ by the atom $p(\bx_k)$, 
and let  $\psi_1$ be the formula 
\begin{equation*}
\forall \bx_k Q x_{k+1}\big(p(\bx_{k}) \rightarrow \chi\big) \wedge
\forall \bx_k \bar{Q} x_{k+1}\big(\chi \rightarrow p(\bx_{k})\big).
\end{equation*}
It is immediate that $\phi_1 \wedge \psi_1 \models \phi_0$. Conversely, if $\fA \models \phi_0$, then we may expand $\fA$ 
to a model $\fA'$ of $\phi_1 \wedge \psi_1$ by taking $p^{\fA'}$ to be the set of $k$-tuples $\bar{a}$ such that $\fA \models \theta[\bar{a}]
$. Evidently, $\phi_1$ is a sentence of \AFv{\ell+1}. 
Processing $\phi_1$ in the same way, and proceeding similarly, we obtain a set of formulas $\phi_2, \dots, \phi_m$ and $\psi_2, \dots, \psi_m$, with  $\phi_m$ quantifier-free and
$\phi_0$ satisfiable over the same domains
as $\psi_1 \wedge \cdots \wedge \psi_m \wedge \phi_m$. Since $\phi_m$ is a sentence, it is a formula of the
propositional calculus. By moving $\phi_m$ inside any quantified formulas, re-indexing variables and re-ordering conjuncts, we obtain a formula
$\psi$ of the form~\eqref{eq:anfRep}.
%Until $|\sig(\psi)| < |I|$, we may replace each conjunct $\forall \bx_{\ell} \exists x_{\ell+1}\, \gamma_i$ ($i \in I$) by  
%$\forall \bx_{\ell} \exists x_{\ell+1}\, (\gamma_i \wedge p_i(\bx_\ell x_{\ell{+}1}))$ with $p_i$ a fresh $(\ell{+}1)$-ary predicate,
%without affecting the domains over which $\phi$ is satisfiable.
%Moreover, $\sig(\psi)$ contains no predicates of arity 0 (proposition letters), since their truth-values may be guessed.
\end{proof}

\subsection{Complexity of satisfiability for $\AFv{3}$}
\newtheorem*{RestateTheorem_simpleConnector}{Lemma~\ref{lma:simpleConnector}}
\begin{RestateTheorem_simpleConnector}
	If $\phi$ is a normal-form $\AFv{3}$-formula, $\fA \models \phi$ and $a \in A$, then  
	$\con^\fA[a]$ is compatible with $\phi$.
\end{RestateTheorem_simpleConnector}
\begin{proof}
	Suppose $b \in A$, and let $\con^\fA[b] = \omega$.
	For L$\exists_1$, consider $i \in I$. Since $\fA \models \phi$, there exists $c \in A$ such that
	$\fA \models \gamma_i[b,b,c]$.  Then $\eta = \tp^\fA[b,c]$ is as required. 
	For L$\exists_2$, suppose $\zeta^{-1} \in \omega$; thus,
	there exists $a \in A$ such that $\fA \models \zeta[a,b]$. 
	Now consider~$i \in I$. Since $\fA \models \phi$, there exists $c \in A$ such that
	$\fA \models \gamma_i[a,b,c]$. Then $\eta = \tp^\fA[b,c]$ is as required. The conditions L$\forall_1$ and
	L$\forall_2$ are established similarly. 
\end{proof}

\newtheorem*{RestateTheorem_simpleConnectorSet}{Lemma~\ref{lma:simpleConnectorSet}}
\begin{RestateTheorem_simpleConnectorSet}
	Let $\fA$ be a structure. Then $\Omega = \set{\con^\fA[a] \mid a \in A}$ is coherent.
\end{RestateTheorem_simpleConnectorSet}
\begin{proof}
	For G$\exists$, suppose $\omega \in \Omega$, and let $a \in A$ be such that $\con^\fA[a] = \omega$. 
	If $\zeta \in \omega$, there exists $b \in A$ such that $\tp^\fA[a,b] = \zeta$. Then $\omega' = \con^\fA[b]$ is
	as required. G$\forall$ is similar. 
\end{proof}

\newtheorem*{RestateTheorem_smallConnector}{Lemma~\ref{lma:smallConnector}}
\begin{RestateTheorem_smallConnector}
	Any satisfiable normal-form $\AFv{3}$-formula has a certificate $\Omega$ such that both $|\Omega|$ and $|\bigcup \Omega|$ are  $2^{O(\sizeof{\phi})}$.
\end{RestateTheorem_smallConnector}
\begin{proof}
	Suppose $\fA \models \phi$. Say that a word $\bar{a} \in A^m$ ($m \geq 1$) is {\em sensitive for $\phi$} if $\bar{a} = \bar{b}^g$ 
	for some word $\bar{b} \in A^k$
	with $k \leq 3$ and some $g \in \bA^m_k$ such that
	$\bar{x}_k^g$ occurs as the argument sequence of an atom in $\phi$. 
	Now define $\fA'$ by setting, for any $p$ of arity $m$ interpreted by $\fA$,
	$p^{\fA'} = \set{\bar{a} \mid \text{$\bar{a} \in p^\fA$ and $\bar{a}$ is sensitive}}$. In other words, all tuples except substitution instances of argument sequences occurring in $\phi$ are removed from the extensions of all predicates (of the appropriate arity). Clearly, $\fA' \models \phi$, since we
	have not modified any tuples which could make a difference to the evaluation of $\phi$. On the other hand, the number of 2-types
	realized in $\fA'$ is $2^{O(\sizeof{\phi})}$. Let 
	$\Omega' = \set{\con^{\fA'}[a] \mid a \in A}$. By Lemmas~\ref{lma:simpleConnector} and~\ref{lma:simpleConnectorSet}, $\Omega'$ is a certificate
	for $\phi$. We shall construct a certificate $\Omega \subseteq \Omega'$ satisfying the size bound of the lemma.
	Pick any $\omega \in \Omega'$, initialize $C\coloneqq \set{\omega}$, and $T\coloneqq \set{\zeta^{-1} \mid \zeta \in \omega}$. 
	We shall add connector-types to the set $C$ and 2-types to the set $T$ (all of them realized in $\fA'$), maintaining the invariant $T
	= \set{\zeta^{-1} \mid \text{$\zeta \in \omega$ for some $\omega \in C$}}$.
	We call a connector-type $\omega$ in
	$C$ {\em satisfied} if $\omega \subseteq T$. 
	Now execute the following procedure until $C$ contains no unsatisfied 
	connector-types.
	Pick some unsatisfied $\omega \in C$ and some $\zeta \in \omega \setminus T$. 
	By G$\exists$, there exists a connector-type $\omega' \in \Omega'$ such that $\zeta^{-1} \in \omega$. Set $C \coloneqq C \cup \set{\omega'}$ and $T \coloneqq T \cup \set{\zeta^{-1} \mid \zeta \in \omega'}$. These assignments maintain the invariant on $C$ and $T$. This process terminates after, say, $2^{O(\sizeof{\phi})}$ steps, since $|T|$ increases by at least one at each step, and $|C|$ by exactly 1. On termination, $C$ is globally coherent, and $|C| \leq 2^{O(\sizeof{\phi})}$. Set $\Omega$ to be the final value of~$C$.
\end{proof}

% \subsection{Verification of model size for $\AF^\ell$}

% \newtheorem*{RestateTheorem_upperBoundTh}{Theorem~\ref{theo:upperBound}}
% \begin{RestateTheorem_upperBoundTh}
% 	If $\phi$ is a satisfiable \AFv{\ell+1}-formula, with $\ell \geq 2$, then 
% 	$\phi$ is satisfied in a structure of size at most $\ft(\ell{-}1, O(\sizeof{\phi}))$.
% 	Hence the satisfiability problem for \AFv{\ell} is in $(\ell{-}2)$-\NExpTime{} for all $\ell \geq 3$, and the adjacent fragment is $\Tower$-complete.
% \end{RestateTheorem_upperBoundTh}

\end{document}